\documentclass[a4paper,onecolumn]{IEEEtran}

\usepackage[utf8]{inputenc}
\usepackage[T1]{fontenc}

\usepackage{amsfonts}
\usepackage{mathrsfs}
\usepackage{amsmath,amsthm,amssymb}
\usepackage[usenames,dvipsnames]{xcolor}
\usepackage[colorlinks=true,urlcolor=Mahogany,linkcolor=Mahogany,citecolor=Mahogany,plainpages=false,pdfpagelabels]{hyperref}

\usepackage[backend=biber,style=ieee,maxcitenames=4,maxalphanames=100,maxbibnames=100,isbn=false]{biblatex}

\usepackage{tikz}
\usepackage{pgfplots}
\usetikzlibrary{fit,calc}

\theoremstyle{plain}
\newtheorem{theorem}{Theorem}[section]
\newtheorem{lemma}[theorem]{Lemma}

\newtheorem{corollary}[theorem]{Corollary}
\newtheorem{proposition}[theorem]{Proposition}

\theoremstyle{definition}
\newtheorem{definition}[theorem]{Definition}

\newcommand*{\cI}{\mathcal{I}}

\newcommand*{\cM}{\mathcal{M}}

\newcommand*{\cT}{\mathcal{T}}

\newcommand*{\cX}{\mathcal{X}}
\newcommand*{\cY}{\mathcal{Y}}
\newcommand*{\cZ}{\mathcal{Z}}

\newcommand*{\eps}{\varepsilon}

\newcommand*{\id}{\mathrm{id}}
\newcommand*{\ident}{\mathrm{id}}
\newcommand*{\tr}{\mathrm{tr}}

\newcommand{\sket}[1]{{\ensuremath{\lvert#1\rangle}}}
\newcommand{\lket}[1]{{\ensuremath{\left\lvert#1\right\rangle}}}
\newcommand{\ket}[1]{\mathchoice{\lket{#1}}{\sket{#1}}{\sket{#1}}{\sket{#1}}}

\newcommand{\sbra}[1]{{\ensuremath{\langle#1\rvert}}}
\newcommand{\lbra}[1]{{\ensuremath{\left\langle#1\right\rvert}}}
\newcommand{\bra}[1]{\mathchoice{\lbra{#1}}{\sbra{#1}}{\sbra{#1}}{\sbra{#1}}}

\newcommand{\sbraket}[2]{{\ensuremath{\langle#1\rvert#2\rangle}}}
\newcommand{\lbraket}[2]{{\ensuremath{\left\langle#1\!\left\rvert\vphantom{#1}#2\right.\!\right\rangle}}}
\newcommand{\braket}[2]{\mathchoice{\lbraket{#1}{#2}}{\sbraket{#1}{#2}}{\sbraket{#1}{#2}}{\sbraket{#1}{#2}}}

\newcommand{\sketbra}[2]{{\ensuremath{\lvert #1\rangle\langle #2\rvert}}}
\newcommand{\lketbra}[2]{{\ensuremath{\left\lvert #1\middle\rangle\middle\langle #2\right\rvert}}}
\newcommand{\ketbra}[2]{\mathchoice{\lketbra{#1}{#2}}{\sketbra{#1}{#2}}{\sketbra{#1}{#2}}{\sketbra{#1}{#2}}}


\newcommand{\proj}[1]{\ketbra{#1}{#1}}

\newcommand{\Supp}{\mathrm{supp}}

\newcommand{\mbE}{\mathbb{E}}
\newcommand{\mbP}{\mathbb{P}}
\newcommand{\mbR}{\mathbb{R}}

\newcommand*{\freq}[1]{\mathsf{freq}(#1)}
\newcommand*{\Var}[1]{\mathsf{Var}(#1)}

\newcommand*{\Max}[1]{\mathsf{Max}(#1)}
\newcommand*{\MinSigma}[1]{\mathsf{Min}_{\Sigma}(#1)}
\newcommand*{\Min}[1]{\mathsf{Min}(#1)}
\newcommand*{\XOR}{\;\mathtt{XOR}\;}

\DeclareMathOperator{\Cov}{Cov}
\DeclareMathOperator{\Variance}{Var}

\tikzstyle{porte} = [fill=blue!20, draw]
\tikzstyle{etiquette} = [font=\scriptsize]

\addbibresource{big.bib}
\addbibresource{big2.bib}

\begin{document}

\title{Entropy accumulation with improved \\ second-order term}

\date{\today}

\author{Frédéric~Dupuis
        and~Omar~Fawzi
        \thanks{F. Dupuis is with the Université de Lorraine, CNRS, Inria, LORIA, F-54000 Nancy, France.}
        \thanks{O. Fawzi is with the Université de Lyon, ENS de Lyon, CNRS, UCBL, LIP, F-69342, Lyon Cedex 07, France.}
    \thanks{Manuscript received XXXXX.}}

\markboth{IEEE Transactions on Information Theory}
{Dupuis \MakeLowercase{\textit{et al.}}: Entropy accumulation with improved second-order}

\maketitle

\begin{abstract}
    The entropy accumulation theorem~\cite{dfr16} states that the smooth min-entropy of an $n$-partite system $A = (A_1, \ldots, A_n)$ is lower-bounded by the sum of the von Neumann entropies of suitably chosen conditional states up to corrections that are sublinear in $n$. This theorem is particularly suited to proving the security of quantum cryptographic protocols, and in particular so-called device-independent protocols for randomness expansion and key distribution, where the devices can be built and preprogrammed by a malicious supplier~\cite{arv16}. However, while the bounds provided by this theorem are optimal in the first order, the second-order term is bounded more crudely, in such a way that the bounds deteriorate significantly when the theorem is applied directly to protocols where parameter estimation is done by sampling a small fraction of the positions, as is done in most QKD protocols. The objective of this paper is to improve this second-order sublinear term and remedy this problem. On the way, we prove various bounds on the divergence variance, which might be of independent interest.
  \end{abstract}

\begin{IEEEkeywords}
    Quantum information theory, Cryptography
\end{IEEEkeywords}


\section{Introduction}\label{sec:introduction}

There are many protocols in quantum cryptography, such as quantum key distribution, that work by generating randomness. Such protocols usually proceed as follows: we perform a basic subprotocol $n$ times (for example, sending a photon in a random polarization from Alice to Bob), we then gather statistics about the protocol run (for example, we compute the error rate from a randomly chosen sample of the rounds), and we then conclude that the final state contains a certain amount of randomness, which can then be processed further. Mathematical tools that can quantify the amount of randomness produced by quantum processes therefore constitute the centerpiece of many security proofs in quantum cryptography. The entropy accumulation theorem~\cite{dfr16} provides such a powerful tool that applies to a very general class of protocols, including device-independent protocols.

Informally, the main result of \cite{dfr16} is the following. Suppose we have an $n$-step quantum process like the one depicted in Figure~\ref{fig:eat}, in which we start with a bipartite state $\rho_{R_0 E}$ and the $R_0$ share of the state undergoes an $n$ step process specified by the quantum channels $\mathcal{M}_1$ to $\mathcal{M}_n$. At step $i$ of the process, two quantum systems $A_i$ and $B_i$ are produced, from which one can extract a \emph{classical} random variable $X_i$. The goal is then to bound the amount of randomness present in $A_1^n$ given $B_1^n$, conditioned on the string $X_1^n$ being in a certain set $\Omega$. The $X_i$'s are meant to represent the data we do statistics on, for example $X_i$ might tell us that there is an error at position $i$, and we want to condition on the observed error rate being below some threshold. Stated informally, the statement proven in \cite{dfr16} is then
\begin{equation}
    H_{\min}^{\varepsilon}(A_1^n|B_1^n E, X_1^n \in \Omega)_{\rho} \geqslant n \left( \inf_{q \in \Omega} f(q) \right) - \sqrt{n} c.
    \label{eqn:intro-eat-statement}
\end{equation}
Here, the smooth min-entropy $H_{\min}^{\varepsilon}$ represents the amount of extractable randomness (see Definition~\ref{def:smooth-min-max-entropy}), the \emph{tradeoff function} $f(q)$ quantifies the worst-case amount of entropy produced by one step of the process for an input state that is consistent with observing the statistics $q$, and $c$ is a number that depends on $\varepsilon$, the event $\Omega$ and the tradeoff function $f$ but not on $n$. One would then apply this theorem by replacing the $\mathcal{M}_i$'s by one step of the cryptographic protocol to obtain the desired bound. This is done, for example, in~\cite{arv16,ADFRV18} for device-independent randomness expansion and quantum key distribution.

\begin{figure}
    \begin{center}
    \begin{tikzpicture}[thick]
        \draw
            (0, 0) node[porte, minimum height=.7cm, minimum width=.7cm] (m1) {$\mathcal{M}_1$}
            ++(2, 0) node[porte, minimum height=.7cm, minimum width=.7cm] (m2) {$\mathcal{M}_2$}
            ++(2, 0) node (dotdotdot) {$\cdots$}
            ++(2, 0) node[porte, minimum height=.7cm, minimum width=.7cm] (mn) {$\mathcal{M}_n$}
            (m1) ++(-.5, -1.2) node[etiquette] (a1) {$A_1$}
            (m1) ++(.5, -1.2) node[etiquette] (b1) {$B_1$}
            (m2) ++(-.5, -1.2) node[etiquette] (a2) {$A_2$}
            (m2) ++(.5, -1.2) node[etiquette] (b2) {$B_2$}
            (mn) ++(-.5, -1.2) node[etiquette] (an) {$A_n$}
            (mn) ++(.5, -1.2) node[etiquette] (bn) {$B_n$}
            ;
       \draw 
           (m1) ++(-1.5, 0) edge[->] node[midway, above, etiquette] {$R_0$} (m1)
           (m1) edge[->] node[midway, above, etiquette] {$R_1$} (m2)
           (m2) edge[->] node[midway, above, etiquette] {$R_2$} (dotdotdot)
           (dotdotdot) edge[->] node[midway, above, etiquette] {$R_{n-1}$} (mn)
            (m1) edge[->] (a1)
            (m1) edge[->] (b1)
            (m2) edge[->] (a2)
            (m2) edge[->] (b2)
            (mn) edge[->] (an)
            (mn) edge[->] (bn)
            ;
        \draw
        (m1) to ++(-1.5, 0) to  ++(-.5, .5) node[left] {$\rho^0_{R_0 E}$} to ++(.5, .5) coordinate (topright) to node[midway, above, etiquette] {$E$} ([xshift=.5cm] topright -| mn.east) coordinate (rightwall)
            ;
        \draw[->]
            (mn) to (mn.center -| rightwall) node[above, etiquette] {$R_n$}
            ;
        \draw
            ($.5*(a1)+.5*(b1)$) ++(0, -.8) node[etiquette] (x1) {$X_1$}
            ($.5*(a2)+.5*(b2)$) ++(0, -.8) node[etiquette] (x2) {$X_2$}
            ($.5*(an)+.5*(bn)$) ++(0, -.8) node[etiquette] (xn) {$X_n$}
            ;
        \draw[->] (a1) to (x1);
        \draw[->] (a2) to (x2);
        \draw[->] (an) to (xn);
        \draw[->] (b1) to (x1);
        \draw[->] (b2) to (x2);
        \draw[->] (bn) to (xn);
    \end{tikzpicture}
    \end{center}
    \caption{Illustration of the type of process that the entropy accumulation theorem applies to.}
    \label{fig:eat}
\end{figure}

While this method yields optimal bounds in the first order, the second-order term which scales as $\sqrt{n}$ is bounded more crudely, and for some applications, this term can become dominant very quickly. This is particularly the case in applications which estimate the amount of entropy produced by testing a small fraction of the positions, which includes a large number of protocols of interest. The reason for this is that the value of $c$ in Equation \eqref{eqn:intro-eat-statement} is proportional to the gradient of $f$. Now, suppose that we have a protocol where we are testing positions with probability $O(1/n)$; in general this will make the gradient of $f$ proportional to $n$\footnote{Without getting into details, the tradeoff function $f$ often takes the form $f(p) = g(\frac{p(1)}{\gamma})$, where $p$ is a distribution on $\{0,1\}$ and $\gamma$ is the testing probability and $g$ is a fixed affine function. As such if $\gamma = O(\frac{1}{n})$, the gradient of $f$ is $\Omega(n)$. We refer the reader to~\cite{arv16,ADFRV18} or Section~\ref{sec:DIRE} of this paper for more details on this.} and therefore the second-order term will become $\Omega(n^{3/2})$ and overwhelm the first-order term. This is worse than we would expect: when we perform the analysis using conventional tools such as Chernoff-Hoeffding bounds in cases that are amenable to it, we obtain a much better scaling behavior, and in particular we still expect a non-trivial bound when the testing rate is $O(1/n)$. As a further indication that the second-order term can be improved, we also note that in~\cite[Appendix B]{arv16}, they resort to applying the entropy accumulation theorem to blocks rather than single rounds in order to obtain a good dependence on the testing rate.

The goal of this paper is therefore to improve the second-order term in \eqref{eqn:intro-eat-statement}. Analyzing second-order correction terms is already commonplace in information theory ever since the 60s, with the work of Volker Strassen~\cite{s62} who gave second-order bounds for hypothesis testing and channel coding. This topic has also seen a revival more recently~\cite{h08,ppv10,polyanskiy-thesis}. Quantum versions of such bounds have been proven as well since then; for example, Li~\cite{l14-3} and Tomamichel and Hayashi~\cite{th12} have shown a second-order expansion for quantum hypothesis testing, and \cite{th12} additionally gives second-order expansions for several other entropic quantities of interest. Other more recent developments can also be found in~\cite{tt14,tt15,leditzky-thesis,bg14,bdl16,dl15,dpr16}. 

Most of these results go one step further than we will in this paper, in that they pin down the $O(\sqrt{n})$ term \emph{exactly}, usually by employing some form of the Berry-Esseen theorem to a carefully designed classical random variable. Unfortunately, this approach seems to fail here, and we must resort to slightly weaker bounds that nevertheless give the right scaling behavior for protocols with infrequent sampling, and that are largely good enough in practice.

\paragraph{Paper organization: }In Section \ref{sec:preliminaries}, we give the notation used and some preliminary facts needed for the rest of the paper, including the Rényi entropy chain rule that powers the original entropy accumulation result in Section \ref{sec:chain}. Section \ref{sec:divergence-variance} then introduces the \emph{divergence variance} which governs the form of the second-order term, and discusses some of its properties. In Section \ref{sec:renyi-vs-von-neumann}, we present a new bound for the Rényi entropy in terms of the von Neumann entropy, and then apply it to the entropy accumulation theorem in Section \ref{sec:accumulation}, with specific bounds for the case of protocols with infrequent sampling in Section \ref{sec:infrequent-sampling}. We then compute finite-block-size bounds for the particular application of device-independent randomness expansion in Section \ref{sec:DIRE} and conclude with some open problems in Section \ref{sec:conclusion}.

\section{Preliminaries}\label{sec:preliminaries}

\subsection{Notation} \label{sec:notation}
In the table below, we summarize some of the notation used throughout the paper:
\begin{center}
    \begin{tabular}{|c|l|}
        \hline
        \emph{Symbol} & \multicolumn{1}{c|}{\emph{Definition}}\\
        \hline
        $A, B, C, \dots$ & Quantum systems, and their associated Hilbert spaces\\
        $\mathcal{L}(A,B)$ & Set of linear operators from $A$ to $B$\\
        $\mathcal{L}(A)$ & $\mathcal{L}(A,A)$\\
        $X_{AB}$ & Operator in $\mathcal{L}(A \otimes B)$\\
        $\mathcal{I}_{A}$ & Identity map from $\mathcal{L}(A)$ to itself\\        
        $\cM_{A \to B}$ & The subscript $A \to B$ is to indicate that $\cM$ is a linear map from $\mathcal{L}(A)$ to $\mathcal{L}(B)$ \\        
        $\mathrm{D}(A)$ & Set of normalized density operators on $A$\\
        $X_A \geqslant Y_A$ & $X_A - Y_A$ is positive semidefinite\\
        $A_i^j$ (with $j \geqslant i$) & Given $n$ systems $A_1,\dots,A_n$, this is a shorthand for $A_i,\dots,A_j$\\
        $A^n$ & Often used as shorthand for $A_1,\dots,A_n$\\
        $\log(x)$ & Logarithm of $x$ in base 2\\
        $\Variance(X)$ & Variance of the random variable $X$\\
        $\Cov(X,Y)$ & Covariance of the random variables $X$ and $Y$\\
        \hline
        $D_{\alpha}(\rho \| \sigma)$ & Sandwiched Rényi divergence (Definition~\ref{def:sandwiched-renyi-divergence})\\
        $D'_{\alpha}(\rho \| \sigma)$ & Petz Rényi divergence (Definition~\ref{def:petz-renyi-divergence})\\
        $H_{\alpha}(A|B)_{\rho}$ & $-D_{\alpha}(\rho_{AB} \| \ident_A \otimes \rho_B)$\\
        $H^{\uparrow}_{\alpha}(A|B)_{\rho}$ & $-\inf_{\sigma_B} D_{\alpha}(\rho_{AB} \| \ident_A \otimes \sigma_B)$\\
        $H'_{\alpha}(A|B)_{\rho}$ & $-D'_{\alpha}(\rho_{AB} \| \ident_A \otimes \rho_B)$\\
        $D_{\min}(\rho \| \sigma)$ & $D_{\frac{1}{2}}(\rho \| \sigma)$\\
        $D_{\max}(\rho \| \sigma)$ & $D_{\infty}(\rho \| \sigma)$\\
        $V(\cdot)$ & Various divergence variance measures; see Section~\ref{sec:divergence-variance}\\
        \hline
    \end{tabular}
\end{center}
   
\subsection{Entropic quantities}
The central mathematical tools used in this paper are entropic quantities, i.e.~various ways of quantifying the amount of uncertainty present in classical or quantum systems. In this section, we give definitions for the quantities that will play a role in our results.

\begin{definition}[Relative entropy]
    For any positive semidefinite operators $\rho$ and $\sigma$, the \emph{relative entropy} is defined as
    \[ D(\rho \| \sigma) = \begin{cases} \frac{1}{\tr[\rho]} \tr[\rho (\log \rho - \log \sigma)] & \text{ if } \Supp(\rho) \subseteq \Supp(\sigma)\\ \infty & \text{ otherwise} \end{cases}. \]
    \label{def:petz-divergence}
\end{definition}

\begin{definition}[von Neumann entropy]
    Let $\rho_{AB} \in \mathrm{D}(AB)$ be a bipartite density operator. Then, the \emph{conditional von Neumann entropy} is defined as
    \[ H(A|B)_{\rho} = -D(\rho_{AB} \| \ident_A \otimes \rho_B). \]
    \label{def:von-neumann-entropy}
\end{definition}

Our proofs heavily rely on two versions of the Rényi relative entropy: the one first introduced by Petz~\cite{p84}, and the ``sandwiched'' version introduced in~\cite{wwy13,mdsft13}. We define both of these here:

\begin{definition}[Sandwiched Rényi divergence]
    Let $\rho$ be a quantum state, let $\sigma$ be positive semidefinite, and let $\alpha \in [\frac{1}{2}, \infty]$. Then, the sandwiched Rényi divergence is defined as
    \begin{equation}\label{eqn:def-sandwiched-divergence-case1}
        D_{\alpha}(\rho \| \sigma) = \begin{cases} \frac{1}{\alpha-1} \log \tr\left[ \left( \sigma^{-\frac{\alpha'}{2}} \rho \sigma^{-\frac{\alpha'}{2}} \right)^{\alpha} \right] & \text{ if } \alpha < 1 \text{ or } \alpha > 1 \text{ and }\Supp(\rho) \subseteq \Supp(\sigma)\\ \log \inf\{ \lambda : \rho \leqslant \lambda \sigma \} & \text{ if } \alpha = \infty\\ D(\rho \| \sigma) & \text{ if } \alpha=1\\ \infty & \text{ otherwise,} \end{cases}
    \end{equation}
    where $\alpha' := \frac{\alpha - 1}{\alpha}$. Note $D_{\infty}$ is also referred to as $D_{\max}$ and $D_{\frac{1}{2}}$ as $D_{\min}$.
    \label{def:sandwiched-renyi-divergence}
\end{definition}

\begin{definition}[Petz Rényi divergence]
    Let $\rho$ be a quantum state, let $\sigma$ be positive semidefinite, and let $\alpha \in [0, 2]$. Then, the Petz Rényi divergence is defined as
    \begin{equation}\label{eqn:def-petz-divergence-case1}
        D'_{\alpha}(\rho \| \sigma) = \begin{cases} \frac{1}{\alpha-1} \log \tr\left[ \rho^{\alpha} \sigma^{1-\alpha} \right] & \text{ if } 0 < \alpha < 1 \text{ or } 1 < \alpha \leqslant 2 \text{ and } \Supp(\rho) \subseteq \Supp(\sigma)\\ -\log \tr[\Pi_{\Supp(\rho)} \sigma] & \text{ if } \alpha = 0\\ D(\rho \| \sigma) & \text{ if } \alpha = 1\\ \infty & \text{ otherwise,} \end{cases}
    \end{equation}
    where $\Pi_{\Supp(\rho)}$ is the projector on the support of $\rho$.
    \label{def:petz-renyi-divergence}
\end{definition}

These relative entropies can be used to define a conditional entropy:

\begin{definition}[Sandwiched Rényi conditional entropy] \label{def_sandwichedentropy}
    For any density operator $\rho_{A B}$ and for $\alpha \in [\frac{1}{2}, \infty]$ the \emph{sandwiched $\alpha$-R\'enyi entropy of $A$ conditioned on $B$} is defined as
  \begin{align*}
      H_\alpha(A|B)_{\rho} = -D_{\alpha}(\rho_{AB} \| \ident_A \otimes \rho_B).
  \end{align*}
  Note that we also refer to $H_{\infty}(A|B)_{\rho}$ as $H_{\min}(A|B)_{\rho | \rho}$.
\end{definition}

It turns out that there are multiple ways of defining conditional entropies from relative entropies. Another variant that will be needed in this work is the following:
\begin{definition} \label{def_sandwichedentropy_uparrow}
  For any density operator $\rho_{A B}$ and for $\alpha \in [\frac{1}{2}, 1) \cup (1, \infty]$, we define
  \begin{align*}
      H^{\uparrow}_\alpha(A|B)_{\rho} = -\inf_{\sigma_B} D_{\alpha}(\rho_{AB} \| \ident_A \otimes \sigma_B) 
  \end{align*}
  where the infimum is over all subnormalized density operators on $B$. Note that we also refer to $H_{\infty}^{\uparrow}(A|B)_{\rho}$ as $H_{\min}(A|B)_{\rho}$, called the \emph{min-entropy}, and to $H^{\uparrow}_{\frac{1}{2}}(A|B)_{\rho}$ as $H_{\max}(A|B)_{\rho}$, called the \emph{max-entropy}.
\end{definition}

Finally, in the case of the min- and max-entropy, we will also need ``smooth'' versions. These are versions of the min- and max-entropy where we compute the entropy for the best state within $\varepsilon$ of the actual state, where the distance is given by the purified distance. We begin by defining the purified distance~\cite{r02,r03,gln05,r06,TCR10,Tom12}:

\begin{definition}[Purified distance]
    Let $\rho$ and $\sigma$ be two subnormalized density operators. Then, the purified distance between $\rho$ and $\sigma$ is given by
    \[ P(\rho, \sigma) := \sqrt{1 - \left( \| \sqrt{\rho} \sqrt{\sigma} \|_1 + \sqrt{(1-\tr[\rho])(1-\tr[\sigma])} \right)^2}. \]
    \label{def:purified-distance}
\end{definition}
Note that this reduces to $P(\rho,\sigma) = \sqrt{1 - \| \sqrt{\rho} \sqrt{\sigma} \|_1^2}$ whenever either $\rho$ or $\sigma$ is normalized. We are now ready to define the smooth min- and max-entropy:
\begin{definition}\label{def:smooth-min-max-entropy}
  For any density operator $\rho_{A B}$ and for $\eps \in [0,1]$ the \emph{$\eps$-smooth min- and max-entropies of $A$ conditioned on $B$} are given by:
  \begin{align*}
      H_{\min}^\eps(A|B)_{\rho} & = \sup_{\tilde{\rho}: P(\rho,\tilde{\rho}) \leqslant\varepsilon} H_{\min}(A|B)_{\tilde{\rho}} \\
      H_{\max}^\eps(A|B)_{\rho} & = \inf_{\tilde{\rho}: P(\rho,\tilde{\rho}) \leqslant\varepsilon} H_{\max}(A|B)_{\tilde{\rho}}.
  \end{align*}
  respectively, where $\tilde{\rho}$ is any subnormalized density operator that is $\eps$-close to $\rho$ in terms of the purified distance~\cite{TCR10,Tom12}.
\end{definition}

\subsection{Chain rule for Rényi entropies} \label{sec:chain}
In~\cite{dfr16}, the central piece of the proof was a chain rule for Rényi entropies. As our proof largely follows the same steps, we reproduce the most relevant statement here for the reader's convenience. For the proofs, we refer the reader to~\cite{dfr16}. An important property of a tripartite state $\rho_{ABC}$ that we will be using throughout the paper is the Markov chain condition written $A \leftrightarrow B \leftrightarrow C$ and defined by $I(A : C | B)_{\rho} = 0$.

\begin{corollary}[Corollary 3.4 in \cite{dfr16}] 
\label{cor_conditionalmap}
Let $\rho^0_{R A_1 B_1}$ be a density operator on $R \otimes A_1 \otimes B_1$ and $\cM$ be a trace-preserving completely-positive map from $\mathcal{L}(R)$ to $\mathcal{L}(A_2 \otimes B_2)$. Assuming that $\rho_{A_1 B_1 A_2 B_2} = (\cM \otimes \cI_{A_1B_1})(\rho^0_{RA_1B_1})$ satisfies the Markov condition $A_1 \leftrightarrow B_1 \leftrightarrow B_2$, we have for $\alpha \in [\frac{1}{2},\infty)$
   \begin{align*} 
          \inf_{\omega} H_{\alpha}(A_2 | B_2 A_1 B_1)_{\cM(\omega)} 
  \leqslant H_\alpha(A_1 A_2 | B_1 B_2)_{\cM(\rho^0)} - H_{\alpha}(A_1 | B_1)_{\rho^0} \leqslant \sup_{\omega} H_{\alpha}(A_2 | B_2 A_1 B_1)_{\cM(\omega)} 
 \end{align*}
    where the supremum and infimum range over density operators $\omega_{R A_1 B_1}$ on $R \otimes A_1 \otimes B_1$. Moreover, if $\rho^0_{R A_1 B_1}$ is pure then we can optimise over pure states $\omega_{R A_1 B_1}$.
\end{corollary}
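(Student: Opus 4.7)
The plan is to recast the bipartite entropy difference as a difference of sandwiched Rényi divergences and then exploit the Markov condition $A_1 \leftrightarrow B_1 \leftrightarrow B_2$ to relate the two divergences via a common CPTP map. By the definition of the sandwiched Rényi conditional entropy, the quantity $H_\alpha(A_1 A_2|B_1 B_2)_{\cM(\rho^0)} - H_\alpha(A_1|B_1)_{\rho^0}$ equals
\begin{align*}
D_\alpha\bigl(\rho^0_{A_1 B_1} \,\big\|\, \ident_{A_1} \otimes \rho^0_{B_1}\bigr) \;-\; D_\alpha\bigl(\cM(\rho^0)_{A_1 A_2 B_1 B_2} \,\big\|\, \ident_{A_1 A_2} \otimes \cM(\rho^0)_{B_1 B_2}\bigr).
\end{align*}
The Markov condition furnishes a CPTP recovery map $\cR_{B_1 \to B_1 B_2}$ satisfying $(\cI_{A_1} \otimes \cR)(\cM(\rho^0)_{A_1 B_1}) = \cM(\rho^0)_{A_1 B_1 B_2}$, and in particular $\cR(\rho^0_{B_1}) = \cM(\rho^0)_{B_1 B_2}$. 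This means the two reference operators above are related by applying $\cI_{A_1} \otimes \cR$, up to the extra identity on $A_2$, which is the geometric fact that makes the chain rule essentially tight on the Markov-satisfying state.

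With the recovery in hand, I would combine a chain-rule-type inequality for sandwiched Rényi divergences with the data-processing inequality applied to $\cR$ to reduce the divergence difference to a conditional Rényi entropy of the output register $A_2$. The Rényi chain rule holds only up to an additive correction, and the variational freedom in that correction is what produces the infimum and supremum over arbitrary input states $\omega_{R A_1 B_1}$: the correction evaluated at $\rho^0$ is sandwiched by its extremes over all admissible inputs to $\cM$, yielding exactly the two-sided bound of the corollary. For the pure-state refinement, a standard spectral-decomposition argument applies: since $\omega \mapsto \cM(\omega)$ is linear and the sandwiched Rényi entropy has suitable concavity/convexity properties on its state argument for $\alpha \in [\tfrac{1}{2},\infty)$, the inf and sup over mixed $\omega_{R A_1 B_1}$ are attained on pure states whenever $\rho^0$ itself is pure.

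The main obstacle is pinning down the precise form of the chain-rule inequality: unlike the von Neumann case, sandwiched Rényi divergence does not satisfy an exact chain rule, and one must choose a chain-rule variant whose correction term is expressible as $\pm H_\alpha(A_2|B_2 A_1 B_1)_{\cM(\omega)}$ for a varying $\omega$. The Markov condition is precisely what enables this: without it, the correction term would involve an additional divergence between $\cM(\rho^0)_{B_1 B_2}$ and an unrelated reference, which could not be absorbed into a clean conditional entropy of $A_2$. Carrying this through requires careful bookkeeping of the optimization domains and verifying that the induced sup/inf over $\omega_{R A_1 B_1}$ is unconstrained, rather than tied to states matching the $A_1 B_1$ marginal of $\rho^0$.
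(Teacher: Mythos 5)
This corollary is not proven in the present paper at all: it is quoted verbatim from~\cite{dfr16} (their Corollary~3.4), so the comparison is with the proof given there, which rests on their Rényi chain rule (their Theorem~3.2) proved via duality relations for sandwiched Rényi conditional entropies, Hölder/interpolation-type operator-norm manipulations, and a careful use of the Markov structure to absorb the conditioning on $B_2$.

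Your proposal has a genuine gap at exactly that point. The ingredients you set up are fine: rewriting the entropy difference as a difference of $D_\alpha$'s, and invoking the Petz/HJPW recovery map $\cR_{B_1\to B_1B_2}$ with $(\cI_{A_1}\otimes\cR)(\rho^0_{A_1B_1})=\rho_{A_1B_1B_2}$, which via data processing gives useful facts such as $H_\alpha(A_1|B_1)_{\rho^0}\leqslant H_\alpha(A_1|B_1B_2)_{\rho}$. But the step you describe as ``combine a chain-rule-type inequality with data processing to reduce the divergence difference to a conditional Rényi entropy of $A_2$'' is precisely the theorem being proved. There is no same-$\alpha$ chain rule of the form $H_\alpha(A_1A_2|B_1B_2)\geqslant H_\alpha(A_1|B_1B_2)+H_\alpha(A_2|A_1B_1B_2)$ for a single state (known Rényi chain rules trade off different Rényi parameters), and the distinctive feature of the statement --- that the correction term is a worst case $\inf_\omega H_\alpha(A_2|B_2A_1B_1)_{\cM(\omega)}$ over \emph{arbitrary} channel inputs $\omega_{RA_1B_1}$, not a quantity evaluated on $\rho$ --- does not follow from ``variational freedom in the correction''; in~\cite{dfr16} it emerges from an explicit argument in which the conditional output states appearing in the norm bounds are themselves of the form $\cM(\omega)$, and only then is the optimization relaxed to all inputs. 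Postulating a chain-rule variant whose correction term is $\pm H_\alpha(A_2|B_2A_1B_1)_{\cM(\omega)}$ is therefore circular. The pure-state refinement is also not justified by your concavity/convexity argument: such an argument would have to work simultaneously for the infimum and the supremum (requiring both concavity and convexity of $\omega\mapsto H_\alpha(A_2|B_2A_1B_1)_{\cM(\omega)}$), and it would make the purity assumption on $\rho^0$ superfluous; in the actual proof the optimizing states are constructed from a purification of $\rho^0$, which is why purity of $\rho^0$ lets one restrict to pure $\omega$.
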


\section{The quantum divergence variance and its properties}\label{sec:divergence-variance}
The second-order term in our main result will be governed by a quantity called the quantum divergence variance, defined as follows:
\begin{definition}[Quantum divergence variance]
    Let $\rho, \sigma$ be positive semidefinite operators such that $D(\rho \| \sigma)$ is finite (i.e.~$\Supp(\rho) \subseteq \Supp(\sigma)$). Then, the quantum divergence variance $V(\rho \| \sigma)$ is defined as:
    \begin{align*}
        V(\rho \| \sigma) &:= \frac{1}{\tr[\rho]} \tr\left[ \rho(\log \rho - \log \sigma - \ident D(\rho \| \sigma))^2 \right]\\
        &= \frac{1}{\tr[\rho]} \tr\left[ \rho(\log \rho - \log \sigma)^2 \right] - D(\rho \| \sigma)^2.
    \end{align*}
\end{definition}
This was already defined in~\cite{th12} and~\cite{l14-3} under the names ``quantum information variance'' and ``quantum relative variance'' respectively; we instead choose a different name to clearly mark its relation to the divergence and to avoid confusion with the other variances that we are about to define.

\begin{definition}[Quantum conditional entropy variance]
    Let $\rho_{AB}$ be a bipartite quantum state. Then, the quantum conditional entropy variance $V(A|B)_{\rho}$ is given by:
    \begin{align*}
        V(A|B)_{\rho} &:= V(\rho_{AB} \| \ident_A \otimes \rho_B).
    \end{align*}
\end{definition}
    Likewise, this was already defined in~\cite{th12} under the name ``quantum conditional information variance''. Of course, the system in the conditioning can be omitted in the unconditional case. Finally, we define the quantum mutual information variance, first defined in~\cite{dtw14}:
\begin{definition}[Quantum mutual information variance]
    Let $\rho_{AB}$ be a bipartite quantum state. Then, the quantum mutual information variance $V(A;B)_{\rho}$ is given by:
    \begin{align*}
        V(A; B)_{\rho} &:= V(\rho_{AB} \| \rho_A \otimes \rho_B).
    \end{align*}
\end{definition}

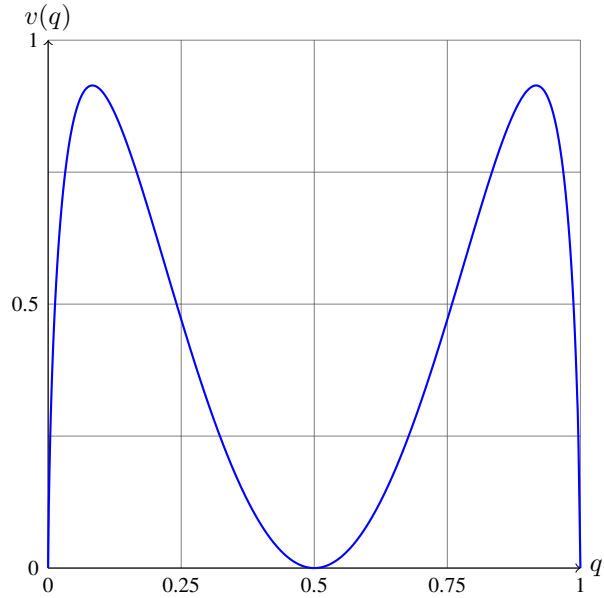
\begin{figure}
    \begin{center}
\begin{tikzpicture}[scale=7]
	\draw[very thin,color=gray,step=0.25] (0,0) grid (1,1);
	\draw[->] (0,0) -- (1,0) node[right] {$q$};
	\foreach \x in {0, 0.25, 0.5, 0.75, 1}
		\node[below,font=\footnotesize] at (\x cm,0) {\x};
		\draw[->] (0,0) -- (0,1) node[above] {$v(q)$};
	\foreach \y in {0,  0.5,  1}
		\node[left,font=\footnotesize] at (0,\y cm) {\y};
    \draw[blue, thick] (0,0) plot[smooth] file{entropy-variance-of-bernoulli.dat};
\end{tikzpicture}
\end{center}
\caption{Plot of $v(q) = V(X)$, where $X$ is a Bernoulli RV with $\Pr[X = 0] = q$. It peaks at around $v(0.083) \approx 0.9142$.}
\label{fig:divergence-variance-of-bernoulli}
\end{figure}

These various quantities have a number of elementary properties that we prove here. First, to get a sense of what the divergence variance looks like in a simple case, we plot the divergence variance of a single bit $X$ with $\Pr[X=0]$ in Figure~\ref{fig:divergence-variance-of-bernoulli}. We also note that the divergence variance does not satisfy the data processing inequality, even in the classical case; in other words, it is not true in general that $V(\rho \| \sigma) \geqslant V(\mathcal{E}(\rho) \| \mathcal{E}(\sigma))$ for a quantum channel $\mathcal{E}$. To see this, consider the following counterexample: let $\rho = \proj{0}$, $\sigma = \ident$, and let $\mathcal{E}$ be a binary symmetric channel in the computational basis with error rate $0.083$. Then, we can see from the plot in Figure~\ref{fig:divergence-variance-of-bernoulli} that $V(\mathcal{E}(\rho) \| \mathcal{E}(\sigma)) > V(\rho \| \sigma)$. It is also easy to see that the opposite inequality is also false in general.

Now, we show that the divergence variance obeys the following basic bounds:
\begin{lemma}[General bounds]\label{lem:divergence-variance-general-bounds}
    For any positive semidefinite operators $\rho, \sigma$, with $\Supp(\rho) \subseteq \Supp(\sigma)$,
and any $\nu \in (0,1)$,
\begin{align}
\label{eq:bound_dalpha}
V(\rho \| \sigma) \leq \frac{1}{\nu^2} \log^2\left(2^{-\nu D(\rho \| \sigma) + \nu D'_{1+\nu}(\rho \| \sigma)} + 2^{\nu D(\rho \| \sigma) - \nu D'_{1-\nu}(\rho \| \sigma)}+1\right).
\end{align}
 \end{lemma}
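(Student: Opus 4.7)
My plan is to reduce the quantum bound to a classical statement via the Nussbaum–Szkola distributions, and then establish the classical version through an elementary pointwise inequality followed by Jensen's inequality.

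\emph{Reduction step.} Write spectral decompositions $\rho = \sum_i r_i \proj{\phi_i}$ and $\sigma = \sum_j s_j \proj{\psi_j}$, and define the joint classical distributions on pairs $(i,j)$ by $P(i,j) = r_i |\langle \phi_i|\psi_j\rangle|^2$ and $Q(i,j) = s_j |\langle \phi_i|\psi_j\rangle|^2$. Direct expansion---using $\sum_j |\langle\phi_i|\psi_j\rangle|^2 = 1$ and $\sum_i |\langle\phi_i|\psi_j\rangle|^2 = 1$---yields $\sum_{i,j} P(i,j)^{\alpha} Q(i,j)^{1-\alpha} = \tr[\rho^{\alpha}\sigma^{1-\alpha}]$, so that $D'_\alpha(\rho\|\sigma) = D'_\alpha(P\|Q)$ for all admissible $\alpha$, and in particular $D(\rho\|\sigma) = D(P\|Q)$. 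The same style of computation, together with the fact that $\rho$ commutes with $\log\rho$ (which forces $\tr[\rho\log\rho\log\sigma] = \tr[\rho\log\sigma\log\rho]$ so the cross-term works out), gives $V(\rho\|\sigma) = V(P\|Q)$. Thus all four quantities appearing in the inequality agree with their classical counterparts, and it suffices to prove the lemma for classical $P,Q$.

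\emph{Classical step.} Setting $Z(i,j) := \log(P(i,j)/Q(i,j))$ and $Y := Z - D$, the random variable $Y$ has mean $0$ and variance $V$ under $P$. The pointwise inequality
\[
    y^2 \;\leq\; \frac{1}{\nu^2}\log^2\bigl(2^{\nu y} + 2^{-\nu y} + 1\bigr)
\]
holds for every real $y$, since $2^{\nu y} + 2^{-\nu y} + 1 \geq 2^{\nu|y|}$ and taking logarithms then squaring. Taking expectation under $P$ and then applying Jensen's inequality to the function $\log^2$ produces
\[
    V \;\leq\; \frac{1}{\nu^2}\mathbb{E}_P\!\bigl[\log^2(2^{\nu Y}+2^{-\nu Y}+1)\bigr] \;\leq\; \frac{1}{\nu^2}\log^2\bigl(\mathbb{E}_P[2^{\nu Y}]+\mathbb{E}_P[2^{-\nu Y}]+1\bigr).
\]
A short computation then identifies $\mathbb{E}_P[2^{\nu Y}] = 2^{-\nu D}\sum_{i,j} P(i,j)^{1+\nu}Q(i,j)^{-\nu} = 2^{\nu(D'_{1+\nu}(P\|Q) - D)}$ and similarly $\mathbb{E}_P[2^{-\nu Y}] = 2^{\nu(D - D'_{1-\nu}(P\|Q))}$, which is exactly the claimed right-hand side.

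\emph{Main obstacle.} The delicate point is ensuring that the second (Jensen) step above goes in the desired direction. Computing $(\log^2 x)'' = \frac{2(1-\ln x)}{x^2 (\ln 2)^2}$ shows that $\log^2$ is concave precisely on $[e,\infty)$. The argument $2^{\nu Y}+2^{-\nu Y}+1$ is always at least $3 > e$ by AM-GM on the first two terms, which places us in the concave regime and makes Jensen run the right way. The extra ``$+1$'' inside the logarithm in the statement of the lemma is therefore not cosmetic but exactly what is needed to close the proof: without it, the argument could equal $2 < e$, where $\log^2$ is convex and this last step would fail.
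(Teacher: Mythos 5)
Your proof is correct. The analytic core is the same as the paper's: the pointwise bound $y^2 \leq \frac{1}{\nu^2}\log^2\bigl(2^{\nu y}+2^{-\nu y}+1\bigr)$, concavity of $t \mapsto \log^2 t$ on $[e,\infty)$ applied to an argument that is always at least $3$, and the identification of the exponential moments with $2^{\nu(D'_{1+\nu}-D)}$ and $2^{\nu(D-D'_{1-\nu})}$ — and you correctly flag that the ``$+1$'' is what keeps Jensen in the concave regime. Where you differ is the quantum-to-scalar reduction: the paper stays at the operator level, introducing $X = 2^{-D(\rho\|\sigma)}\rho\otimes(\sigma^{-1})^T$ and the vector $\ket{\varphi}=(\sqrt{\rho}\otimes\ident)\ket{\gamma}$ (following the method of \cite[Lemma~8]{TCR09}) and arguing via the spectrum of $X^{\nu}+X^{-\nu}+\ident$ in the state $\ket{\varphi}$, whereas you classicalize first through the Nussbaum--Szko\l{}a distributions and then run a purely scalar probabilistic argument. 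The two reductions are equivalent here — the spectral distribution of $\log X$ in $\ket{\varphi}$ is exactly the NS distribution — but your route needs the extra check that $V(\rho\|\sigma)=V(P\|Q)$ holds exactly, which you justify correctly via the commutation of $\rho$ with $\log\rho$ (the paper only invokes this NS matching later, in the proof of Lemma~\ref{lem_HalphaH_second_order_new}, and there only via the derivative at $\alpha=1$). The benefit of your version is that it avoids any operator inequality and makes the Jensen step an ordinary scalar one; the paper's version avoids introducing the NS distributions at this stage. One cosmetic caveat: like the paper's own proof, your argument implicitly takes $\rho$ (hence $P$) normalized; for unnormalized $\rho$ one should carry the $\tr[\rho]^{-1}$ factors through, which changes nothing essential.
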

\begin{proof}
    First, without loss of generality, we restrict the space to the support of $\sigma$. We then proceed in a way similar to~\cite[Lemma 8]{TCR09}. We introduce  $X = 2^{-D(\rho \| \sigma)} \rho  \otimes (\sigma^{-1})^T$, $\ket{\varphi} = (\sqrt{\rho} \otimes \ident) \ket{\gamma}$ with $\ket{\gamma} = \sum_{i} \ket{i} \otimes \ket{i}$. We then have $V(\rho \| \sigma) = \frac{1}{\ln^2 2} \bra{\varphi} \ln^2 X \ket{\varphi}$. Observe that we have for $\nu \in (0,1)$ and any $t > 0$
   \begin{align*}
   \ln^2 t &= \frac{1}{\nu^2} \ln^2 t^{\nu} \\
   &\leq \frac{1}{\nu^2} \left(\ln\left(t^{\nu}+\frac{1}{t^{\nu}} \right) \right)^2 \\
   &\leq \frac{1}{\nu^2} \left(\ln\left(t^{\nu}+\frac{1}{t^{\nu}} + 1 \right) \right)^2 \ ,
   \end{align*}
   where in the first inequality, we used the fact that $\ln(x)^2 \leq \ln(x+\frac{1}{x})^2$ for any $x > 0$ and in the second inequality the fact that $x+\frac{1}{x} \geqslant1$.  As a result, we have that
   \[ (\Pi_{\Supp(\rho)} \otimes \ident) \ln^2 X (\Pi_{\Supp(\rho)} \otimes \ident) \leqslant \frac{1}{\nu^2} (\Pi_{\Supp(\rho)} \otimes \ident)\left(\ln \left(X^{\nu} + \frac{\ident}{X^{\nu}} + \ident \right) \right)^2(\Pi_{\Supp(\rho)} \otimes \ident) \]
   and therefore
   \[
       \bra{\varphi} \ln^2 X \ket{\varphi} \leq \frac{1}{\nu^2} \bra{\varphi} \left(\ln \left(X^{\nu} + \frac{\ident}{X^{\nu}} + \ident \right) \right)^2 \ket{\varphi},
   \]
   We now use the fact that the function $s \mapsto \ln^2(s)$ is concave on the interval $[e, +\infty)$ and that $\ket{\varphi}$ is in the span of the eigenvectors of $X^{\nu} + \frac{\ident}{X^{\nu}} + \ident$ with eigenvalues in $[3,\infty)$) to get
    \[
   \bra{\varphi} \ln^2 X \ket{\varphi} \leq \frac{1}{\nu^2} \ln^2 \left( \bra{\varphi} X^{\nu} \ket{\varphi} + \bra{\varphi} \frac{\ident}{X^{\nu}} \ket{\varphi} + 1\right) . \]
   But observe that 
   \[
   \bra{\varphi} X^{\nu} \ket{\varphi} = 2^{-\nu D(\rho \| \sigma)} \tr(\rho^{1+\nu} \sigma^{-\nu}) = 2^{-\nu D(\rho \| \sigma) + \nu D'_{1+\nu}(\rho \| \sigma)}
   \]
   and 
   \[
   \bra{\varphi} \frac{\ident}{X^{\nu}} \ket{\varphi} = 2^{+\nu D(\rho \| \sigma)} \tr(\rho^{1-\nu} \sigma^{\nu}) = 2^{+\nu D(\rho \| \sigma) - \nu D'_{1-\nu}(\rho \| \sigma)} \ .
   \]
\end{proof}

This leads to the following bounds for the conditional entropy variance and the mutual information variance:
\begin{corollary} \label{lem_var_dim} 
For any density operator $\rho_{AB}$, we have
\begin{align*}
V(A|B)_{\rho} &\leq \log^2(2d_{A}^2+1) \\
V(A;B)_{\rho} &\leq 4\log^2(2d_A + 1) \ .
\end{align*}
Moreover, if the system $A$ is classical, then the upper bounds can be improved to 
\begin{align*}
V(A|B)_{\rho} &\leq \log^2(2d_{A}+1) \\
V(A;B)_{\rho} &\leq 4\log^2(2\sqrt{d_A} + 1) \ .
\end{align*}
\end{corollary}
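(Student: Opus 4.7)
The plan is to apply Lemma~\ref{lem:divergence-variance-general-bounds} in each of the four cases, tuning the free parameter $\nu$ and using elementary dimension estimates on the Rényi conditional entropies (or mutual informations) that appear as the divergences inside the logarithm of \eqref{eq:bound_dalpha}.

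For $V(A|B)_\rho$ in the quantum case, I would take $\sigma = \ident_A \otimes \rho_B$ and apply the lemma at the endpoint $\nu = 1$. This is marginally outside the stated range, but the proof of Lemma~\ref{lem:divergence-variance-general-bounds} goes through verbatim for any $\nu > 0$ as long as $D'_{1 \pm \nu}$ is finite, which it is at $\nu = 1$ here (both $D'_0$ and $D'_2$ against $\ident_A \otimes \rho_B$ are finite). Rewriting $D$ and $D'_{1 \pm \nu}$ as minus the corresponding Rényi conditional entropies and invoking the standard estimates $|H_\alpha(A|B)_\rho| \leq \log d_A$ and $|H'_\alpha(A|B)_\rho| \leq \log d_A$ for $\alpha \in \{0, 1, 2\}$, each exponent inside the logarithm of \eqref{eq:bound_dalpha} is at most $2 \log d_A$; the prefactor $1/\nu^2 = 1$ then yields the claimed $\log^2(2 d_A^2 + 1)$.

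For $V(A;B)_\rho$ in the quantum case, I would instead take $\sigma = \rho_A \otimes \rho_B$ and $\nu = 1/2$. The divergences are now Rényi mutual informations, all lying in $[0, 2\log d_A]$: the lower bound is positivity of the divergence, and the upper bound follows from $D_{\max}(\rho_{AB} \| \rho_A \otimes \rho_B) \leq 2 \log \min(d_A, d_B)$ together with monotonicity in $\alpha$. Each exponent is therefore at most $\tfrac{1}{2} \cdot 2 \log d_A = \log d_A$, and multiplying by $1/\nu^2 = 4$ gives $4 \log^2(2 d_A + 1)$.

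The classical improvements exploit tighter ranges. When $A$ is classical, the Rényi conditional entropies are non-negative, so the corresponding divergences lie in $[-\log d_A, 0]$; each exponent at $\nu = 1$ is then bounded by $\log d_A$, yielding $\log^2(2 d_A + 1)$. For the mutual information with classical $A$, one has $0 \leq D_\alpha(\rho_{AB} \| \rho_A \otimes \rho_B) \leq \log d_A$ and similarly for $D'_\alpha$ (since the mutual information is then bounded by $H(A) \leq \log d_A$ rather than $2 \log d_A$), and at $\nu = 1/2$ each exponent is at most $\tfrac{1}{2} \log d_A$, producing $4 \log^2(2 \sqrt{d_A} + 1)$.

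The main obstacle I foresee is not conceptual but bookkeeping: collecting the dimension bounds on $D'_\alpha$ uniformly for $\alpha \in \{0, 1/2, 3/2, 2\}$, in both the conditional-entropy and mutual-information settings and with their classical sharpenings. In particular, the upper bound $D'_2(\rho_{AB} \| \ident_A \otimes \rho_B) \leq \log d_A$ calls for a brief direct computation starting from $\rho_{AB} \leq d_A \, \ident_A \otimes \rho_B$, and the positivity of the Petz conditional entropy for classical $A$ needs a quick check at the endpoint orders. The extension of Lemma~\ref{lem:divergence-variance-general-bounds} to the boundary $\nu = 1$ is a purely formal point that I would handle by continuity in $\nu$.
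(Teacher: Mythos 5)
Your treatment of the two conditional-entropy bounds is essentially the paper's own proof: apply Lemma~\ref{lem:divergence-variance-general-bounds} with $\sigma=\ident_A\otimes\rho_B$, bound all entropy terms by $\pm\log d_A$ (or by $[0,\log d_A]$ when $A$ is classical), and handle the endpoint by taking $\nu\to 1$ in the uniform bound $\frac{1}{\nu^2}\log^2(2d_A^{2\nu}+1)$; the dimension bounds you need are exactly the standard ones the paper cites, so that half is fine.

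The mutual-information half, however, has a genuine gap. Your dimension control rests on the claim that $D_{\max}(\rho_{AB}\|\rho_A\otimes\rho_B)\leq 2\log\min(d_A,d_B)$, from which you want to get all the Rényi mutual informations (in particular $D'_{3/2}$) into $[0,2\log d_A]$ by monotonicity in $\alpha$. That inequality is false when the second marginal is fixed to $\rho_B$: it only holds for the optimized quantity $\min_{\sigma_B}D_{\max}(\rho_{AB}\|\rho_A\otimes\sigma_B)$. For the two-qubit state $\ket{\Phi(\lambda)}=\sqrt{\lambda}\ket{00}+\sqrt{1-\lambda}\ket{11}$ one has $D_{\max}(\proj{\Phi(\lambda)}\|\rho_A\otimes\rho_B)=\log\bigl(\tfrac{1}{\lambda}+\tfrac{1}{1-\lambda}\bigr)\to\infty$ as $\lambda\to 0$, and the paper's own remark in this proof shows the same divergence for $D'_{1+\nu}$ with any $\nu>\tfrac12$; so no monotonicity argument starting from $D_{\max}$ can yield the bound you need. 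The same problem occurs in the classical-$A$ case: with $p(0)=\epsilon$, $\rho_B(0)=\proj{0}$, $\rho_B(1)=\proj{1}$, the quantity $D_{\max}(\rho_{AB}\|\rho_A\otimes\rho_B)\geq\log(1/\epsilon)$ is unbounded even though $d_A=2$, so "mutual information $\leq H(A)$'' does not transfer to the Rényi order $3/2$ with fixed marginals. This is precisely why the paper works at $\nu=\tfrac12$ and proves the key estimates $D'_{3/2}(\rho_{AB}\|\rho_A\otimes\rho_B)\leq 2\log d_A$ (via a purification, data processing, and a Cauchy--Schwarz step) and, for classical $A$, $D'_{3/2}(\rho_{AB}\|\rho_A\otimes\rho_B)\leq 2\log\sum_a\sqrt{p(a)}\leq\log d_A$ (via $\rho_B\geq p(a)\rho_B(a)$ and operator monotonicity of $x\mapsto -x^{-1/2}$). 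Your proposal as written is missing these arguments, and they are the actual substance of the mutual-information bounds.
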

\begin{proof}
For the upper bound on $V(A|B)_{\rho} = V(\rho_{AB} \| \id_{A} \otimes \rho_B)$, using~\eqref{eq:bound_dalpha} for $\nu \in (0,1)$, we get:
\begin{align*}
V(A|B)_{\rho}
&\leq \frac{1}{\nu^2}\log^2\left( 2^{-\nu D(\rho_{AB} \| \id_{A} \otimes \rho_{B}) + \nu D'_{1+\nu}(\rho_{AB} \| \id_{A} \otimes \rho_{B}) } + 2^{\nu D(\rho_{AB} \| \id_{A} \otimes \rho_{B}) - \nu D'_{1-\nu}(\rho_{AB} \| \id_{A} \otimes \rho_{B}) } + 1\right) \\
&= \frac{1}{\nu^2}\log^2\left( 2^{\nu(H(A|B)_{\rho} - H'_{1+\nu}(A|B)_{\rho}) } + 2^{\nu(-H(A|B)_{\rho} + H'_{1-\nu}(A|B)_{\rho}) } + 1\right) \\
&\leq \frac{1}{\nu^2}\log^2\left( 2d_{A}^{2\nu} + 1\right),
\end{align*}
where the first inequality uses Lemma~\ref{lem:divergence-variance-general-bounds} and the last inequality uses the fact that all the entropy terms are bounded by $-\log d_A \leqslant H_{\star}(A|\star) \leqslant \log d_A$ (see e.g., \cite[Lemma 5.2]{Tom15book}). Taking the limit $\nu \to 1$, we get the desired result. In the case where $\rho_{AB}$ is separable, we have instead $0 \leqslant H_{\star}(A|\star) \leqslant \log d_A$ which leads to the improved bound.


For the bound on $V(A;B) = V(\rho_{AB} \| \rho_{A} \otimes \rho_{B})$, we use~\eqref{eq:bound_dalpha} with $\nu = \frac{1}{2}$ to have an upper bound of the form
\begin{align*}
V(A;B)_{\rho}
&\leq 4 \log^2\left(2^{-\frac{1}{2} D(\rho_{AB} \| \rho_{A} \otimes \rho_B) + \frac{1}{2} D'_{\frac{3}{2}}(\rho_{AB} \| \rho_{A} \otimes \rho_{B})} + 2^{\frac{1}{2} D(\rho_{AB} \| \rho_{A} \otimes \rho_B) - \frac{1}{2} D'_{\frac{1}{2}}(\rho_{AB} \| \rho_{A} \otimes \rho_{B})} + 1\right) \\
&\leq 4 \log^2\left(2^{\frac{1}{2} D'_{\frac{3}{2}}(\rho_{AB} \| \rho_{A} \otimes \rho_{B})} + d_{A} + 1 \right) ,
\end{align*} 
where we used the fact that $D(\rho_{AB} \| \rho_{A} \otimes \rho_B)$ and $D'_{\frac{1}{2}}(\rho_{AB} \| \rho_{A} \otimes \rho_{B})$ are nonnegative and $D(\rho_{AB} \| \rho_{A} \otimes \rho_B) \leq 2 \log d_A$.
To conclude, it suffices to show that $D'_{\frac{3}{2}}(\rho_{AB} \| \rho_{A} \otimes \rho_{B}) \leq 2 \log d_{A}$. To do this, let $\rho_{ABC}$ be a purification of $\rho$. We then have that: 
\begin{align*}
    D'_{\frac{3}{2}}(\rho_{AB} \| \rho_A \otimes \rho_B) &\leqslant D'_{\frac{3}{2}}(\rho_{ABC} \| \rho_A \otimes \rho_{BC})\\
    &= 2 \log \tr\left[ \rho_{ABC}^{\frac{3}{2}} (\rho_A \otimes \rho_{BC})^{-\frac{1}{2}} \right]\\
    &\leqslant 2 \log \sqrt{\tr\left[\rho_{ABC}^{\frac{3}{2}} \rho_A^{-1} \right] \tr\left[ \rho_{ABC}^{\frac{3}{2}} \rho_{BC}^{-1} \right]}\\
    &= \log \left[ \tr\left[\rho_{ABC} \rho_A^{-1} \right] \tr\left[ \rho_{ABC} \rho_{BC}^{-1} \right] \right]\\
    &\leqslant \log d_A + \log \dim \Supp(\rho_{BC})\\
    &\leqslant 2\log d_A.
\end{align*}
We remark that the choice of looking at $D'_{\frac{3}{2}}$ is not arbitrary. In fact, $D'_{1+\nu}(\rho_{AB} \| \rho_{A} \otimes \rho_B)$ for $\nu > \frac{1}{2}$ may be arbitrarily large as can be seen with the following example. Let $\ket{\Phi(\lambda)}_{AB} = \sqrt{\lambda} \ket{00}_{AB} + \sqrt{1-\lambda} \ket{11}_{AB}$ for $\lambda \in [0,1]$. We set $\rho_{AB} = \proj{\Phi(\lambda)}_{AB}$. Then, we can compute
\begin{align*}
D'_{1+\nu}(\rho_{AB} \| \rho_{A} \otimes \rho_B)
&= \frac{1}{\nu} \log \tr\left[\lambda^{1-2\nu} \proj{00} + (1-\lambda)^{1-2\nu} \proj{11} \right],
\end{align*}
which diverges as $\lambda \to 0$ for $\nu > \frac{1}{2}$.

When the system $A$ is classical, then we have $D'_{\frac{3}{2}}(\rho_{AB} \| \rho_{A} \otimes \rho_{B}) \leq \log d_{A}$. 
In fact, we write $\rho_{AB} = \sum_{a} p(a) \proj{a}_{A} \otimes \rho_{B}(a)$, where $\{p(a)\}_a$ is a probability distribution and $\rho_{B}(a)$ are density operators. Then, we compute
\begin{align*}
    D'_{\frac{3}{2}}(\rho_{AB} \| \rho_A \otimes \rho_B) 
        &= 2 \log \sum_{a} \tr\left[ \left(p(a)^{\frac{3}{2}} \proj{a} \otimes \rho_{B}(a)^{\frac{3}{2}} \right) (p(a)^{-\frac{1}{2}} \proj{a} \otimes \rho_{B}^{-\frac{1}{2}}) \right] \\
        &= 2 \log \sum_{a} \tr\left[ p(a) \rho_{B}(a)^{\frac{3}{2}} \rho_{B}^{-\frac{1}{2}} \right].
%
 \end{align*}
  Now note that for any $a$, $\rho_B \geq p(a) \rho_{B}(a)$, and thus by operator monotonicity of $x \mapsto -x^{-\frac{1}{2}}$, we have $\rho_{B}^{-\frac{1}{2}} \leq p(a)^{-\frac{1}{2}} \rho_B(a)^{-\frac{1}{2}}$, we get
  \begin{align*}
    D'_{\frac{3}{2}}(\rho_{AB} \| \rho_A \otimes \rho_B) 
        &\leqslant 2 \log \sum_{a} p(a)^{\frac{1}{2}} \\
        &\leqslant \log d_{A}.
%
 \end{align*}

\end{proof}

Next, we show that the divergence variance is additive, in the following sense:

\begin{lemma}[Additivity of the divergence variance]\label{lem:divergence-variance-additivity}
    Let $\rho, \tau$ be density operators and $\sigma, \omega$ be positive semidefinite operators. Then,
    \[ V(\rho \otimes \tau \| \sigma \otimes \omega) = V(\rho \| \sigma) + V(\tau \| \omega). \]
\end{lemma}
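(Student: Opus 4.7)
The plan is to reduce the statement to the elementary fact that the variance of an independent sum equals the sum of variances. The key observation is that for tensor products, $\log(\rho \otimes \tau) = \log \rho \otimes \ident + \ident \otimes \log \tau$ and similarly $\log(\sigma \otimes \omega) = \log \sigma \otimes \ident + \ident \otimes \log \omega$, so if we set
\[
A := \log \rho - \log \sigma, \qquad B := \log \tau - \log \omega,
\]
then $\log(\rho \otimes \tau) - \log(\sigma \otimes \omega) = A \otimes \ident + \ident \otimes B$.

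Since $\rho$ and $\tau$ are density operators (hence $\tr[\rho] = \tr[\tau] = \tr[\rho \otimes \tau] = 1$), the second form of the definition reads $V(\rho \| \sigma) = \tr[\rho A^2] - (\tr[\rho A])^2$, and similarly for $V(\tau \| \omega)$. Morally, this says $V(\rho \| \sigma) = \Variance_\rho(A)$, the quantum variance of the ``observable'' $A$ under the state $\rho$. Then I would expand
\[
(A \otimes \ident + \ident \otimes B)^2 = A^2 \otimes \ident + A \otimes B + A \otimes B + \ident \otimes B^2,
\]
and take the expectation against $\rho \otimes \tau$. The tensor-product structure factorises each trace, giving
\[
\tr\bigl[(\rho \otimes \tau)(A \otimes \ident + \ident \otimes B)^2\bigr] = \tr[\rho A^2] + 2 \tr[\rho A]\tr[\tau B] + \tr[\tau B^2].
\]
The cross term $A \otimes B$ symmetrizes correctly because even though $A$ and $\ident \otimes B$ need not commute with $\rho \otimes \tau$ separately, cyclicity of the trace gives $\tr[(\rho \otimes \tau)(A \otimes B)] = \tr[\rho A]\tr[\tau B]$.

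The relative entropy piece is similarly additive: $D(\rho \otimes \tau \| \sigma \otimes \omega) = \tr[\rho A] + \tr[\tau B] = D(\rho \| \sigma) + D(\tau \| \omega)$, so squaring and subtracting exactly cancels the cross term $2 \tr[\rho A]\tr[\tau B]$, leaving
\[
V(\rho \otimes \tau \| \sigma \otimes \omega) = \bigl(\tr[\rho A^2] - (\tr[\rho A])^2\bigr) + \bigl(\tr[\tau B^2] - (\tr[\tau B])^2\bigr) = V(\rho \| \sigma) + V(\tau \| \omega).
\]

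There is essentially no obstacle here beyond careful bookkeeping; the main thing to be mindful of is the support condition, which requires $\Supp(\rho) \subseteq \Supp(\sigma)$ and $\Supp(\tau) \subseteq \Supp(\omega)$ (so that $\log \sigma$ and $\log \omega$ are well-defined on the relevant supports and $D(\rho \otimes \tau \| \sigma \otimes \omega)$ is finite), but this is implicit in the hypothesis that both variances on the right-hand side are defined.
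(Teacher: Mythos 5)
Your proof is correct and follows essentially the same route as the paper's: expand the square of $\log(\rho\otimes\tau)-\log(\sigma\otimes\omega)$ using additivity of the logarithm and of $D$, and observe that the cross term factorises and cancels (the paper works with the mean-centered operators so the cross terms vanish outright, while you cancel them against the square of the additive relative entropy, which is only a bookkeeping difference). Your closing remark on the support conditions is a welcome touch of care that the paper leaves implicit.
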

\begin{proof}
    We have that
    \begin{multline*}
        V(\rho \otimes \tau \| \sigma \otimes \omega)\\
        \begin{aligned}
        &= \tr\left[(\rho \otimes \tau)\left(\log (\rho \otimes \tau) - \log(\sigma \otimes \omega) - \ident D(\rho \| \sigma) - \ident D(\tau \| \omega) \right)^2 \right]\\
        &= \tr\left[(\rho \otimes \tau)\left(\log \rho \otimes \ident + \ident \otimes \log \tau - \log \sigma \otimes \ident - \ident \otimes \log \omega - \ident D(\rho \| \sigma) - \ident D(\tau \| \omega) \right)^2 \right]\\
        &= \tr\left[ (\rho \otimes \tau)\left( \log \rho \otimes \ident - \log \sigma \otimes \ident - \ident D(\rho \| \sigma) \right)^2 \right]\\
        &\quad\quad\quad\quad + \tr\left[ (\rho \otimes \tau)\left( \ident \otimes \log \tau - \ident \otimes \log \omega - \ident D(\tau \| \omega) \right)^2 \right]\\
        &\quad\quad\quad\quad + \tr\left[ (\rho \otimes \tau)\left( \log \rho \otimes \ident - \log \sigma \otimes \ident - \ident D(\rho \| \sigma) \right) \left( \ident \otimes \log \tau - \ident \otimes \log \omega - \ident D(\tau \| \omega) \right)\right]\\
        &\quad\quad\quad\quad + \tr\left[ (\rho \otimes \tau) \left( \ident \otimes \log \tau - \ident \otimes \log \omega - \ident D(\tau \| \omega) \right)\left( \log \rho \otimes \ident - \log \sigma \otimes \ident - \ident D(\rho \| \sigma) \right)\right]\\
        &= V(\rho \| \sigma) + V(\tau \| \omega)\\
        &\quad\quad\quad\quad + 2\tr\left[ \rho (\log \rho - \log \sigma - \ident D(\rho \| \sigma)) \right] \tr\left[\tau (\log \tau - \log \omega - \ident D(\tau \| \omega)) \right]\\
        &= V(\rho \| \sigma) + V(\tau \| \omega).
        \end{aligned}
    \end{multline*}
\end{proof}

We also show that a conditional entropy variance with a classical variable $X$ in the conditioning admits a decomposition in terms of the possible values of $X$:
\begin{lemma}\label{lem:divergence-variance-classical-X}
    Let $\rho_{ABX}$ be a tripartite state with $X$ classical. Then,
    \begin{align*}
    V(A|BX)_{\rho} &= \sum_x p_x V(A|B,X=x) + \Variance(W)
    \end{align*}
    where $W$ is a random variable that takes value $H(A|B, X=x)$ with probability $p_x$. In particular,
    \[ V(A|BX)_{\rho} \geqslant \sum_x p_x V(A|B,X=x). \]
\end{lemma}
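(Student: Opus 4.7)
The plan is to exploit the block-diagonal structure in $X$ coming from classicality: write $\rho_{ABX} = \sum_x p_x\, \rho_{AB}^x \otimes \proj{x}_X$, so that both $\rho_{ABX}$ and $\id_A \otimes \rho_{BX}$ are block-diagonal with respect to the $X$-register. This means their logarithms are block-diagonal as well, and hence so is the operator $L := \log \rho_{ABX} - \log(\id_A \otimes \rho_{BX})$, whose second moment under $\rho_{ABX}$ is what we need to control.

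First I would compute $L$ block by block: on the $X = x$ sector it equals $L_x := \log \rho_{AB}^x - \id_A \otimes \log \rho_B^x$, because the $\log p_x$ contributions from $\log \rho_{ABX}$ and $\log(\id_A \otimes \rho_{BX})$ exactly cancel. Consequently $\tr[\rho_{AB}^x L_x] = -H(A|B)_{\rho^x}$, and averaging over $x$ gives $\tr[\rho_{ABX} L] = -H(A|BX)_{\rho} = -\mathbb{E}[W]$.

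Next I would use the identity $V(\eta\|\tau) = \tr[\eta(\log\eta-\log\tau)^2] - D(\eta\|\tau)^2$ (valid for normalized $\eta$) twice. On one hand,
\begin{align*}
V(A|BX)_\rho \;=\; \tr[\rho_{ABX} L^2] - H(A|BX)_\rho^2 \;=\; \sum_x p_x \tr[\rho_{AB}^x L_x^2] - \mathbb{E}[W]^2.
\end{align*}
On the other hand, applying the same identity to each block,
\begin{align*}
\tr[\rho_{AB}^x L_x^2] \;=\; V(A|B)_{\rho^x} + H(A|B)_{\rho^x}^2.
\end{align*}
Substituting gives $V(A|BX)_\rho = \sum_x p_x V(A|B, X{=}x) + \mathbb{E}[W^2] - \mathbb{E}[W]^2$, which is exactly the stated decomposition, and the inequality is immediate since variances are nonnegative.

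I do not expect any real obstacle: the only point that needs a little care is that the calculation of $L$ is done on $\Supp(\rho_{ABX})$, so the $\log p_x$ terms appearing in both $\log\rho_{ABX}$ and $\log(\id_A \otimes \rho_{BX})$ match up exactly on that support and cancel; once that is checked, the rest is a straightforward expansion of the square and a direct application of the definition of variance for the discrete random variable $W$.
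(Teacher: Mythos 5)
Your proposal is correct and follows essentially the same route as the paper's own proof: expand $V(A|BX)_\rho$ using the block-diagonal structure in $X$, note that the $\log p_x$ contributions cancel inside the square, apply the definition of $V$ on each block to get $V(A|B,X{=}x)+H(A|B,X{=}x)^2$, and reassemble into $\sum_x p_x V(A|B,X{=}x)+\Variance(W)$. No gaps; the support remark you make is the only subtlety and you handle it as the paper implicitly does.
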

\begin{proof}
    We have that
    \begin{align*}
        V(A|BX)_{\rho} &= \tr\left[ \rho_{ABX} \left( \log \rho_{ABX} - \ident_A \otimes \log \rho_{BX} \right)^2 \right] - H(A|BX)^2\\
        &= \sum_x p_x \tr\left[ \rho_{AB|X=x} \left( \log \rho_{AB|X=x} + \ident \log p_x - \ident_A \otimes \log \rho_{B|X=x} - \ident \log p_x \right)^2 \right]\\
        &\quad\quad\quad\quad\quad - H(A|BX)^2\\
        &= \sum_x p_x \tr\left[ \rho_{AB|X=x} \left( \log \rho_{AB|X=x} - \ident_A \otimes \log \rho_{B|X=x} \right)^2 \right]\\
        &\quad\quad\quad\quad\quad - \left( \sum_x p_x H(A|B,X=x) \right)^2\\
        &= \sum_x  p_x \left( V(A|B,X=x) + H(A|B,X=x)^2 \right) - \left( \sum_x p_x H(A|B,X=x) \right)^2\\
        &= \sum_x p_x V(A|B,X=x) - \left( \sum_x p_x H(A|B,X=x) \right)^2 + \sum_x p_x H(A|B,X=x)^2\\
        &= \sum_x p_x V(A|B,X=x) - \left( \mbE W \right)^2 + \mbE[W^2]\\
        &= \sum_x p_x V(A|B,X=x) + \Variance(W).
    \end{align*}
\end{proof}

We will also need the following decomposition of the conditional entropy variance for Markov chains:
\begin{lemma}
\label{lem_var_markov}
    Let $\rho_{ABCDX}$ be a quantum state with $X$ classical satisfying the Markov chain $AC \leftrightarrow X \leftrightarrow BD$; i.e.~$I(AC:BD|X)=0$. Then,
    \[ V(AB|CDX) = V(A|CX) + V(B|DX) + 2\Cov(W_1, W_2), \]
    where $W_1$ and $W_2$ are random variables that take value $H(A|C,X=x)$ and $H(B|D,X=x)$ according to the value of $X$, respectively. In particular, this shows that for a trivial $B$ system, $V(A|CDX) = V(A|CX)$.
\end{lemma}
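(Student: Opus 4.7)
The plan is to reduce everything to a per-value-of-$X$ computation via Lemma \ref{lem:divergence-variance-classical-X}, use the Markov chain condition to split the conditional state at each $x$ into a tensor product, and then appeal to the additivity of the divergence variance (Lemma \ref{lem:divergence-variance-additivity}). Once the per-$x$ pieces are split, the assertion becomes a straightforward identity for the variance of a sum of two (correlated) random variables.

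First I would observe that because $X$ is classical, the condition $I(AC:BD|X)=0$ is equivalent to $\rho_{ABCD|X=x} = \rho_{AC|X=x} \otimes \rho_{BD|X=x}$ for every $x$ with $p_x>0$. In particular, $\rho_{CD|X=x} = \rho_{C|X=x}\otimes\rho_{D|X=x}$ and by additivity of the von Neumann entropy under tensor products
\[ H(AB|CD,X=x) = H(A|C,X=x) + H(B|D,X=x) = W_1(x) + W_2(x). \]
Moreover, applying Lemma \ref{lem:divergence-variance-additivity} to $V\bigl(\rho_{AC|X=x}\otimes\rho_{BD|X=x}\,\big\|\,\ident_{AB}\otimes\rho_{C|X=x}\otimes\rho_{D|X=x}\bigr)$ yields the per-$x$ additivity
\[ V(AB|CD,X=x) = V(A|C,X=x) + V(B|D,X=x). \]

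Next I would apply Lemma \ref{lem:divergence-variance-classical-X} separately to $V(AB|CDX)$, $V(A|CX)$ and $V(B|DX)$. The first gives
\[ V(AB|CDX) = \sum_x p_x V(AB|CD,X=x) + \Variance(W_1+W_2), \]
using that $H(AB|CD,X=x) = W_1(x)+W_2(x)$. Substituting the per-$x$ additivity and expanding $\Variance(W_1+W_2) = \Variance(W_1) + \Variance(W_2) + 2\Cov(W_1,W_2)$, then recognizing $\sum_x p_x V(A|C,X=x) + \Variance(W_1) = V(A|CX)$ and similarly for $V(B|DX)$ from the same lemma, assembles exactly the claimed identity. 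The ``in particular'' statement for trivial $B$ is immediate: if $B$ is trivial then $W_2 \equiv 0$, so both $\Variance(W_2)$ and $\Cov(W_1,W_2)$ vanish, and the identity collapses to $V(A|CDX) = V(A|CX)$.

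The only nontrivial point — and hence the main obstacle — is justifying the tensor-product factorization of $\rho_{ABCD|X=x}$ from $I(AC:BD|X)=0$. I would handle this by writing $\rho_{ABCDX} = \sum_x p_x \rho_{ABCD|X=x} \otimes \proj{x}_X$ and using the standard identity $I(AC:BD|X)_{\rho} = \sum_x p_x\, I(AC:BD)_{\rho_{|X=x}}$, so that nonnegativity of mutual information forces $I(AC:BD)_{\rho_{|X=x}} = 0$, and hence $\rho_{ABCD|X=x}$ is the required product, for every $x$ in the support of $X$. Everything else is bookkeeping.
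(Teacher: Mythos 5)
Your proposal is correct and follows essentially the same route as the paper's proof: decompose via Lemma~\ref{lem:divergence-variance-classical-X}, split each per-$x$ term with the additivity Lemma~\ref{lem:divergence-variance-additivity}, and expand $\Variance(W_1+W_2)$. You in fact spell out the step the paper leaves implicit, namely that $I(AC:BD|X)=0$ with classical $X$ forces $\rho_{ABCD|X=x}=\rho_{AC|X=x}\otimes\rho_{BD|X=x}$, which is exactly what is needed for both the entropy identity $H(AB|CD,X=x)=W_1(x)+W_2(x)$ and the per-$x$ variance additivity.
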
    
\begin{proof}
    We perform the computation as follows:
    \begin{multline*}
        V(AB|CDX)\\
        \begin{aligned}
        &= \sum_x p_x V(AB|CD,X=x) + \Variance(W_1 + W_2)\\
        &= \sum_x p_x ( V(A|C,X=x) + V(B|D, X=x) ) + \Variance(W_1) + \Variance(W_2) + 2\Cov(W_1,W_2)\\
        &= V(A|CX) + V(B|DX) + 2\Cov(W_1,W_2),
        \end{aligned}
    \end{multline*}
    where the first equality follows from Lemma~\ref{lem:divergence-variance-classical-X}, and the second equality from Lemma~\ref{lem:divergence-variance-additivity}.
\end{proof}
    
Finally, the following more specialized lemmas will be needed in the proof of our main result:
\begin{lemma}\label{lem_var_decomp}
Let $\rho_{ACD\bar{D}X}$ be a quantum state with $X$ classical that can be written as
\[ \sum_x p_x \proj{x}_{X} \otimes \rho^{(x)}_{AC} \otimes \tau^{(x)}_{D\bar{D}} \]
with $\tau^{(x)}_{\bar{D}} = \frac{\ident_{\bar{D}}}{d_{\bar{D}}}$ for all $x$. Then,
    \[ V(ADX|C\bar{D}) \leqslant V(AX|C) + V(D|X\bar{D}) + 2 \sqrt{V(AX|C) V(D|X\bar{D})}. \]
    \end{lemma}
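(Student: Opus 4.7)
The plan is to realize $V(ADX|C\bar D)$ as the variance (under $\rho$) of a single Hermitian operator, split that operator into two commuting pieces whose individual variances reproduce $V(AX|C)$ and $V(D|X\bar D)$, and then invoke a Cauchy–Schwarz bound on the covariance cross-term. Write
\[
L := \log \rho_{ADXC\bar D} - \ident_{ADX}\otimes \log \rho_{C\bar D},
\]
so that $V(ADX|C\bar D) = \tr[\rho L^2] - (\tr[\rho L])^2$. The key input is that the assumption $\tau^{(x)}_{\bar D} = \ident_{\bar D}/d_{\bar D}$ forces $\rho_{C\bar D} = \rho_C \otimes \ident_{\bar D}/d_{\bar D}$, hence $\log \rho_{C\bar D} = \log \rho_C \otimes \ident_{\bar D} - \ident \log d_{\bar D}$. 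Combined with the $X$-block-diagonal form $\rho_{ADXC\bar D} = \sum_x p_x \proj{x}\otimes \rho^{(x)}_{AC}\otimes \tau^{(x)}_{D\bar D}$, a direct computation gives
\[
L = L_1 + L_2,\qquad
L_1 := \bigl(\log \rho_{AXC} - \ident_{AX}\otimes\log\rho_C\bigr)\otimes \ident_{D\bar D},\qquad
L_2 := \ident_{AC}\otimes \bigl(\log \rho_{DX\bar D} - \ident_D \otimes \log \rho_{X\bar D}\bigr).
\]
By definition $\tr[\rho L_1^2]-(\tr[\rho L_1])^2 = V(AX|C)$ and similarly $\tr[\rho L_2^2]-(\tr[\rho L_2])^2 = V(D|X\bar D)$.

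Next, I would exploit the classicality of $X$ to write $L_i = \sum_x \proj{x}_X \otimes L_i^{(x)}$, so $L_1$ and $L_2$ commute (they touch only the classical register $X$ in common). This justifies the usual expansion
\[
\Var_\rho(L_1+L_2) = \Var_\rho(L_1) + \Var_\rho(L_2) + 2\,\Cov_\rho(L_1,L_2),
\]
with $\Cov_\rho(L_1,L_2) = \tr[\rho L_1 L_2] - \tr[\rho L_1]\tr[\rho L_2]$. Using the product form $\rho^{(x)}_{AC}\otimes \tau^{(x)}_{D\bar D}$ of the conditional state, one computes
\[
\Cov_\rho(L_1,L_2) = \sum_x p_x\, a_x b_x - \Bigl(\sum_x p_x a_x\Bigr)\Bigl(\sum_x p_x b_x\Bigr),\qquad
a_x := \tr[\rho^{(x)}_{AC} L_1^{(x)}],\ b_x := \tr[\tau^{(x)}_{D\bar D} L_2^{(x)}],
\]
which is the classical covariance of two real random variables supported on $\{(a_x,b_x)\}_x$ with probabilities $p_x$.

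Finally, by the classical Cauchy–Schwarz inequality $|\Cov(a_X,b_X)| \leq \sqrt{\Var(a_X)\Var(b_X)}$, and operator Jensen ($\tr[\sigma M^2]\geq (\tr[\sigma M])^2$ applied to each $x$-block of $L_i$) yields $\Var(a_X) \leq \Var_\rho(L_1) = V(AX|C)$ and $\Var(b_X)\leq V(D|X\bar D)$. Combining these bounds gives
\[
V(ADX|C\bar D) \leq V(AX|C) + V(D|X\bar D) + 2\sqrt{V(AX|C)\, V(D|X\bar D)},
\]
as desired. The only genuinely delicate step is the decomposition $L=L_1+L_2$: it relies crucially on $\rho_{C\bar D}$ splitting as a product, which is where the hypothesis $\tau^{(x)}_{\bar D}=\ident_{\bar D}/d_{\bar D}$ enters. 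Once that identity is in hand, the remainder is the standard classical variance-of-a-sum argument transplanted to the (commuting) operator setting.
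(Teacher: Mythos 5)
Your proof is correct and follows essentially the same route as the paper: decompose the centered log-likelihood operator of $V(ADX|C\bar{D})$ into the two commuting pieces whose variances are $V(AX|C)$ and $V(D|X\bar{D})$ (using $\rho_{C\bar{D}} = \rho_C \otimes \ident_{\bar{D}}/d_{\bar{D}}$), expand the square, and control the cross term by Cauchy--Schwarz. The only minor difference is in that last step: you reduce the cross term to a classical covariance via the per-$x$ product structure and then use classical Cauchy--Schwarz plus a per-block Jensen inequality, whereas the paper bounds the same quantity directly with the Hilbert--Schmidt Cauchy--Schwarz inequality applied to operators built from $\sqrt{p_x}$ and square roots of the conditional states --- both yield exactly the bound $\sqrt{V(AX|C)\,V(D|X\bar{D})}$.
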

    \begin{proof}
        First, note that the chain rule together with the form of the state in the lemma gives $H(ADX|C\bar{D}) = H(AX|C) + H(D|X\bar{D})$. We can then proceed as follows:
        \begin{align*}
            V(ADX|C\bar{D})
            &= \sum_{x} p_{x} \tr\Big[ \rho^{(x)}_{AC} \otimes \tau^{(x)}_{D\bar{D}} \Big( \log p_x \rho^{(x)}_{AC} \otimes \ident_{D\bar{D}} \\
            &\qquad + \ident_{AC} \otimes \log \tau^{(x)}_{D\bar{D}} - \ident_{AD\bar{D}} \otimes \log \rho_{C} + \ident_{ACD\bar{D}} \log d_{\bar{D}}  + \ident H(ADX|C\bar{D})\Big)^2 \Big] \\
            &= \sum_{x} p_{x} \tr\Big[ \rho^{(x)}_{AC} \otimes \tau^{(x)}_{D \bar{D}} \Big( \log p_{x}\rho^{(x)}_{AC} \otimes \ident_{D\bar{D}} - \ident_{AD\bar{D}} \otimes \log \rho_{C} + \ident H(AX|C)\Big)^2 \Big] \\
            &+ \sum_{x} p_{x} \tr\Big[ \rho^{(x)}_{AC} \otimes \tau^{(x)}_{D\bar{D}} \Big( \ident_{AC} \otimes \log \tau^{(x)}_{D\bar{D}}  + \ident_{ACD\bar{D}} \log d_{\bar{D}} + \ident H(D|X\bar{D})\Big)^2 \Big] \\
            &+ 2\sum_{x} p_{x} \tr\Big[ \rho^{(x)}_{AC} \otimes \tau^{(x)}_{D\bar{D}} \Big( \log p_{x}\rho^{(x)}_{AC} \otimes \ident_{D\bar{D}} - \ident_{AD\bar{D}} \otimes \log \rho_{C} + \ident H(AX|C)\Big) \\
            &\qquad \qquad \Big( \ident_{AC} \otimes \log \tau^{(x)}_{D\bar{D}}  + \ident_{ABD\bar{D}} \log d_{\bar{D}} + \ident H(D|X\bar{D})\Big) \Big] \\
            &= V(AX|C) + V(D|X\bar{D}) + 2 \cdot \text{crossterm}, 
         \end{align*}
         where we used in the second equality the fact that $\Big( \log p_{x}\rho^{(x)}_{AC} \otimes \ident_{D\bar{D}} - \ident_{AD\bar{D}} \otimes \log \rho_{C} + \ident H(AX|C)\Big)$ and $\Big( \log \tau^{(x)}_{D\bar{D}} \otimes \ident_{AC} + \ident_{ABD\bar{D}} \log d_{\bar{D}} + \ident H(D|X\bar{D})\Big)$ commute. To get the last equality, we observe that
         \begin{align*}
         &\sum_{x} p_{x} \tr\Big[ \rho^{(x)}_{AC} \otimes \tau^{(x)}_{D \bar{D}} \Big( \log p_{x}\rho^{(x)}_{AC} \otimes \ident_{D\bar{D}} - \ident_{AD\bar{D}} \otimes \log \rho_{C} + \ident H(AX|C)\Big)^2 \Big] \\
         &= \sum_{x} p_x \tr\Big[ \rho^{(x)}_{AC} \Big( \log p_{x}\rho^{(x)}_{AC}  - \ident_{A} \otimes \log \rho_{C} + \ident H(AX|C)\Big)^2 \Big] \\
	&= V(AX | C),         
         \end{align*}
         and 
         \begin{align*}
         &\sum_{x} p_{x} \tr\Big[ \rho^{(x)}_{AC} \otimes \tau^{(x)}_{D\bar{D}} \Big( \log \tau^{(x)}_{D\bar{D}} \otimes \ident_{AC} + \ident_{ACD\bar{D}} \log d_{\bar{D}} + \ident H(D|X\bar{D})\Big)^2 \Big] \\
         &= \tr\Bigg[ \sum_{x} p_{x} \proj{x} \otimes \tau^{(x)}_{D\bar{D}} \Bigg( \log\left(\sum_{x} p_{x} \proj{x} \otimes \tau^{(x)}_{D\bar{D}}\right) - \ident_{D} \otimes \log \left( \sum_{x} p_{x} \proj{x} \otimes \tau^{(x)}_{\bar{D}} \right)\\
         &\quad\quad\quad\quad\quad + \ident_{X D \bar{D}} H(D|X\bar{D})\Bigg)^2 \Bigg] \\
         	&= V(D | \bar{D} X),         
         \end{align*}
         We are now going to bound the cross term by applying the Cauchy-Schwarz inequality. Using the cyclicity of the trace for $(\rho^{(x)}_{AC})^{1/2}$, we have
         \begin{align*}
         \text{crossterm} &= \tr \Bigg[ \left( \sum_{x} \sqrt{p_{x}} \proj{x} \otimes (\rho^{(x)}_{AC})^{1/2} \Big( \log p_{x}\rho^{(x)}_{AC} - \ident_{A} \otimes \log \rho_{C} + \ident H(AX|C)\Big) \otimes (\tau^{(x)}_{D\bar{D}})^{1/2} \right) \\
         & \qquad \cdot \left( \sum_{x} \sqrt{p_{x}} \proj{x}  \otimes (\rho^{(x)}_{AC})^{1/2} \otimes (\tau^{(x)}_{D\bar{D}})^{1/2} \Big( \log \tau^{(x)}_{D\bar{D}} + \ident_{D\bar{D}} \log d_{\bar{D}} + \ident H(D|X\bar{D})\Big)  \right)  \Bigg] \\
       	&\leqslant \sqrt{\tr(Y Y^{\dagger}) \tr(Z Z^{\dagger})} ,
         \end{align*}
         where $Y = \sum_{x} \sqrt{p_{x}} \proj{x} \otimes (\rho^{(x)}_{AC})^{1/2} \Big( \log p_{x}\rho^{(x)}_{AC} - \ident_{A} \otimes \log \rho_{C} + \ident H(AX|C)\Big) \otimes (\tau^{(x)}_{D\bar{D}})^{1/2}$ and $Z = \sum_{x} \sqrt{p_{x}} \proj{x}  \otimes (\rho^{(x)}_{A\bar{C}})^{1/2} \otimes (\tau^{(x)}_{D\bar{D}})^{1/2} \Big( \log \tau^{(x)}_{D\bar{D}} + \ident_{D\bar{D}} \log d_{\bar{D}} + \ident H(D|X\bar{D})\Big)$. We conclude by observing that $\tr(YY^{\dagger}) = V(AX|C)$ and $\tr(ZZ^{\dagger}) = V(D|X\bar{D})$.
    \end{proof}

\begin{lemma}
\label{lem_chain_rule_var}
For any state $\rho_{ABC}$, we have
   \begin{align}
        V(AC|B)_{\rho} &= V(A|B)_{\rho} + V(C|BA)_{\rho} \notag \\
        &+ \tr\left( \rho_{ABC} (\log \rho_{AB} - \log \rho_{B} + H(A|B))(\log \rho_{ABC} - \log \rho_{AB} + H(C|BA) ) \right) \notag \\
        &+ \tr\left( \rho_{ABC} (\log \rho_{ABC} - \log \rho_{AB} + H(C|BA) ) (\log \rho_{AB} - \log \rho_{B} + H(A|B)) \right).\label{eq_chain_rule_var} 
    \end{align}
\end{lemma}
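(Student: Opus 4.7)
The plan is to reduce the identity to a direct algebraic expansion driven by the chain rule for the von Neumann entropy, $H(AC|B) = H(A|B) + H(C|AB)$. Starting from the definition of the conditional entropy variance,
\begin{align*}
V(AC|B)_\rho = \tr\!\left[\, \rho_{ABC}\bigl(\log \rho_{ABC} - \log \rho_{B} + H(AC|B)\,\ident\bigr)^{2} \right],
\end{align*}
I would introduce the two operators (on $ABC$, with $\log \rho_B$ and $\log \rho_{AB}$ implicitly tensored with the missing identities)
\begin{align*}
P &:= \log \rho_{AB} - \log \rho_{B} + H(A|B)\,\ident, \\
Q &:= \log \rho_{ABC} - \log \rho_{AB} + H(C|AB)\,\ident,
\end{align*}
and observe that, thanks to the chain rule, the operator inside the square is exactly $P + Q$.

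Next, I would expand $(P+Q)^{2} = P^{2} + Q^{2} + PQ + QP$, multiply by $\rho_{ABC}$, and take the trace. The two ``diagonal'' terms are readily identified with known variances: since $P$ acts trivially on the $C$-register, tracing out $C$ first gives $\tr[\rho_{ABC} P^{2}] = \tr[\rho_{AB} P^{2}] = V(A|B)_{\rho}$, while $\tr[\rho_{ABC} Q^{2}]$ matches, term for term, the definition $V(C|BA)_\rho = V(\rho_{ABC}\|\ident_{C}\otimes \rho_{AB})$. The two remaining contributions $\tr[\rho_{ABC} PQ]$ and $\tr[\rho_{ABC} QP]$ are literally the two cross terms that appear on the right-hand side of~\eqref{eq_chain_rule_var}, so the identity follows by comparison.

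There is no real obstacle here; the single analytical ingredient is the standard chain rule for $H$, and everything else is bookkeeping of sign conventions and of the implicit identities. The only reason the cross terms survive in the statement (rather than collapsing into a cleaner expression) is that $P$ and $Q$ do not commute in general, so $PQ + QP$ cannot be reduced further without additional structure on $\rho_{ABC}$ (such as a Markov condition, which is precisely what the subsequent applications in the paper exploit).
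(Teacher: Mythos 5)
Your proof is correct and is essentially the paper's own argument: the paper dismisses this lemma with ``Direct calculation,'' and your expansion via the chain rule $H(AC|B)=H(A|B)+H(C|AB)$, the operators $P$ and $Q$, and the identification $\tr[\rho_{ABC}P^2]=V(A|B)_\rho$, $\tr[\rho_{ABC}Q^2]=V(C|BA)_\rho$ plus the two cross terms is exactly that calculation spelled out.
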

\begin{proof}
    Direct calculation.
\end{proof}

\begin{lemma}
\label{lem_det_func_var}
Let $\rho_{XAB}$ be of the form $\rho_{XAB} = \sum_{x \in \cX} \proj{x}_{X} \otimes \rho_{AB, x}$ with $\tr(\rho_{AB, x} \rho_{AB, x'}) = 0$ when $x \neq x'$.
  Then we have
   \begin{align}
        V(AX|B)_{\rho} &= V(A|B)_{\rho} \ . \label{eq_det_func_var}
    \end{align}
\end{lemma}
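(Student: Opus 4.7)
The hypothesis $\tr(\rho_{AB,x}\rho_{AB,x'}) = 0$ for $x \neq x'$ says the supports $\Pi_x$ of the blocks $\rho_{AB,x}$ are mutually orthogonal projectors on $AB$; in particular the classical label $X$ is a deterministic function of the $AB$ system. I therefore expect both $H(AX|B)$ and $V(AX|B)$ to coincide with their $X$-free counterparts, and the proof will be a direct computation making this precise.

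Write $p_x = \tr(\rho_{AB,x})$. Then $\rho_{XAB} = \sum_x \proj{x}\otimes\rho_{AB,x}$ is block-diagonal in $X$, while $\rho_{AB} = \sum_x \rho_{AB,x}$ is a direct sum on $\Supp(\rho_{AB}) = \bigoplus_x \Supp(\Pi_x)$ by orthogonality. The $B$-marginals agree, $\tr_{AX}[\rho_{XAB}] = \sum_x \tr_A[\rho_{AB,x}] = \rho_B$, so $\log\rho_B$ is the same operator in both problems. Functional calculus then gives
\[ \log \rho_{XAB} = \sum_x \proj{x}\otimes \log \rho_{AB,x}, \qquad \log \rho_{AB} = \sum_x \log \rho_{AB,x}, \]
where each $\log\rho_{AB,x}$ is supported on $\Pi_x$, so that all cross products $\log\rho_{AB,x}\log\rho_{AB,x'}$ with $x\neq x'$ vanish.

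With these decompositions in hand, expand $V(AX|B) = \tr[\rho_{XAB}(\log\rho_{XAB} - \ident_{AX}\otimes\log\rho_B)^2] - H(AX|B)^2$. Three trace terms appear, and in each the sum over $x$ collapses by pairwise orthogonality of the $\Pi_x$: $\tr[\rho_{XAB}(\log\rho_{XAB})^2] = \sum_x \tr[\rho_{AB,x}(\log\rho_{AB,x})^2] = \tr[\rho_{AB}(\log\rho_{AB})^2]$; the mixed term reduces analogously to $\tr[\rho_{AB}\log\rho_{AB}(\ident_A\otimes\log\rho_B)]$; and $\tr[\rho_{XAB}(\ident_{AX}\otimes\log\rho_B)^2]$ reduces to $\tr[\rho_B(\log\rho_B)^2]$, matching the $V(A|B)$ version. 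The same collapse applied to the first moment yields $H(AX|B) = -\tr[\rho_{XAB}\log\rho_{XAB}] + \tr[\rho_B\log\rho_B] = H(A|B)$, so subtracting the squared entropies delivers the claimed identity $V(AX|B) = V(A|B)$.

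There is no real conceptual obstacle; the only point that deserves attention is the bookkeeping of log-supports when writing $\log\rho_{AB} = \sum_x \log\rho_{AB,x}$, which is valid precisely because the $\Pi_x$ are mutually orthogonal and their direct sum equals $\Supp(\rho_{AB})$. Once this is observed, each of the three quadratic terms and the entropy term match on the nose, and the lemma follows.
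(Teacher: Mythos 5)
Your proof is correct. The key facts you use --- that $\tr(\rho_{AB,x}\rho_{AB,x'})=0$ for positive semidefinite blocks forces mutually orthogonal supports, so that $\log\rho_{XAB}=\sum_x \proj{x}\otimes\log\rho_{AB,x}$ and $\log\rho_{AB}=\sum_x\log\rho_{AB,x}$ (logarithms taken on supports), with all cross products vanishing --- are exactly the orthogonality collapses that make the statement work, and your moment-matching argument ($\tr[\rho_{XAB}(\log\rho_{XAB}-\ident\otimes\log\rho_B)^k]=\tr[\rho_{AB}(\log\rho_{AB}-\ident\otimes\log\rho_B)^k]$ for $k=1,2$, with the same $\rho_B$ in both) is sound. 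The paper organizes the same computation differently: it first invokes its variance chain rule (Lemma~\ref{lem_chain_rule_var}) to write $V(AX|B)_\rho = V(A|B)_\rho + V(X|BA)_\rho + \text{cross terms}$, and then uses the block decompositions of the logarithms to show $V(X|BA)_\rho=0$ and that both cross terms vanish. So the underlying algebra is the same, but your route is self-contained and avoids the chain-rule lemma, directly identifying the two variances as variances of the same random quantity; the paper's route is more modular, reusing a lemma it needs anyway and isolating the intuition that $X$ carries no extra conditional surprisal ($H(X|BA)=0$ and $V(X|BA)=0$) when it is determined by the support structure of $AB$. Either presentation is acceptable; just make sure, as you note, to keep the support conventions for the logarithms explicit so that $\log\rho_{AB}=\sum_x\log\rho_{AB,x}$ is justified.
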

In other words, if the states $\rho_{AB,x}$ are orthogonal for different values of $x$, then this effectively makes the subsystem $X$ redundant for the purpose of computing the conditional entropy variance.
\begin{proof}
Using Lemma~\ref{lem_chain_rule_var}, it suffices to show only the first term of~\eqref{eq_chain_rule_var} remains. In fact, we have $H(X|BA) = 0$ and
\begin{align*}
&V(X|BA)_{\rho} = \tr\left(\rho_{XAB} (\log \rho_{XAB} - \log \rho_{AB})^2 \right) \\
&= \sum_{x} \tr\left( \proj{x}_{X} \otimes \rho_{AB, x} \left(\sum_{x'} \proj{x'}_{X} \otimes \log \rho_{AB, x'} - \ident_{X} \otimes \sum_{x'} \log \rho_{AB,x'}\right)^2 \right) \\
&= \sum_{x} \tr\left( \proj{x}_{X} \otimes \rho_{AB, x} \left( \sum_{x'} (\proj{x'}_{X} - \ident_{X})^2 \otimes \log^2 \rho_{AB, x'} \right) \right) \\
&= \sum_{x} \tr\left( \proj{x}_{X} (\proj{x}_{X} - \ident_{X})^2 \otimes \rho_{AB, x}  \log^2 \rho_{AB, x} \right) \\
&= 0 \ .
\end{align*}
In addition, the other terms are also zero:
\begin{align*}
&\tr\left(\rho_{XAB} (\log \rho_{AB} - \log \rho_{B} + H(A|B))(\log \rho_{XAB} - \log \rho_{AB})\right) \\
&= \sum_{x} \tr\Bigg( (\proj{x} \otimes \rho_{AB,x}) (\ident_{X} \otimes (\log \rho_{AB} - \log \rho_{B} + \ident_{AB} H(A|B))) (\proj{x} \otimes \log \rho_{AB,x}\\
&\quad\quad\quad\quad- \ident_{X} \otimes \sum_{x'} \log \rho_{AB,x'}) \Bigg) \\
&= \sum_{x} \tr\left( \rho_{AB,x} (\log \rho_{AB} - \log \rho_{B} + \ident_{AB} H(A|B)) (\sum_{x' \neq x} \log \rho_{AB,x'}) \right) \\
&= 0 \ ,
\end{align*}
using the orthogonality of $\rho_{AB,x}$ and $\rho_{AB,x'}$, and
\begin{multline*}
    \tr\left[ \rho_{XAB} (\log \rho_{XAB} - \log \rho_{AB})(\log \rho_{AB} - \log \rho_B + H(A|B))   \right]\\
    \begin{aligned}
        &= \sum_x \tr\left[ (\proj{x} \otimes \rho_{AB,x}) (\proj{x} \otimes \rho_{AB,x} - \ident_X \otimes \rho_{AB}) (\log \rho_{AB} - \log \rho_B + \ident_{XAB} H(A|B)) \right]\\
        &= \sum_x \tr\left[ (\proj{x} \otimes \rho_{AB,x}) (\proj{x} \otimes \rho_{AB,x} - \ident_X \otimes \rho_{AB,x}) (\log \rho_{AB} - \log \rho_B + \ident_{XAB} H(A|B)) \right]\\
        &= 0,
    \end{aligned}
\end{multline*}
where we have used the fact that $\rho_{AB,x} \rho_{AB} = \rho_{AB,x}^2$ by the orthogonality conditions.
\end{proof}

\section{Continuity bounds for Rényi divergences}\label{sec:renyi-vs-von-neumann}

A critical step in the proof is an explicit continuity bound for $D_{\alpha}$ when $\alpha$ approaches $1$. One such bound is given~\cite[Section 4.2.2]{Tom15book}. However, this bound does not give explicit values for the remainder term. The following lemma computes an explicit remainder term for the case of classical probability distributions. As in~\cite[Section 4.4.2]{Tom15book}, we will then apply this lemma to Nussbaum-Szko\l{}a distributions to get a similar result for the Petz divergence $D'_{\alpha}$ between quantum states.

\begin{lemma}
\label{lem_HalphaH_second_order_new}
Let $\rho$ be a density operator and $\sigma$ be a not necessarily normalized positive semidefinite operator. Let $\alpha > 1$ and $\mu \in (0,1)$. Then, we have that
\[ D_{\alpha}(\rho \| \sigma) \leqslant D'_{\alpha}(\rho \| \sigma) \leqslant D(\rho \| \sigma) + \frac{(\alpha-1) \ln 2}{2} V(\rho \| \sigma) + (\alpha-1)^2 K_{\rho,\sigma}, \]
    where
    \[ K_{\rho,\sigma}(\alpha,\mu) =  \frac{1}{6 \mu^3 \ln 2} 2^{(\alpha-1)(D'_{\alpha}(\rho \| \sigma) - D(\rho \| \sigma))} \ln^3\left( 2^{(\alpha + \mu -1)(D'_{\alpha+\mu}(\rho \| \sigma) - D(\rho \| \sigma))} + e^2 \right). \]
\end{lemma}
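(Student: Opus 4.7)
The first inequality $D_\alpha \leqslant D'_\alpha$ is the standard Araki--Lieb--Thirring inequality. For the second inequality, the plan is to reduce to classical distributions via the Nussbaum--Szko\l{}a construction and then Taylor-expand a cumulant generating function. The Nussbaum--Szko\l{}a distributions $(P, Q)$ associated with $(\rho, \sigma)$ satisfy $D'_\beta(\rho\|\sigma) = D_\beta(P\|Q)$ for every $\beta > 0$ where defined, $D(\rho\|\sigma) = D(P\|Q)$, and $V(\rho\|\sigma) = V(P\|Q)$, so it suffices to prove the inequality in the classical case.

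In the classical setting, let $Z = \log_2(P/Q)$ under $P$ and introduce the natural-log cumulant generating function $g(\lambda) = \ln\mathbb{E}_P[e^{\lambda Z}]$. Direct calculation gives $g(0) = 0$, $g'(0) = D$, $g''(0) = V$, and $D'_\alpha(P\|Q) = g(\lambda)/\lambda$ with $\lambda = (\alpha-1)\ln 2$. Writing $P_\eta$ for the exponentially tilted measure $dP_\eta/dP \propto e^{\eta Z}$, one has $g''(\eta) = \mathrm{Var}_{P_\eta}(Z)$ and $g'''(\eta) = \mathbb{E}_{P_\eta}[(Z - g'(\eta))^3]$. Taylor's theorem with Lagrange remainder then produces some $\xi \in (0, \lambda)$ with
\[ D'_\alpha = D + \tfrac{(\alpha-1)\ln 2}{2} V + \tfrac{(\alpha-1)^2 (\ln 2)^2}{6} g'''(\xi), \]
reducing the problem to upper-bounding $g'''(\xi)$.

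The third-moment term is controlled via the elementary pointwise inequality $y_+^3 \leqslant \ln^3(e^y + e^2)$, easy to verify by splitting into $y \geqslant 2$, $0 \leqslant y < 2$, and $y < 0$. Applying it with $y = \mu\ln 2 \cdot (Z - g'(\xi))$, using $g'''(\xi) = \mathbb{E}_{P_\xi}[(Z-g'(\xi))^3] \leqslant \mathbb{E}_{P_\xi}[(Z-g'(\xi))_+^3]$, and invoking Jensen's inequality (valid because $\ln^3$ is concave on $[e^2, \infty)$ and the argument $e^y + e^2$ always lies there) yields
\[ g'''(\xi) \leqslant \frac{1}{\mu^3(\ln 2)^3} \ln^3\!\bigl(\mathbb{E}_{P_\xi}[e^{\mu\ln 2 (Z-g'(\xi))}] + e^2\bigr). \]
The tilted moment equals $\exp\bigl(h(\xi+\mu\ln 2) - h(\xi) - \mu\ln 2 \cdot h'(\xi)\bigr)$, where $h(\lambda) := g(\lambda) - \lambda D$ is nonnegative, convex, and nondecreasing on $[0,\infty)$ with $h(0) = h'(0) = 0$; nonnegativity of $h$ and $h'$, together with monotonicity in $\xi$, then bound it by $h(\lambda+\mu\ln 2) = (\alpha+\mu-1)(D'_{\alpha+\mu}-D)\ln 2$, producing $2^{(\alpha+\mu-1)(D'_{\alpha+\mu}-D)}$ inside the $\ln^3$.

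The main obstacle is the remaining bookkeeping to recover the exact form of $K_{\rho,\sigma}$, in particular the prefactor $2^{(\alpha-1)(D'_\alpha - D)} = e^{g(\lambda)-\lambda D}$ outside the $\ln^3$. This factor arises naturally from tracking the normalization $e^{g(\xi)}$ of the tilt $P_\xi$ uniformly over $\xi \in [0, \lambda]$, using $h(\xi) \leqslant h(\lambda)$ by convexity of $g$ together with $g'(0) = D$. All the ingredients -- Taylor's Lagrange remainder, the elementary cube inequality $y_+^3 \leqslant \ln^3(e^y+e^2)$, convexity of $h$, and concavity of $\ln^3$ on $[e^2,\infty)$ -- are elementary, and the technical effort consists primarily of matching the constants in the stated form of $K_{\rho,\sigma}$.
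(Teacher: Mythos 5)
Your proposal is correct, and it reaches the paper's bound by a route that differs in its technical core. The paper also reduces to the classical case via Nussbaum--Szko\l{}a and Taylor-expands to second order, but it expands the \emph{moment} function $\nu \mapsto \mathbb{E}[Z^{\nu}]$ for the normalized likelihood ratio $Z = e^{-\hat D(P\|Q)}P(X)/Q(X)$, converts back with $\ln(1+x)\leqslant x$, and bounds the Lagrange remainder $\sup_{0<\gamma\leqslant\nu}\mathbb{E}[Z^{\gamma}\ln^3 Z]$ using the same two tricks you use ($\ln^3 Z = \mu^{-3}\ln^3(Z^{\mu}) \leqslant \mu^{-3}\ln^3(Z^{\mu}+e^2)$, then concavity of $t\mapsto\ln^3(t+e^2)$ and Jensen with weights $Z^{\gamma}/\mathbb{E}[Z^{\gamma}]$); the factor $\mathbb{E}[Z^{\gamma}]\leqslant 2^{(\alpha-1)(D'_{\alpha}-D)}$ produced by that reweighting is exactly the prefactor in $K_{\rho,\sigma}$. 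You instead expand the cumulant generating function and identify the remainder as the third central moment of an exponentially tilted measure $P_{\xi}$, then control it with $y_+^3\leqslant\ln^3(e^y+e^2)$, Jensen, and convexity of $h(\lambda)=g(\lambda)-\lambda D$. Your route is a bit cleaner and in fact yields a \emph{stronger} bound: since the tilted measure is normalized, no prefactor appears, and your remainder is $\frac{(\alpha-1)^2}{6\mu^3\ln 2}\ln^3\bigl(2^{(\alpha+\mu-1)(D'_{\alpha+\mu}-D)}+e^2\bigr)$, which is at most $(\alpha-1)^2K_{\rho,\sigma}$ simply because $2^{(\alpha-1)(D'_{\alpha}-D)}\geqslant 1$ (monotonicity of $D'_{\alpha}$ in $\alpha$, which holds also for subnormalized $\sigma$ since $g(0)=0$ and $g$ is convex). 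So the ``main obstacle'' you flag in your last paragraph is not an obstacle at all: you should not try to ``recover'' the prefactor by tracking the normalization $e^{g(\xi)}$ of the tilt (there is no stray normalization to track, precisely because $P_{\xi}$ is normalized); just observe that your bound is dominated by the stated one. With that one-line remark in place of the final paragraph, your argument is a complete and valid proof of the lemma.
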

\begin{proof}
    As mentioned above, we start by proving the statement for classical probability distributions. For this proof, it will be more convenient for us to do everything using natural logarithms; we will therefore use the ``hatted'' quantities $\hat{D}$ and $\hat{V}$ for all relative entropies and variances to denote their counterparts defined using the natural logarithm. Let $P$ be a probability distribution and $Q$ be a not necessarily normalized distribution. Define the random variable $X$ with distribution $P$, and let $Z = e^{-\hat{D}(P \| Q)} \frac{P(X)}{Q(X)}$. Note that for any $\nu > 0$, we have
\begin{align}
\mbE[Z^\nu] 
&= e^{-\nu \hat{D}(P \| Q)} \sum_{x} P(x)^{1+\nu} Q(x)^{-\nu} \\
&= e^{-\nu (\hat{D}(P \| Q) - \hat{D}_{1+\nu}(P \| Q))}.
\end{align}
Now, letting $\nu = \alpha - 1$, we have
\begin{align}
    \hat{D}_{\alpha}(P \| Q) &= \frac{1}{\nu} \ln\left( \mbE[Z^\nu]\right) + \hat{D}(P \| Q).
    \end{align}
Applying Taylor's inequality to the function $\nu \mapsto \mbE[Z^{\nu}]$ we have
\begin{align}
\mbE[Z^{\nu}] \leq 1 + \nu \mbE[\ln Z] + \frac{\nu^2}{2} \mbE[\ln^2 Z] + \frac{\nu^3}{6} \sup_{0 < \gamma \leqslant \nu} \mbE[Z^{\gamma} \ln^3 Z].
\end{align}
Using the fact that $\mbE[\ln Z] = 0$ and
\begin{align}
    \mbE[\ln^2 Z] &= \sum_{x} P(x) \left(\ln \frac{P(x)}{Q(x)} - \hat{D}(P \| Q) \right)^2 \\
    &= \hat{V}(P \| Q) ,
\end{align}
together with the inequality $\ln(1+x) \leqslant x$, we get
\begin{align}
    \hat{D}_{\alpha}(P \| Q) &\leqslant \hat{D}(P \| Q) + \frac{\nu}{2} \hat{V}(P \| Q) + \frac{\nu^2}{6} \sup_{0 < \gamma \leqslant \nu} \mbE[Z^{\gamma} \ln^3 Z].
\end{align}
We now need to bound the remainder term. We want to use the concavity of $\ln^3$, but it is only concave on $[e^2, \infty)$. Hence, we start by using the fact that $\ln^3$ is nondecreasing and $Z^{\gamma} \geqslant 0$ to get
\begin{align}
    \mbE[Z^{\gamma} \ln^3 Z] &= \frac{1}{\mu^3} \mbE[Z^{\gamma} \ln^3(Z^{\mu})]\\
    &\leqslant \frac{1}{\mu^3} \mbE[Z^{\gamma} \ln^3 (Z^{\mu} + e^2)]\\
    &= \frac{1}{\mu^3} \mbE[Z^{\gamma}] \frac{\mbE[Z^{\gamma} \ln^3 (Z^{\mu} + e^2)]}{\mbE[Z^{\gamma}]},
    \end{align}
    for any $\mu \in (0,1]$. Then we use the concavity of the function $t \mapsto \ln^3(t+e^2)$ on $[0,\infty)$ and get
    \begin{align}
        \mbE[Z^{\gamma} \ln^3 Z]  &\leqslant \frac{1}{\mu^3} \mbE[Z^{\gamma}] \ln^3\left( \frac{\mbE[Z^{\gamma} (Z^{\mu} + e^2)]}{\mbE[Z^{\gamma}]} \right)\\
        &= \frac{1}{\mu^3} \mbE[Z^{\gamma}] \ln^3\left( \frac{\mbE[Z^{\mu+\gamma}]}{\mbE[Z^{\gamma}]} + e^2 \right)\\
        &= \frac{1}{\mu^3} e^{\gamma ( \hat{D}_{1+\gamma}(P \| Q) - \hat{D}(P \| Q))} \ln^3\left( \frac{e^{(\mu+\gamma)(\hat{D}_{1+\mu+\gamma}(P \| Q) - \hat{D}(P\|Q))}}{e^{\gamma (\hat{D}_{1+\gamma}(P \| Q) - \hat{D}(P\|Q))}} + e^2 \right) \\
        &\leqslant \frac{1}{\mu^3} e^{\gamma ( \hat{D}_{1+\gamma}(P \| Q) - \hat{D}(P \| Q))} \ln^3\left(e^{(\mu+\gamma)(\hat{D}_{1+\mu+\gamma}(P \| Q) - \hat{D}(P\|Q))} + e^2 \right) \ ,
\end{align}
where we used the fact that $\hat{D}_{1+\gamma}(P \| Q) - \hat{D}(P\|Q) \geqslant 0$. As this last expression is nondecreasing in $\gamma$, we get that
\begin{align}\label{eqn:bound-Z-gamma-ln3-Z}
    \sup_{0 < \gamma \leqslant \nu} \mbE[Z^{\gamma} \ln^3 Z] \leqslant \frac{1}{\mu^3} e^{\nu ( \hat{D}_{1+\nu}(P \| Q) - \hat{D}(P \| Q))} \ln^3\left(e^{(\mu+\nu)(\hat{D}_{1+\mu+\nu}(P \| Q) - \hat{D}(P\|Q))} + e^2 \right).
\end{align}
This proves that 
\begin{align}
    \hat{D}_{\alpha}(P \| Q) &\leqslant \hat{D}(P \| Q) + \frac{(\alpha - 1)}{2} \hat{V}(P \| Q) +  \frac{(\alpha - 1)^2}{6} (\text{RHS of \eqref{eqn:bound-Z-gamma-ln3-Z}})
\end{align}
and therefore, after converting back to base 2, that
\begin{align}
\label{eq:continuity_alpha_classical}
D_{\alpha}(P \| Q) &\leqslant D(P \| Q) + \frac{(\alpha - 1) \ln 2}{2} V(P \| Q) +  (\alpha - 1)^2 K_{P,Q}(\alpha,\mu)
\end{align}
with $K_{P,Q}(\alpha,\mu) = \frac{1}{6 \mu^3 \ln 2} 2^{(\alpha - 1) ( D_{\alpha}(P \| Q) - D(P \| Q))} \ln^3\left(2^{(\alpha+\mu - 1) (D_{\alpha+\mu}(P \| Q) - D(P\|Q))} + e^2 \right)$.

Now in order to get the general statement, we use the fact that the Petz divergence between states $\rho$ and $\sigma$ is equal to the $\alpha$-divergence of Nussbaum-Szko\l{}a distributions~\cite{ns09}, i.e., for all $\alpha \geqslant 0$
\begin{align*}
D'_{\alpha}(\rho \| \sigma) = D_{\alpha}(P^{[\rho, \sigma]} \| Q^{[\rho, \sigma]}) \ ,
\end{align*}
where $P^{[\rho, \sigma]}(x,y) = \lambda_x | \braket{e_x}{f_y} |^2$ and $Q^{[\rho, \sigma]}(x,y) = \mu_y | \braket{e_x}{f_y} |^2$ where $\{\lambda_x, \ket{e_x}\}_{x}$ are the eigenvalues and eigenvectors of $\rho$ and $\{\mu_y, \ket{f_y}\}_{y}$ are the eigenvalues and eigenvectors of $\sigma$. Note that $P^{[\rho, \sigma]}$ and $Q^{[\rho, \sigma]}$  only depend on $\rho$ and $\sigma$ and not on $\alpha$, and $P^{[\rho, \sigma]}$ and $Q^{[\rho, \sigma]}$ have the same normalization as $\rho$ and $\sigma$, respectively. Note that by taking the limit $\alpha \to 1$, we also get $D(\rho \| \sigma) = D(P^{[\rho, \sigma]} \| Q^{[\rho, \sigma]})$. In addition, by taking the derivative at $\alpha = 1$, we get that $V( \rho \| \sigma) = V(P \| Q)$~\cite[Proposition 4.9]{Tom15book}. Applying inequality~\eqref{eq:continuity_alpha_classical} to $P^{[\rho, \sigma]}$ and $Q^{[\rho, \sigma]}$, we get the desired result.
\end{proof}
To obtain a quantitative continuity for $H_{\alpha}(A|B)_{\rho}$ at $\alpha = 1$, it suffices to use Lemma~\ref{lem_HalphaH_second_order_new} with $\rho = \rho_{AB}$, $\sigma = \id_{A} \otimes \rho_{B}$ together with the fact that $D_{\alpha}(\rho \| \sigma) \leqslant D'_{\alpha}(\rho \| \sigma)$. In addition, to simplify the statement, we set $\mu = 2 - \alpha$.
\begin{corollary}
    \label{cor:continuity-bound-halpha_new}
Let $\rho_{AB}$ be a density operator. Then we have for any $\alpha \in (1,2)$,
\begin{equation*}
    H_{\alpha}(A|B)_{\rho} \geqslant H(A|B)_{\rho} - \frac{(\alpha - 1) \ln 2}{2} V(A|B)_{\rho} - (\alpha - 1)^2 K(\alpha) \ ,
\end{equation*}
where $K(\alpha) = \frac{1}{6 (2-\alpha)^3 \ln 2} \cdot 2^{(\alpha - 1)(- H'_{\alpha}(A | B)_{\rho} + H(A | B)_{\rho} )} \ln^3\left( 2^{- H'_{2}(A | B)_{\rho} + H(A | B)_{\rho}} + e^2 \right)$.
\end{corollary}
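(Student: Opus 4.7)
The corollary is a direct specialization of Lemma~\ref{lem_HalphaH_second_order_new}, so the plan is essentially substitution and bookkeeping. I would apply Lemma~\ref{lem_HalphaH_second_order_new} to $\rho = \rho_{AB}$ and $\sigma = \ident_A \otimes \rho_B$, which is positive semidefinite (not necessarily normalized, but this is allowed). This immediately gives
\[ D_{\alpha}(\rho_{AB} \| \ident_A \otimes \rho_B) \leqslant D(\rho_{AB} \| \ident_A \otimes \rho_B) + \frac{(\alpha-1) \ln 2}{2} V(\rho_{AB} \| \ident_A \otimes \rho_B) + (\alpha-1)^2 K_{\rho_{AB}, \ident_A \otimes \rho_B}(\alpha,\mu). \]
By Definitions~\ref{def_sandwichedentropy} and the definition of $V(A|B)$, the first two terms on the right-hand side are exactly $-H(A|B)_{\rho}$ and $\frac{(\alpha-1)\ln 2}{2} V(A|B)_{\rho}$ (with the appropriate sign once we negate everything).

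Next, I would specialize the free parameter $\mu \in (0,1)$ to $\mu = 2 - \alpha$, which is indeed in $(0,1)$ when $\alpha \in (1,2)$. With this choice, $\alpha + \mu = 2$, so the inner exponent in $K_{\rho_{AB}, \ident_A \otimes \rho_B}$ involves $D'_{2}(\rho_{AB} \| \ident_A \otimes \rho_B) - D(\rho_{AB} \| \ident_A \otimes \rho_B) = -H'_{2}(A|B)_{\rho} + H(A|B)_{\rho}$, and the outer factor involves $D'_{\alpha}(\rho_{AB} \| \ident_A \otimes \rho_B) - D(\rho_{AB} \| \ident_A \otimes \rho_B) = -H'_{\alpha}(A|B)_{\rho} + H(A|B)_{\rho}$. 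Substituting these yields precisely the expression for $K(\alpha)$ stated in the corollary.

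Finally, I would negate both sides of the resulting inequality. Since $H_{\alpha}(A|B)_{\rho} = -D_{\alpha}(\rho_{AB} \| \ident_A \otimes \rho_B)$, the upper bound on $D_{\alpha}$ becomes a lower bound on $H_{\alpha}(A|B)_{\rho}$, matching the stated inequality. There is no genuine obstacle here beyond keeping signs straight and verifying that the values of $\alpha$ and $\mu$ used lie in the ranges required by Lemma~\ref{lem_HalphaH_second_order_new} (which they do for $\alpha \in (1,2)$ and $\mu = 2-\alpha \in (0,1)$). The step one might double-check is that Lemma~\ref{lem_HalphaH_second_order_new} is applicable when $\sigma$ is not normalized, which is explicitly the setting of that lemma.
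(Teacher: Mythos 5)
Your proposal is correct and follows exactly the paper's own (one-line) argument: apply Lemma~\ref{lem_HalphaH_second_order_new} with $\rho = \rho_{AB}$, $\sigma = \ident_A \otimes \rho_B$, use $D_{\alpha} \leqslant D'_{\alpha}$, set $\mu = 2-\alpha$, and translate divergences into conditional entropies. The bookkeeping of signs and parameter ranges is as you describe, so there is nothing to add.
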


\section{Entropy accumulation with improved second order}\label{sec:accumulation}

We start by recalling the framework for the entropy accumulation theorem~\cite{dfr16}. For $i \in \{1,\dots,n\}$, let $\cM_i$ be a TPCP map from $R_{i-1}$ to $X_i A_i B_i R_i$, where $A_i$ is finite-dimensional and where $X_i$ represents a classical value from an alphabet $\cX$ that is  determined by $A_i$ and $B_i$ together. More precisely, we require that, $\cM_{i} = \cT_{i} \circ \cM'_i$ where $\cM'_{i}$ is an arbitrary TPCP map from $R_{i-1}$ to $A_{i} B_{i} R_{i}$  and $\cT_i$ is a TPCP map from $A_{i}B_{i}$ to $X_{i} A_i B_i$ of the form
\begin{align}
\label{eq_form_extraction}
\cT_{i}(W_{A_iB_i}) = \sum_{y \in \cY , z \in \cZ} (\Pi_{A_i,y} \otimes \Pi_{B_i,z}) W_{A_i B_i} (\Pi_{A_i,y} \otimes \Pi_{B_i,z}) \otimes \proj{t(y,z)}_{X_i} \ ,
\end{align}
where $\{\Pi_{A_i, y}\}$ and $\{\Pi_{B_i, z}\}$ are families of mutually orthogonal projectors on $A_i$ and $B_i$, and where $t : \cY \times \cZ  \to \cX$ is a deterministic function. 

The entropy accumulation theorem stated below will hold for states of the form 
\begin{align} \label{eq_rhomap}
  \rho_{A_1^n B_1^n X_1^n E} = \tr_{R_n} ({\cM_n \circ \dots \circ \cM_1} \otimes \cI_E)(\rho^0_{R_0 E})
\end{align}
where $\rho^0_{R_0 E} \in  \mathrm{D}(R_0 \otimes E)$ is a density operator on $R_0$ and an arbitrary system $E$.  In addition, we require that the Markov conditions 
\begin{align} \label{eq_Markovgen}
  A_1^{i-1} \leftrightarrow B_1^{i-1}  E \leftrightarrow B_{i}
\end{align}
be satisfied for all $i \in \{1, \ldots, n\}$; i.e.~$I(A_1^{i-1}; B_i | B_1^{i-1}E)_{\rho} = 0$.

Let $\mbP$ be the set of probability distributions on the alphabet $\cX$ of $X_i$, and let $R$ be a system isomorphic to $R_{i-1}$. For any $q \in \mbP$ we define the set of states 
\begin{align}
\label{eq:def_sigma}
  \Sigma_i(q) = \bigl\{\nu_{X_i A_i B_i R_i R} = (\cM_i \otimes \cI_R)(\omega_{R_{i-1} R}) : \quad  \omega \in \mathrm{D}(R_{i-1} \otimes R) \text{ and } \nu_{X_i} = q \bigr\}  \ ,
\end{align}
where $\nu_{X_i}$ denotes the probability distribution over $\cX$ with the probabilities given by $\bra{x} \nu_{X_i} \ket{x}$. In other words, $\Sigma_i(q)$ is the set of states that can be produced at the output of the channel $\mathcal{M}_i$ and that have a reduced state on the $X_i$ system equal to $q$.
\begin{definition} \label{def_tradeoff}
  A  real function $f$ on $\mbP$ is called a \emph{min-tradeoff function} (or simply tradeoff function for short) for $\cM_i$ if it  satisfies 
\begin{align*}
    f(q) \leqslant \min_{\nu \in \Sigma_i(q)} H(A_i | B_i R)_{\nu} \ .
\end{align*}
Note that if $\Sigma_i(q) = \emptyset$, then $f(q)$ can be chosen arbitrarily.
Our result will depend on some simple properties of the tradeoff function, namely the maximum and minimum of $f$, the minimum of $f$ over valid distributions, and the maximum variance of $f$:
\begin{align*}
\Max{f} &:= \max_{q \in \mathbb{P}} f(q) \\
\Min{f} &:= \min_{q \in \mathbb{P}} f(q) \\
\MinSigma{f} &:= \min_{q : \Sigma_i(q) \neq \emptyset} f(q) \\
\Var{f} &:= \max_{ q : \Sigma_i(q) \neq \emptyset} \sum_{x \in \cX} q(x) f(\delta_{x})^2 - \left(\sum_{x \in \cX} q(x) f(\delta_x) \right)^2 \ ,
\end{align*}
\end{definition}
where $\delta_x$ stands for the distribution with all the weight on element $x$.

We write $\freq{X_1^n}$ for the distribution on $\cX$ defined by $\freq{X_1^n}(x) = \frac{|\{i \in \{1,\dots,n\} : X_i = x\}|}{n}$.  We also recall that in this context, an event $\Omega$ is defined by a subset of $\cX^n$ and we write $\rho[\Omega] = \sum_{x_1^n \in \Omega}\tr(\rho_{A_1^n B_1^n E, x_1^n})$ for the probability of the event $\Omega$ and 
\begin{align*}
  \rho_{X_1^n A_1^n B_1^n E | \Omega} = \frac{1}{\rho[\Omega]} \sum_{x_1^n \in \Omega} \proj{x_1^n} \otimes \rho_{A_1^n B_1^n E, x_1^n}
\end{align*}
for the state conditioned on~$\Omega$.

\begin{theorem}\label{thm:entropyaccumulationext}
    Let $\cM_1,\dots,\cM_n$ and  $\rho_{A_1^n B_1^n X_1^n E}$ be such that~\eqref{eq_rhomap} and the Markov conditions~\eqref{eq_Markovgen} hold, let  $h \in \mathbb{R}$, let $f$ be an affine min-tradeoff function for $\cM_1,\dots,\cM_n$,  and let  $\varepsilon \in (0,1)$. Then, for any event $\Omega \subseteq \cX^n$ that implies $f(\freq{X_1^n}) \geqslant h$,
      \begin{align}
        \label{eqn:eat-min}H_{\min}^{\varepsilon}(A_1^n | B_1^n E)_{\rho_{|\Omega}} & > n h -  c \sqrt{n} - c'
  \end{align}
  holds for 
  \begin{align*}
  c &= \sqrt{2 \ln 2} \left( \log (2d_A^2+1) + \sqrt{2 + \Var{f}} \right) \sqrt{ 1-2\log (\varepsilon \rho[\Omega])} \\
  c' &= \frac{35(1-2\log (\varepsilon \rho[\Omega]))}{\left( \log(2 d_A^2 + 1) + \sqrt{2+\Var{f}} \right)^2} 2^{2 \log d_A + \Max{f} -  \MinSigma{f} } \ln^3\left( 2^{2 \log d_A + \Max{f} -  \MinSigma{f}} + e^2 \right)
  \end{align*}
   where $d_A$ is the maximum dimension of the systems~$A_i$.  
    \end{theorem}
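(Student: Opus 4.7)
The plan is to follow the structure of the entropy accumulation proof of~\cite{dfr16}, but to use the sharper continuity bound from Corollary~\ref{cor:continuity-bound-halpha_new} in place of the cruder $H_{\alpha}$-to-$H$ estimate used there. Concretely, I first apply the standard smoothing inequality $H_{\min}^{\varepsilon}(A_1^n|B_1^nE)_{\rho_{|\Omega}} \geqslant H_{\alpha}^{\uparrow}(A_1^n|B_1^nE)_{\rho_{|\Omega}} - \frac{g_0(\varepsilon)}{\alpha-1}$ together with the standard conditioning relation $H_{\alpha}^{\uparrow}(\cdot)_{\rho_{|\Omega}} \geqslant H_{\alpha}^{\uparrow}(\cdot)_{\rho} - \frac{\alpha}{\alpha-1}\log\frac{1}{\rho[\Omega]}$, reducing the problem to a lower bound on $H_{\alpha}^{\uparrow}(A_1^n|B_1^nE)_{\rho}$ for $\alpha$ just above $1$. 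Because each $X_i$ is a deterministic function of $A_iB_i$ after the pinching map~\eqref{eq_form_extraction}, Lemma~\ref{lem_det_func_var} (and its von Neumann / $H_{\alpha}$ analogs) lets me adjoin $X_1^n$ on the entropy side at no cost.

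To bring the affine tradeoff function into the game, I borrow the auxiliary-register construction of~\cite{dfr16}: each channel $\cM_i$ is extended to a channel $\bar{\cM}_i$ that additionally outputs a classical pair $(D_i,\bar{D}_i)$ of states of the form $\sum_{x} p_x \proj{x}_{X_i}\otimes \tau_{D_i\bar D_i}^{(x)}$ with $\tau^{(x)}_{\bar D_i}=\id/d_{\bar D_i}$, designed so that the contribution of the $D_i\bar D_i$ block to the conditional von Neumann entropy is an affine function of $f(\delta_{X_i})$. Since $f$ is affine, summing over $i$ and restricting to the event $\Omega$ converts the statistics hypothesis $f(\freq{X_1^n})\geqslant h$ into a uniform per-round von Neumann entropy statement, at the price of a constant determined by $\Max{f}$ and $\MinSigma{f}$.

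With these reductions in place, I iterate the Rényi chain rule of Corollary~\ref{cor_conditionalmap} along $\bar{\cM}_1,\dots,\bar{\cM}_n$; the Markov conditions~\eqref{eq_Markovgen} and the freshness of each $(D_i,\bar D_i)$ preparation guarantee that the hypothesis of that corollary is met at every step, yielding
\begin{equation*}
H_{\alpha}(A_1^n X_1^n D_1^n|B_1^n\bar D_1^n E)_{\bar\rho} \;\geqslant\; \sum_{i=1}^n \inf_{\omega} H_{\alpha}(A_iX_iD_i|B_i\bar D_iR)_{\bar{\cM}_i(\omega)}.
\end{equation*}
Each single-round $H_{\alpha}$ is then expanded via Corollary~\ref{cor:continuity-bound-halpha_new} as $H_{\alpha}\geqslant H-\tfrac{(\alpha-1)\ln 2}{2}V-(\alpha-1)^2K$. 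The von Neumann part is bounded using the min-tradeoff hypothesis together with the $D_i\bar D_i$ encoding; the variance $V(A_iX_iD_i|B_i\bar D_iR)$ is decomposed by Lemma~\ref{lem_var_decomp} as at most $\bigl(\sqrt{V(A_iX_i|B_iR)}+\sqrt{V(D_i|X_i\bar D_i)}\bigr)^{2}$, and the two pieces are estimated by $\log(2d_A^2+1)$ (Corollary~\ref{lem_var_dim}) and by $\sqrt{2+\Var{f}}$ (using the construction of $D_i\bar D_i$ together with Lemma~\ref{lem:divergence-variance-classical-X}) respectively; the remainder $K$ is controlled uniformly through the crude estimate $H-H'_2\leqslant 2\log d_A+\Max{f}-\MinSigma{f}$.

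Summing over $i$ and using affinity of $f$ to absorb the $D_i\bar D_i$-entropy contribution into $nh$ gives a bound of the form $H_{\min}^{\varepsilon}\geqslant nh - \frac{O(\log(1/(\varepsilon\rho[\Omega])))}{\alpha-1} - n\cdot\tfrac{(\alpha-1)\ln 2}{2}\bigl(\log(2d_A^2+1)+\sqrt{2+\Var{f}}\bigr)^{2} - n(\alpha-1)^{2}K$. Optimizing over $\alpha-1=\Theta(1/\sqrt n)$ balances the first two error contributions against each other and produces the $c\sqrt n$ coefficient stated in the theorem, while the residual $n(\alpha-1)^2K$ yields the $O(1)$ constant $c'$. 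The main obstacle I expect is the variance-aggregation step: the $(a+b)^2$ shape of Lemma~\ref{lem_var_decomp} is precisely what lets $\log(2d_A^2+1)$ and $\sqrt{2+\Var{f}}$ enter the bound as a sum, rather than being replaced by a quantity proportional to the gradient of $f$, and this is exactly the ingredient responsible for the improved scaling over~\cite{dfr16}.
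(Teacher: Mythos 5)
Your outline follows essentially the same route as the paper: reduce $H_{\min}^{\varepsilon}$ to $H_{\alpha}^{\uparrow}$, remove the conditioning on $\Omega$ at cost $\frac{\alpha}{\alpha-1}\log\frac{1}{\rho[\Omega]}$, encode the affine tradeoff function into auxiliary classical registers, iterate Corollary~\ref{cor_conditionalmap}, expand each single-round term with Corollary~\ref{cor:continuity-bound-halpha_new}, aggregate the variance via Lemma~\ref{lem_var_decomp}, Lemma~\ref{lem:divergence-variance-classical-X} and Corollary~\ref{lem_var_dim}, and finally take $\alpha-1=\Theta(1/\sqrt{n})$. Keeping the $\bar D_i$ registers with uniform marginal (the paper drops them entirely) is a harmless cosmetic difference, since Lemma~\ref{lem_var_decomp} is stated in exactly that generality.

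There is, however, one under-specified ingredient, and it happens to be the paper's key modification of the \cite{dfr16} construction. You only require that $\tau^{(x)}_{D_i\bar D_i}$ be classical with $H(D_i|\bar D_i)_{\tau(x)}$ affine in $f(\delta_x)$, and then assert the bound $\sqrt{2+\Var{f}}$ for the $D$-block. But fixing the conditional entropy does not control the per-symbol variance: a classical state with $H(D_i|\bar D_i)_{\tau(x)}=\Max{f}-f(\delta_x)$ can have $V(D_i|\bar D_i)_{\tau(x)}$ of order $(\Max{f}-\Min{f})^2$ (e.g.\ a point mass mixed with a uniform distribution on all of $D_i$), which in the infrequent-sampling regime is $\Theta(1/\gamma^2)$ and would wipe out the improvement the theorem is after; the literal \cite{dfr16} choice (entangled/maximally-mixed mixture) fails for the same reason, which is precisely why the paper replaces it. What is needed, and what the paper proves, is the specific choice of $\tau(x)$ as a mixture of uniform distributions on $\lfloor 2^{\Max{f}-f(\delta_x)}\rfloor$ and $\lceil 2^{\Max{f}-f(\delta_x)}\rceil$ points, for which a short computation (using $\log^2 z\leqslant 2z$) gives $V(D_i)_{\tau(x)}\leqslant 2$; this is where the constant $2$ in $\sqrt{2+\Var{f}}$ comes from. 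Two smaller omissions: before applying Corollary~\ref{lem_var_dim} you must first drop $X_i$ from the first variance piece via Lemma~\ref{lem_det_func_var} (i.e.\ $V(A_iX_i|B_iR)=V(A_i|B_iR)$), otherwise the dimension bound picks up a factor involving $|\cX|$ and you do not get $\log(2d_A^2+1)$; and the choice $\alpha-1=\Theta(1/\sqrt{n})$ must be checked to stay in the admissible range, with the small-$n$ regime (where it does not) handled separately — the paper shows the claimed bound is then trivially true because $h\leqslant\log d_A$ while $\tfrac12 V^2$ exceeds it.
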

While the above give reasonable bounds in the general case, in order to obtain better finite $n$ bounds in a particular case of interest, we advise the user to instead use the following bound for an $\alpha \in (1,2)$ that is either chosen carefully for the problem at hand or computed numerically: 
    \begin{align}
    \label{eqn:eat-min-min-alpha}
    H_{\min}^{\varepsilon}(A_1^n | B_1^n E)_{\rho_{|\Omega}} &\geqslant
    n h - n  \frac{(\alpha-1) \ln 2}{2} V^2 - \frac{1}{\alpha - 1} \log \frac{2}{\varepsilon^2 \rho[\Omega]^2} - n(\alpha - 1)^{2} K_{\alpha} \ ,
    \end{align}
    with 
    \begin{align}
    V &= \sqrt{\Var{f} + 2} +  \log(2d_A^2+1) \\
\label{eq_Kalpha}
 K_{\alpha} &= \frac{1}{6 (2 - \alpha)^3 \ln 2} \cdot 2^{(\alpha-1)(2 \log d_A + (\Max{f} -  \MinSigma{f}) } \ln^3\left( 2^{2 \log d_A + (\Max{f} -  \MinSigma{f})} + e^2 \right) \ .
\end{align}
Note that in general the optimal choice of $\alpha$ will depend on $n$; in Theorem~\ref{thm:entropyaccumulationext} we have chosen $\alpha$ so that $\alpha-1$ scales as $\Theta(1/\sqrt{n})$, but other choices are possible. As described in the proof, in the case where the systems $A_i$ are classical, we can replace $2 \log d_{A}$ by $\log d_{A}$ in~\eqref{eq_Kalpha}, this comes from the fact that $H_{\alpha}(A_i | C) \geq 0$ whenever $A_i$ is classical but can only be lower bounded by $- \log d_{A}$ in the general case. This bound holds under the exact same conditions as Theorem~\ref{thm:entropyaccumulationext} and for any $\alpha \in (1,2)$, and this is the bound we use to obtain the numerical results presented in the application presented in Section~\ref{sec:DIRE}.
    The choice of $\alpha$ made to get Theorem~\ref{thm:entropyaccumulationext} is not the optimal one, but it was chosen to have a relatively simple expression showing the dependence on the main parameters without optimizing the constants.

    The proof structure is the same as in~\cite{dfr16}.  The only difference is when using the continuity of $D_{\alpha}$, we use the more precise estimate in Lemma~\ref{lem_HalphaH_second_order_new}, and we use the various properties of the entropy variance proven in Section~\ref{sec:divergence-variance} to bound the second-order term.

\begin{proposition}\label{prop:accumulation-alpha}
Let $\cM_1, \ldots, \cM_n$ and $\rho_{A_1^n B_1^n X_1^n E}$ be such that~\eqref{eq_rhomap} and the Markov conditions~\eqref{eq_Markovgen} hold, let  $h \in \mathbb{R}$, and let $f$ be an affine min-tradeoff function $f$ for $\cM,\dots,\cM_n$. Then, for any event $\Omega$ which implies  $f(\freq{X_1^n}) \geqslant h$,
      \begin{align}
          \label{eqn:eat-min-alpha} 
                  H^{\uparrow}_{\alpha}(A_1^n | B_1^n E)_{\rho_{|\Omega}} 
        & > n h - n \frac{(\alpha-1) \ln 2}{2} V^2 - \frac{\alpha}{\alpha - 1} \log \frac{1}{\rho[\Omega]} - n(\alpha - 1)^{2} K_{\alpha}
        \end{align}
        holds for $\alpha$ satisfying $\alpha \in (1,2)$, and $V = \sqrt{\Var{f} + 2} +  \log(2d_A^2+1)$, where $d_A$ is the maximum dimension of the systems~$A_i$ and $K_{\alpha}$ is defined in~\eqref{eq_Kalpha}.
\end{proposition}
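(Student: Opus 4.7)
The plan is to follow the overall structure of the entropy accumulation proof in \cite{dfr16}, with the key refinement being the use of the sharper continuity bound from Lemma~\ref{lem_HalphaH_second_order_new} in place of the coarser estimate used in the original argument, and bookkeeping of the resulting second-order term via the machinery of Section~\ref{sec:divergence-variance}. First I would convert $H^{\uparrow}_\alpha(\cdot)_{\rho_{|\Omega}}$ into an unconditioned Rényi entropy at the price of a $\frac{\alpha}{\alpha-1}\log\frac{1}{\rho[\Omega]}$ correction, via the definition of $H^{\uparrow}_\alpha$ and a standard Hölder-type bound relating the divergence of the conditioned and unconditioned states. To make the affinity of $f$ usable, I would enrich each channel $\cM_i$ by appending auxiliary classical registers $(D_i, \bar{D}_i)$, where $D_i$ is a deterministic function of $X_i$ encoding $f(\delta_{X_i})$ and $\bar{D}_i$ is maximally mixed so that the hypothesis of Lemma~\ref{lem_var_decomp} is met. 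Affinity then turns the constraint $f(\freq{X_1^n})\geqslant h$ on $\Omega$ into $\sum_i f(\delta_{X_i})\geqslant n h$.

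Next, using the Markov conditions~\eqref{eq_Markovgen} to satisfy the hypothesis of Corollary~\ref{cor_conditionalmap}, I would iterate the Rényi chain rule to obtain
\begin{align*}
    H_\alpha(A_1^n D_1^n \mid B_1^n \bar{D}_1^n E)
    \geqslant \sum_{i=1}^n \inf_{\omega_i}\, H_\alpha\bigl(A_i D_i \mid B_i \bar{D}_i A_1^{i-1} D_1^{i-1} B_1^{i-1} \bar{D}_1^{i-1} E\bigr)_{\tilde{\cM}_i(\omega_i)}.
\end{align*}
For each per-round term I would apply Corollary~\ref{cor:continuity-bound-halpha_new} to replace $H_\alpha$ by the von Neumann entropy, minus $\tfrac{(\alpha-1)\ln 2}{2}$ times the per-round divergence variance, minus the $(\alpha-1)^2 K_\alpha$ remainder. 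The tradeoff function then lower bounds the von Neumann entropy by $f(\nu_{X_i})$, and since $D_i$ is a deterministic function of $X_i$, Lemma~\ref{lem_det_func_var} ensures that appending $D_i$ does not decrease this bound. Summing over $i$ and using affinity together with the event $\Omega$ produces the $nh$ first-order term.

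The main new ingredient is controlling the per-round divergence variance $V(A_i D_i \mid B_i \bar{D}_i \cdots)$. I would apply Lemma~\ref{lem_var_decomp} to split it as $V(A_i\mid\cdots) + V(D_i\mid\cdots) + 2\sqrt{V(A_i\mid\cdots)\,V(D_i\mid\cdots)}$, which equals $\bigl(\sqrt{V(A_i\mid\cdots)} + \sqrt{V(D_i\mid\cdots)}\bigr)^2$. Corollary~\ref{lem_var_dim} bounds the first factor by $\log^2(2 d_A^2 + 1)$; the second factor is purely classical and reduces, up to a small constant correction absorbed into the ``$+2$'', to the variance of $f(\delta_{X_i})$ over the marginal distribution of $X_i$, which is at most $\Var{f}+2$ by definition. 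Summing and combining then yields the claimed inequality, with the continuity remainder packed into $n(\alpha-1)^2 K_\alpha$.

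The main obstacle is engineering the auxiliary construction $(D_i, \bar{D}_i)$ so that all three requirements hold simultaneously: Lemma~\ref{lem_var_decomp} applies (which needs a specific product structure with $\bar{D}_i$ uniform), the tradeoff function still acts on a state whose $X_i$-marginal is $\nu_{X_i}$, and the Markov chain condition required by the chain rule survives the enrichment. Once the registers are set up correctly, the rest is a bookkeeping exercise chaining together the chain rule, the per-round continuity estimate, the tradeoff lower bound, and the variance decomposition.
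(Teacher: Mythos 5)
Your outline follows the paper's skeleton (remove the conditioning on $\Omega$ at cost $\frac{\alpha}{\alpha-1}\log\frac{1}{\rho[\Omega]}$, iterate Corollary~\ref{cor_conditionalmap}, apply Corollary~\ref{cor:continuity-bound-halpha_new} per round, and control the variance via Lemma~\ref{lem_det_func_var}, Lemma~\ref{lem_var_decomp}, Corollary~\ref{lem_var_dim} and $\Var{f}$), but there is a genuine gap in the construction of the auxiliary registers, and that construction is precisely what makes entropy accumulation work. You take $D_i$ to be a \emph{deterministic} function of $X_i$ encoding $f(\delta_{X_i})$, together with an independent maximally mixed $\bar D_i$ in the conditioning. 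A deterministic classical $D_i$ carries no conditional entropy, so appending it changes nothing: $H(A_iD_i|B_i\bar D_i R)=H(A_i|B_iR)$ for every input $\omega$. The per-round terms produced by the chain rule are infima over \emph{all} states $\omega_{R_{i-1}R}$, including ones whose output statistics $q=\nu_{X_i}$ are arbitrarily unfavourable, so the best uniform bound the min-tradeoff property then gives is $\MinSigma{f}$, with no way to reach $h$. Your sentence ``Summing over $i$ and using affinity together with the event $\Omega$ produces the $nh$ first-order term'' is exactly where the argument breaks: $\Omega$ is a global event about the actual observed frequencies and cannot be injected into the round-wise infima. The paper's fix is that $D_i$ is \emph{not} deterministic: conditioned on $X_i=x$ it is (nearly) uniform on about $2^{\Max{f}-f(\delta_x)}$ values, so that $H(D_i)_{\tau(x)}=\Max{f}-f(\delta_x)$. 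This makes every per-round term at least $\Max{f}$, uniformly over $\omega$, and the entropy loaned to $D_1^n$ is repaid on $\Omega$ using affinity of $f$: $H^{\uparrow}_{\alpha}(A_1^n|B_1^nE)_{\rho_{|\Omega}} \geqslant H^{\uparrow}_{\alpha}(A_1^nD_1^n|B_1^nE)_{\bar{\rho}_{|\Omega}} - n\Max{f}+nh$. No $\bar D_i$ appears at all; the paper explicitly abandons the $D\bar D$ construction of \cite{dfr16} because its entropy variance is too large, and the near-uniform $\tau(x)$ is chosen so that $V(D_i)_{\tau(x)}\leqslant 2$, which is the origin of the ``$+2$'' in $V$.

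With your deterministic $D_i$ the second-order bookkeeping also collapses: $V(D_i|X_i\cdots)=0$, so the term you want to bound by $\Var{f}+2$ (which in the paper arises as $\sum_x q(x)V(D_i)_{\tau(x)}+\Variance(W)$ with $W=\Max{f}-f(\delta_x)$) has nothing to act on, and the $\Var{f}$ contribution never enters. You correctly identified the engineering of the auxiliary registers as the main obstacle, but the construction you propose does not clear it; once $D_i$ is redefined as the paper's entropy-price register, the rest of your outline goes through essentially as you describe.
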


\begin{proof}
    
    The first step of the proof is to construct a state that will allow us to lower-bound $H^\uparrow_{\alpha}(A_1^n | B_1^n E)_{\rho_{|\Omega}}$ using a chain rule similar to the one in Corollary~\ref{cor_conditionalmap}, while ensuring that the tradeoff function is taken into account. In order to achieve this, we proceed as in~\cite{dfr16} and introduce an additional $D$ system that can be thought of as an entropy price
     that encodes the tradeoff function. 
     More precisely, for every $i$, let $\mathcal{D}_i : X_i \rightarrow X_i D_i $, be a TPCP map defined as
\begin{align*}
    \mathcal{D}_i(W_{X_i}) = \sum_{x \in \mathcal{X}} \bra{x}W_{X_i} \ket{x} \cdot \proj{x}_{X_i} \otimes \tau(x)_{D_i} \ ,
\end{align*}
    where $\tau(x)$ is such that $H(D_i)_{\tau(x)} = \Max{f} - f(\delta_x)$ (here $\delta_x$ stands for the distribution with all the weight on element $x$). This is possible because $\Max{f} - f(\delta_x) \in [0, \Max{f} - \Min{f}]$ and we choose the dimension of the systems $D_i$ to be equal to $d_D = \left\lceil  2^{\Max{f} - \Min{f}} \right\rceil$. More precisely, we fix $\tau(x)$ to be a mixture between a uniform distribution on $\{1, \dots, \lfloor 2^{\Max{f} - f(\delta_x)} \rfloor \}$ and a uniform distribution on $\{1, \dots, \lceil 2^{\Max{f} - f(\delta_x)} \rceil\}$. We note that compared to~\cite{dfr16}, our choice of state $\tau(x)$ is different. In fact, in~\cite{dfr16}, an additional system $\bar{D}$ was added to the conditioning and $\tau(x)$ was an appropriate mixture of a maximally entangled state on $D \otimes \bar{D}$ and a maximally mixed state on $D \otimes \bar{D}$. This choice is not adapted here because we will need the entropy variance of $\tau(x)$ to be small, for this reason we choose $\tau(x)$ to be basically uniform on a set of size $2^{\Max{f} - f(\delta_x)}$.

    Now, let
    \begin{align*}
      \bar{\rho} := (\mathcal{D}_n \circ \dots \circ \mathcal{D}_1)(\rho) \ . 
    \end{align*}
  Exactly as in the corresponding claim in~\cite{dfr16}, we can relate conditional entropy $H^{\uparrow}_{\alpha}(A_1^n | B_1^n E)_{\rho_{|\Omega}}$ to the conditional entropy of the constructed state $\bar{\rho}$:
   \begin{align}
    H^{\uparrow}_{\alpha}(A_1^n | B_1^n E)_{\rho_{|\Omega}}
        &\geqslant   H^{\uparrow}_{\alpha}(A_1^n D_1^n | B_1^n E)_{\bar{\rho}_{|\Omega}} - n \Max{f} + n h \ . 
    \end{align}
  The next step is to relate the entropies on the conditional state $\rho_{|\Omega}$ to those on the unconditional state. To do this, we use Lemma~\ref{lem:abx-chain-rule-opt} applied to $\bar{\rho} = \rho[\Omega] \bar{\rho}_{|\Omega} + (\bar{\rho} - \rho[\Omega] \bar{\rho}_{|\Omega})$, together with the fact that $H_{\alpha}^{\uparrow} \geqslant H_{\alpha}$, and obtain
   \begin{align}
   \label{eq_lb_using_d}
    H^{\uparrow}_{\alpha}(A_1^n | B_1^n E)_{\rho_{|\Omega}}
          &\geqslant   H_{\alpha}(A_1^n D_1^n | B_1^n E)_{\bar{\rho}} - \frac{\alpha}{\alpha - 1} \log\frac{1}{\rho[\Omega]}  - n \Max{f} + n h \ .
  \end{align}
   
To show the desired inequality~\eqref{eqn:eat-min-alpha}, it now suffices to prove that $H_{\alpha}(A_1^n D_1^n | B_1^n E)_{\bar{\rho}}$ is lower bounded by (roughly) $n \Max{f}$. 

In order to lower bound $H_{\alpha}(A_1^n D_1^n | B_1^n E)_{\bar{\rho}}$, we are now going to use the chain rule for R\'enyi entropies in Corollary~\ref{cor_conditionalmap} $n$ times on the state $\bar{\rho}$, with the following substitutions at step $i$:
    \begin{itemize}
        \item $A_1 \rightarrow A_1^{i-1} D_1^{i-1}$
        \item $B_1 \rightarrow B_1^{i-1} E $
        \item $A_2 \rightarrow A_i D_i$
        \item $B_2 \rightarrow B_i$.
    \end{itemize} 
    To check that the Markov chain condition holds, observe that $I(A_1^{i-1} D_1^{i-1} : B_i | B_1^{i-1} E) = I(A_1^{i-1} : B_i | B_1^{i-1} E) + I(D_1^{i-1} : B_i | B_1^{i-1} E A_1^{i-1})$. Using~\eqref{eq_Markovgen}, we have that $I(A_1^{i-1} : B_i | B_1^{i-1} E) = 0$ and as $D_1^{i-1}$ is determined by $A_1^{i-1} B_1^{i-1}$, we also have $I(D_1^{i-1} : B_i | B_1^{i-1} E A_1^{i-1}) = 0$.
    Thus, Corollary~\ref{cor_conditionalmap} gives
    \begin{align}
    \nonumber
    &H_{\alpha}(A_1^n D_1^n | B_1^n E )_{\bar{\rho}} \\
         &\geqslant \sum_i \inf_{\omega_{R_{i-1} R}} H_{\alpha}(A_i D_i|B_i R)_{(\mathcal{D}_i \circ \cM_i)(\omega)} \nonumber \\
         & \geqslant \sum_i \inf_{\omega_{R_{i-1} R}} \left(  H(A_i D_i|B_i R)_{(\mathcal{D}_i \circ \cM_i)(\omega)} - \frac{(\alpha-1) \ln 2}{2} V(A_iD_i | B_i  R_i)_{(\mathcal{D}_i \circ \cM_i)(\omega)} -  (\alpha-1)^2 K(\alpha) \right) \label{eq:exp_after_continuity} \ ,
    \end{align}
    where we have invoked Corollary \ref{cor:continuity-bound-halpha_new} in the second inequality. Here,  
    \begin{align*}
        K(\alpha) = \frac{1}{6 (2 - \alpha)^3 \ln 2} \cdot 2^{(\alpha - 1)(- \eta_1 +  \eta_0)} \ln^3\left( 2^{- \eta_2 + \eta_0} + e^2 \right) \ .
    \end{align*}
    with $\eta_1 = H'_{\alpha}(A_i D_i | B_i R_i)_{(\mathcal{D}_i \circ \cM_i)(\omega)}$, $\eta_0 = H(A_i D_i | B_i R_i)_{(\mathcal{D}_i \circ \cM_i)(\omega)}$ and $\eta_2 = H'_{2}(A_i D_i | B_i R_i)_{(\mathcal{D}_i \circ \cM_i)(\omega)}$.

    For any such state $\omega_{R_{i-1} R}$, we have
    \begin{align*}
        H(A_i D_i | B_i R)_{(\mathcal{D}_i \circ \cM_i)(\omega)}
        & = H(A_i X_i D_i | B_i R)_{(\mathcal{D}_i \circ \cM_i)(\omega)} \\
        &= H(A_i X_i | B_i R)_{\cM_i(\omega)}  + H(D_i|X_i)_{(\mathcal{D}_i \circ \cM_i)(\omega)}\\
        &= H(A_i  | B_i R)_{\cM_i(\omega)}  + \sum_x q(x) H(D_i)_{\tau(x)}\\
         & = H(A_i | B_i R)_{\cM_i(\omega)}  + \sum_x q(x) \bigl( \Max{f} - f(\delta_x) \bigr) \\
         &= H(A_i | B_i R)_{\cM_i(\omega)} + \Max{f} - f(q)\ ,
        \end{align*}
        where $q = \cM_i(\omega)_{X_i}$ denotes the distribution of $X_i$ on $\mathcal{X}$ obtained from the state $\cM_i(\omega)$. The third equality comes from the fact that $X_i$ is determined by $A_iB_i$. The last equality holds because $f$ is affine. Using the fact that $f$ is a min-tradeoff function, we get that $H(A_i|B_i R)_{\mathcal{M}_i(\omega)} \geqslant f(q)$ and therefore:
    \begin{align*}
	\Max{f} \leqslant        H(A_i D_i | B_i R)_{(\mathcal{D}_i \circ \cM_i)(\omega)} \leqslant \log d_{A_i} + \Max{f} - f(q) \ .
        \end{align*}
        The lower bound allows us to lower bound the first term in Eq.~\eqref{eq:exp_after_continuity}. The upper bound will allow us to bound the last term in Eq.~\eqref{eq:exp_after_continuity}. In fact, as the systems $D_i$ are classical, we have $\eta_1, \eta_2 \geqslant - \log d_{A_i}$ by Lemma~\ref{lem:hprimealpha-part-classical-dim-bound} (and in the case where $A_i$ are classical, we have $\eta_1, \eta_2 \geqslant 0$) and thus
\begin{align*}
    K(\alpha)
&\leqslant \frac{1}{6 (2-\alpha)^3 \ln 2} \cdot 2^{(\alpha - 1)(\log d_{A} +  \log d_{A} + \Max{f} - f(q))} \ln^3\left( 2^{\log d_{A} + \log d_{A} + \Max{f} - f(q)} + e^2 \right) \\
&\leqslant \frac{1}{6 (2-\alpha)^3 \ln 2} \cdot 2^{(\alpha - 1)(2\log d_{A}  + \Max{f} - \MinSigma{f})} \ln^3\left( 2^{2\log d_{A} + \Max{f} - \MinSigma{f}} + e^2 \right) \ ,
\end{align*}
as by definition $\Sigma_i(q)$ is not empty (it contains $\cM_i(\omega)$).

We now analyze the second term of Eq.~\eqref{eq:exp_after_continuity}. Using Lemma~\ref{lem_det_func_var} and then Lemma~\ref{lem_var_decomp} we have
\begin{align}
V(A_iD_i | B_i R_i)_{(\mathcal{D}_i \circ \cM_i)(\omega)}
&= V(A_i X_i D_i | B_i R_i)_{(\mathcal{D}_i \circ \cM_i)(\omega)} \\
\label{eqn:V-inequality}&\leqslant \left(\sqrt{V(A_i | B_i R_i)_{(\mathcal{D}_i \circ \cM_i)(\omega)}} + \sqrt{V(D_i | X_i)_{(\mathcal{D}_i \circ \cM_i)(\omega)}}\right)^2 \ .
\end{align}

We bound the first term by the dimension of $A$ using Corollary~\ref{lem_var_dim}.
\begin{align*}
V(A_i | B_i R_i)_{(\mathcal{D}_i \circ \cM_i)(\omega)} \leqslant \log^2 (2d^2_A+1).
\end{align*} 
 For the second term, using the notation $q = \cM_i(\omega)_{X_i}$, we have using Lemma~\ref{lem:divergence-variance-classical-X}
\begin{align*}
V(D_i | X_i)_{(\mathcal{D}_i \circ \cM_i)(\omega)} &= \sum_{x \in \cX} q(x) V(D_{i})_{\tau(x)} + \Variance(W) \ ,
\end{align*}
where $W$ takes the value $H(D_i)_{\tau(x)} = \Max{f} - f(\delta_{x})$ with probability $q(x)$. We have
\begin{align*}
\Variance(W) &= \sum_{x \in \cX} q(x) (\Max{f} - f(\delta_{x}))^2 - \left(\sum_{x} q(x) (\Max{f} - f(\delta_x)) \right)^2 \\
&\leqslant \sup_{\omega, q = \cM_i(\omega)_{X_i}} \sum_{x \in \cX} q(x) f(\delta_{x})^2 - \left(\sum_{x} q(x) f(\delta_x) \right)^2 \\
&\leqslant \Var{f} \ .
\end{align*}
To bound $V(D_i)_{\tau(x)}$ recall that $\tau(x)$ is a mixture between the uniform distribution on $\{1, \dots, \lfloor 2^{\Max{f} - f(\delta_{x})} \rfloor\}$ and the uniform distribution on $\{1, \dots, \lceil 2^{\Max{f} - f(\delta_{x})} \rceil\}$. Note that if $2^{\Max{f} - f(\delta_{x})}$ is an integer, $\tau(x)$ is uniformly distributed and thus $V(D_i)_{\tau(x)} = 0$. Assuming $2^{\Max{f} - f(\delta_{x})}$ is not an integer, let $\lfloor 2^{\Max{f} - f(\delta_{x})} \rfloor = k$. Then $\tau(x)$ is a distribution on $\{1, \dots, k+1\}$ and we have for some $p$ and $p' \leq p$, $\bra{j} \tau(x) \ket{j} = p$ for all $j \in \{1, \dots, k\}$ and $\bra{k+1} \tau(x) \ket{k+1} = p'$. The normalization condition is $kp + p' = 1$ and thus, $p' = 1 - kp$.
    We can now observe that the entropy variance $V(D_i)_{\tau(x)}$ is simply $\Variance(-\log p  + Z) = \Variance(Z)$, where $Z$ is a random variable that is equal to 0 with probability $1-p'$ and to $\log p - \log p'$ with probability $p'$. This variance can then be computed as
    \begin{align*}
        \Variance(Z) &= \mbE[Z^2] - \mbE[Z]^2\\
        &= p'(\log p - \log p')^2 - {p'}^2(\log p - \log p')^2\\
        &= p'(1-p') \log^2 \left( \frac{p}{p'} \right).
    \end{align*}
    Now, we use the fact that $\log^2 z \leqslant 2z$ and continue:
    \begin{align*}
        \Variance(Z) &\leqslant 2p'(1-p') \frac{p}{p'} \leqslant 2.
    \end{align*}


As a result,
\begin{align*}
V(D_i | X_i)_{(\mathcal{D}_i \circ \cM_i)(\omega)} 
		&\leq 2 + \Var{f} \ .
\end{align*}

Putting everything together, Eq.~\eqref{eq_lb_using_d} becomes
   \begin{align*}
        H^{\uparrow}_{\alpha}(A_1^n | B_1^n E)_{\rho_{|\Omega}} &\geqslant n h - n \frac{(\alpha-1) \ln 2}{2} \left(\log (2d^2_A+1) + \sqrt{2 + \Var{f}} \right)^2 \\
         &\quad\quad\quad\quad - n(\alpha-1)^2 K_{\alpha} -  \frac{\alpha}{\alpha-1} \log\frac{1}{\rho[\Omega]} \ .
   \end{align*}  
\end{proof}

Theorem \ref{thm:entropyaccumulationext} is then obtained from Proposition~\ref{prop:accumulation-alpha} by choosing $\alpha$ appropriately.
\begin{proof}[Proof of Theorem \ref{thm:entropyaccumulationext}]
    We start by lower-bounding the smooth min-entropy by a Rényi entropy: for $\alpha \in (1,2]$ (see e.g.,~\cite[Proposition 6.5]{Tom15book}), we have
    \begin{align}\label{eq_eatblock}
        H_{\min}^{\varepsilon}(A_1^n | B_1^n E)_{\rho_{|\Omega}} \geqslant H^{\uparrow}_{\alpha}(A_1^n | B_1^n E)_{\rho_{|\Omega}} - \frac{\log(2/\varepsilon^2)}{\alpha-1} \ .
    \end{align}
    Then Proposition \ref{prop:accumulation-alpha} yields for $\alpha \in (1, 1 + \frac{1}{2 \ln 2})$
    \begin{align}\label{eq_eathmin_halpha}
    H_{\min}^{\varepsilon}(A_1^n | B_1^n E)_{\rho_{|\Omega}} &> n h - n  \frac{(\alpha-1) \ln 2}{2} V^2 - \frac{\alpha}{\alpha - 1} \log \frac{1}{\rho[\Omega]} - n(\alpha - 1)^{2} K_{\alpha} - \frac{\log (2/\varepsilon^2)}{\alpha-1} \\
    &\geqslant n h - n  \frac{(\alpha-1) \ln 2}{2} V^2 - \frac{1}{\alpha - 1} \log \frac{2}{\varepsilon^2 \rho[\Omega]^2} - n(\alpha - 1)^{2} K \ , \nonumber
    \end{align}
    where for the first inequality, $K_{\alpha}$ is as in~\eqref{eq_Kalpha} and in the second inequality, we used the fact that $\alpha \leqslant 1 + \frac{1}{2 \ln 2}$ and defined
    \begin{align*}
    K = 12 \cdot 2^{(2 \log d_A + (\Max{f} -  \MinSigma{f}) } \ln^3\left( 2^{2 \log d_A + (\Max{f} -  \MinSigma{f})} + e^2 \right)\ .
    \end{align*}
     To make the terms in $\alpha - 1$ and $\frac{1}{\alpha-1}$ match, we 
 choose
    \begin{align} \label{eq_alphachoiceext}
       \alpha := 1 + \frac{ \sqrt{ 2 \log \frac{2}{\rho[\Omega]^2 \varepsilon^2} }}{\sqrt{n \ln 2} V} \ .
    \end{align}
    Assuming that $n \geqslant\frac{8 \ln 2 \log \frac{2}{\varepsilon^2 \rho[\Omega]^2}}{V^2}$ to have $\alpha \leqslant 1 + \frac{1}{2\ln 2}$, we obtain
    \begin{equation}\label{eqn:hmin-bound-alpha-subbed-in}
         H_{\min}^{\varepsilon}(A_1^n | B_1^n E)_{\rho} > n h - \sqrt{n} V \sqrt{ (2 \ln 2) \log \frac{2}{\rho[\Omega]^2 \varepsilon^2}} -  \frac{2\log \frac{2}{\rho[\Omega]^2 \varepsilon^2} }{V^2 \ln 2} K .
     \end{equation}
         Note that if $n < \frac{8\ln 2 \log \frac{2}{\varepsilon^2 \rho[\Omega]^2}}{V^2}$ then 
         \[
             \sqrt{n} V \sqrt{ (2 \ln 2) \log \frac{2}{\rho[\Omega]^2 \varepsilon^2}} > \frac{1}{2} n V^2.
         \]
         As we may assume $h \leq \log d_A$ (otherwise the event $\Omega$ will have zero probability) and using the definition of $V$, we have that 
         \begin{align*}
             n \left(h- \frac{1}{2}V^2\right) &\leqslant n \log d_{A} - \frac{1}{2} (\sqrt{2} + \log(2d_{A}^2+1))^2  n\\
             &\leqslant - n \log d_{A}\\
         \end{align*}
which implies that~\eqref{eqn:eat-min} is true in a trivial way.
\end{proof}

\subsection{EAT channels with infrequent sampling}\label{sec:infrequent-sampling}

This section can be seen as a user guide to apply the entropy accumulation result presented here in the very common setting where the ``testing'' is only done in a few rounds that are sampled at random. From the entropy accumulation point of view, the reason for testing is to restrict the optimization involved in the tradeoff function to states $\omega_{R_{i-1} R}$ satisfying the output statistics~\eqref{eq:def_sigma}, e.g., winning the CHSH game with a certain probability. However, testing can be costly in terms of randomness or rate and for this reason, the probability of testing, denoted $\gamma$ is often chosen to be small.
We start by defining ``channels with infrequent sampling'', which formalizes the concept of a protocol in which we test only a few positions:

\begin{definition}[Channel with infrequent sampling]
    A channel with testing probability $\gamma \in [0,1]$ is an EAT channel $\mathcal{M}_{i,R_{i-1} \rightarrow X_i A_i B_i R_i}$ such that $\mathcal{X} = \mathcal{X}' \cup \{ \bot \}$ and that can be expressed as
    \[ \mathcal{M}_{i, R_{i-1} \rightarrow X_i A_i B_i R_i}(\cdot) = \gamma \mathcal{M}^{\mathrm{test}}_{i,R_{i-1} \rightarrow X_i A_i B_i R_i}(\cdot) + (1-\gamma) \mathcal{M}^{\mathrm{data}}_{i,R_{i-1} \rightarrow A_i B_i R_i}(\cdot) \otimes \proj{\bot}_{X_i}, \]
    where $\mathcal{M}_i^{\mathrm{test}}$ never outputs the symbol $\bot$ on $X_i$.
%
\end{definition}

The following lemma gives a general way of constructing a tradeoff function $f$ for the map $\mathcal{M}_{i}$ using a sort of ``crossover'' tradeoff function $g$ for the map $\mathcal{M}_i$ but using the statistics from $\mathcal{M}_i^{\mathrm{test}}$ only. More precisely, the function $g$ is defined by restricting the input of the map $\mathcal{M}_{i}$ to be ones that are consistent with the output statistics given by the map $\mathcal{M}^{\mathrm{test}}_{i}$. The lemma also gives general bounds on the relevant properties of $f$ as a function of $\gamma$ and simple properties of $g$.
\begin{lemma}
\label{lem:infrequent_sampling}
    Let $\mathcal{M}_i = \cM_{R_{i-1} \rightarrow X_i A_i B_i R_i}$ be a channel with testing probability $\gamma$ as defined above. 
Assume that the affine function $g : \mbP(\cX') \to \mbR$ satisfies for any $q \in \mbP(\cX')$
\begin{align}
\label{eq:tradeoff_g}
g(q') \leqslant \min_{\omega \in \mathrm{D}(R_{i-1} \otimes R)} \{ H(A_i | B_i R)_{(\cM_i \otimes \cI_R)(\omega_{R_{i-1} R})} :  \left((\cM^{\mathrm{test}}_i \otimes \cI_R)(\omega_{R_{i-1} R})\right)_{X_i} = q' \bigr\}. 
\end{align}
Note that if the set $\{ \omega \in \mathrm{D}(R_{i-1} \otimes R) : \left((\cM^{\mathrm{test}}_i \otimes \cI_R)(\omega_{R_{i-1} R})\right)_{X_i} = q' \}$ is empty, the minimum is set to $+\infty$ or in other words, there is no constraint on $g(q')$.
    Then, the affine function $f : \mbP(\cX) \to \mbR$ defined by
     \begin{align*}
        f(\delta_x) &=  \Max{g} + \frac{1}{\gamma} (g(\delta_x) - \Max{g}) \;\;\;\;\; \forall x \in \mathcal{X}'\\
        f(\delta_{\bot}) &= \Max{g} 
    \end{align*}   
    is a min-tradeoff function for $\mathcal{M}_{i}$. Moreover,
    \begin{align*}
    \Max{f} &= \Max{g} \\
    \Min{f} &= \left(1-\frac{1}{\gamma} \right) \Max{g} + \frac{1}{\gamma}\Min{g} \\
    \MinSigma{f} &\geqslant\Min{g} \\
     \Var{f} &\leq \frac{1}{\gamma}\left(\Max{g} - \Min{g} \right)^2.
     \end{align*}
    \label{lem:f-sampling-prop}
\end{lemma}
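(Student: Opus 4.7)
The plan is to check the min-tradeoff property, compute $\Max{f}$, $\Min{f}$, $\MinSigma{f}$ directly from the values of $f$ at the vertices of $\mbP(\cX)$, and finally bound $\Var{f}$ by a short computation exploiting the special vertex-values of $f$. Throughout, the key structural observation is that for any input state $\omega_{R_{i-1}R}$, the output distribution $q = \cM_i(\omega)_{X_i}$ is forced to satisfy $q(\bot) = 1-\gamma$ and $q(x) = \gamma\, q'(x)$ for $x \in \cX'$, where $q' = \cM_i^{\mathrm{test}}(\omega)_{X_i}$. This lets us convert between the statistics on $\cX$ produced by $\cM_i$ and the test-only statistics on $\cX'$ that $g$ is defined in terms of.

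First I would verify the min-tradeoff property. Using the vertex values
$f(\delta_{\bot}) = \Max{g}$ and $f(\delta_x) = \Max{g} + \tfrac{1}{\gamma}(g(\delta_x) - \Max{g})$ together with affineness of both $f$ and $g$, a short algebraic computation gives the identity $f(q) = g(q')$ whenever $q = \cM_i(\omega)_{X_i}$. The hypothesis~\eqref{eq:tradeoff_g} on $g$ then directly yields $f(q) \leqslant H(A_i|B_iR)_{\cM_i(\omega)}$, which is exactly the min-tradeoff condition for $\cM_i$.

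Next I would compute the three extremal quantities. Since $f$ is affine on $\mbP(\cX)$, its maximum and minimum are attained at vertices. From $f(\delta_x) - f(\delta_{\bot}) = \tfrac{1}{\gamma}(g(\delta_x) - \Max{g}) \leqslant 0$ one reads off $\Max{f} = f(\delta_{\bot}) = \Max{g}$ and $\Min{f} = \min_x f(\delta_x) = \Max{g} + \tfrac{1}{\gamma}(\Min{g} - \Max{g})$, which is the stated formula. For $\MinSigma{f}$, the observation above shows that every $q$ with $\Sigma_i(q) \neq \emptyset$ is of the form $(1-\gamma)\delta_\bot + \gamma q'$ with $q'$ realisable by $\cM_i^{\mathrm{test}}$, hence $f(q) = g(q') \geqslant \Min{g}$.

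Finally, for $\Var{f}$, let $b_x := g(\delta_x) - \Max{g}$ for $x \in \cX'$, so $|b_x| \leqslant \Max{g} - \Min{g}$, and set $b_{\bot} = 0$. Then $f(\delta_x) = \Max{g} + b_x/\gamma$ for $x \in \cX'$ and $f(\delta_{\bot}) = \Max{g}$. A direct expansion using $q(\bot) = 1-\gamma$ and $q(x) = \gamma q'(x)$ yields
\[
\sum_{x \in \cX} q(x) f(\delta_x)^2 - \Bigl( \sum_{x \in \cX} q(x) f(\delta_x)\Bigr)^2 = \tfrac{1}{\gamma}\, \mbE_{q'}[b_X^2] - \mbE_{q'}[b_X]^2 \leqslant \tfrac{1}{\gamma}(\Max{g} - \Min{g})^2,
\]
which gives the claimed bound on $\Var{f}$.

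None of these steps involves any substantial difficulty; the main thing to keep straight is the $\gamma$-rescaling between $q$ on $\cX$ and $q'$ on $\cX'$ so that the affine lift $f$ correctly reproduces $g$ on realisable distributions while yielding a variance that only grows like $1/\gamma$ rather than $1/\gamma^2$.
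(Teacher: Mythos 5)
Your proposal is correct and follows essentially the same route as the paper: establish that every realisable $q$ has the form $(1-\gamma)\delta_\bot + \gamma q'$, show $f(q)=g(q')$ by affineness and invoke~\eqref{eq:tradeoff_g}, read off $\Max{f}$, $\Min{f}$, $\MinSigma{f}$ from the vertex values, and bound $\Var{f}$ by a direct expansion. Your variance step, centering at $\Max{g}$ and dropping the squared-mean term, is a slightly cleaner packaging of the same computation the paper performs by expanding around $g(q')$.
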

\begin{proof}
The value for $\Min{f}$ and $\Max{f}$ follow directly from the definition.

To prove that $f$ is a tradeoff function for $\cM_i$, we first determine $\Sigma_i(q)$ (see Definition~\ref{def_tradeoff}). If $q$ is not of the form $q(x) = \gamma q'(x)$ when $x \in \cX'$ and $q(\bot) = (1-\gamma)$ for some $q' \in \mbP(\cX')$, then we know that $\Sigma_i(q) = \emptyset$. So it suffices to focus on distributions $q$ that have this form. Then we have
\begin{align*}
    f(q) &= \sum_{x \in \mathcal{X}'} q(x)\left( \Max{g} + \frac{1}{\gamma}\left( g(\delta_x) - \Max{g} \right) \right) + (1-\gamma) \Max{g}\\
&= \Max{g} + \sum_{x \in \cX'} q'(x) (g(\delta_x) - \Max{g}) \\
&= g(q') \ .
\end{align*}
Using the condition~\eqref{eq:tradeoff_g}, we get
\begin{align*}
f(q) = g(q') &\leq \min_{\omega \in \mathrm{D}(R_{i-1} \otimes R)} \{ H(A_i | B_i R)_{(\cM_i \otimes \cI_R)(\omega_{R_{i-1} R})} :  \left((\cM^{\mathrm{test}}_i \otimes \cI_R)(\omega_{R_{i-1} R})\right)_{X_i} = q' \bigr\} \\ 
&\leq \min_{\nu \in \Sigma_i(q)} H(A_i | B_i R)_{\nu} \ ,
\end{align*}
where for the last inequality, we used the fact that for a $\nu \in \Sigma_i(q)$, there exists an $\omega_{R_{i-1} R}$ such that $(\cM_i \otimes \cI_R)(\omega_{R_{i-1} R}) = \nu$ and $\left((\cM^{\mathrm{test}}_i \otimes \cI_R)(\omega_{R_{i-1} R})\right)_{X_i} = q'$. Thus, $f$ is a min-tradeoff function.

Now for $\MinSigma{f}$, we have
\begin{align*}
\MinSigma{f} 
&= \min_{q : \Sigma_i(q) \neq \emptyset} f(q) \\
&\geqslant\min_{q' \in \mbP(\cX')} g(q') \\
&= \Min{g} \ .
\end{align*}
Finally, for the variance, we have for $q$ such that $\Sigma_i(q) \neq \emptyset$,
\begin{align*}
&\sum_{x \in \cX} q(x) \left(f(\delta_{x}) - \sum_{x \in \cX} q(x) f(\delta_x) \right)^2 \\
&= \sum_{x \in \cX'} \gamma q'(x) \left(\Max{g} + \frac{1}{\gamma}(g(\delta_x) - \Max{g}) - g(q') \right)^2 + (1-\gamma) (\Max{g} - g(q'))^2 \\
&= \frac{1}{\gamma} \sum_{x \in \cX'} q'(x) \left(  (\Max{g} - g(\delta_x)) - \gamma (\Max{g} - g(q')) \right)^2 + (1-\gamma) (\Max{g} - g(q'))^2 \ .
\end{align*}
We can expand the first term and get
\begin{align*}
&\frac{1}{\gamma} \sum_{x \in \cX'} q'(x) \left(  (\Max{g} - g(\delta_x)) - \gamma (\Max{g} - g(q')) \right)^2 \\
&=  \sum_{x \in \cX'} \frac{q'(x)}{\gamma} \left( (\Max{g} - g(\delta_x))^2 - 2 \gamma (\Max{g} - g(\delta_x)) (\Max{g} - g(q')) + \gamma^2 (\Max{g} - g(q'))^2 \right)  \\
&=  \sum_{x \in \cX'} \frac{q'(x)}{\gamma} (\Max{g} - g(\delta_x))^2 - 2(\Max{g} - g(q'))^2 + \gamma (\Max{g} - g(q'))^2 \\
&\leqslant \frac{1}{\gamma} \left( \Max{g} - \Min{g} \right)^2 - (2 - \gamma) (\Max{g} - g(q'))^2.
\end{align*}
As a result,
\begin{align*}
\Var{f} \leqslant \frac{1}{\gamma} \left( \Max{g} - \Min{g} \right)^2.
\end{align*}
\end{proof}

Applying Theorem~\ref{thm:entropyaccumulationext} for a map with infrequent sampling, we get a lower bound on the min-entropy of the following form:
\begin{align*}
H_{\min}^{\eps} \geqslant n h - c_1 \sqrt{\frac{n}{\gamma}} - c_2 \ ,
\end{align*}
where $c_1$ and $c_2$ are constants that only depend on $\varepsilon, \rho[\Omega], d_{A}$ and the properties of $g$ but not on $n$ or the testing probability $\gamma$ (in the expression of $c'$ in Theorem~\ref{thm:entropyaccumulationext} the variance $\Var{f}$ can always be lower bounded by $0$). Note that such a bound will be non-trivial as soon as $\gamma \geqslant \frac{c}{n}$ for some constant $c$ (which corresponds to testing a constant number of rounds). This is to be contrasted with the original entropy accumulation theorem~\cite{dfr16} that instead gives a bound of the form $nh - c_1 \frac{\sqrt{n}}{\gamma} - c_2$ and hence will give a trivial bound when $\gamma = o\left(\frac{1}{\sqrt{n}} \right)$.

\section{Sample application: Device-independent randomness expansion}\label{sec:DIRE}
We now apply our result to one of the main problems to which the original EAT was applied, namely randomness expansion~\cite{Col06,CK11,Pir10,VV11,MS14b,arv16}. This was done using the original EAT in~\cite{arv16}, and, to simplify matters, the protocol we will consider here will be essentially the same. The basic task is the following: we are given a pair of devices from a malicious manufacturer; these devices might have been preprogrammed arbitrarily by the manufacturer, but once we have them, they cannot communicate back to the manufacturer. Our goal is to use those devices to generate a uniformly random string, independent from any other data in the universe, and in particular independent from the quantum data the manufacturer might have kept about our devices. It turns out to be impossible to do this without having a little bit of randomness to begin with, but it is possible to expand a small random string into a much longer one.

We give a security proof for the DI-RE protocol based on the CHSH game described in the box below. Recall that the CHSH game works as follows: a referee chooses uniformly random bits $X$ and $Y$ as inputs for the two devices, and the two devices must respond with $A, B \in \{0,1\}$ respectively without communicating with each other after the questions have been received. The devices win the game if $A \XOR B = XY$ and lose otherwise. The best winning probability for devices using a classical strategy is $3/4$, while the optimal quantum strategy wins with probability $\cos^2(\pi/8) \approx 0.85$. In~\cite[Equation (12)]{arv16} (based on \cite[Section 2.3]{pabgs09}), they give a bound on the amount of randomness produced by the devices assuming that they are using a strategy that allows them to win with probability at least $\omega$; this bound is given by:
\begin{equation}
    H(AB|TE,X=x,Y=y) \geqslant g^*(\omega) := 1 - h\left( \frac{1}{2} + \frac{1}{2} \sqrt{16 \omega (\omega - 1) + 3} \right)
    \label{eqn:pabgs09}
\end{equation}
for any inputs $x,y \in \{0,1\}$ and $\omega \in [\frac{3}{4},\cos^2(\pi/8)]$. This bound is zero at $\omega = 3/4$, one at $\omega = \cos^2(\pi/8)$, and becomes nontrivial as soon as $\omega > 3/4$. The devices are initialized in an arbitrary state by the manufacturer, and at every round of the protocol, we play the game with the devices. To ensure that only a small amount of randomness is consumed by the process of generating the inputs, we randomly choose a small number of test positions (by generating a bit $T$ equal to 1 for test rounds and 0 otherwise), and generate $X$ and $Y$ uniformly at random only for those positions. For the other rounds (that we call the ``data'' rounds), we always fix the inputs to $X = 0$ and $Y= 0$. In the parameter estimation step of the protocol, the number of test rounds for which $A \XOR B = XY$ is computed. For mathematical convenience, we will choose the positions of the test rounds in an iid manner; i.e.~each individual round will have a probability $\gamma$ of being a test round.

\begin{figure}[h]
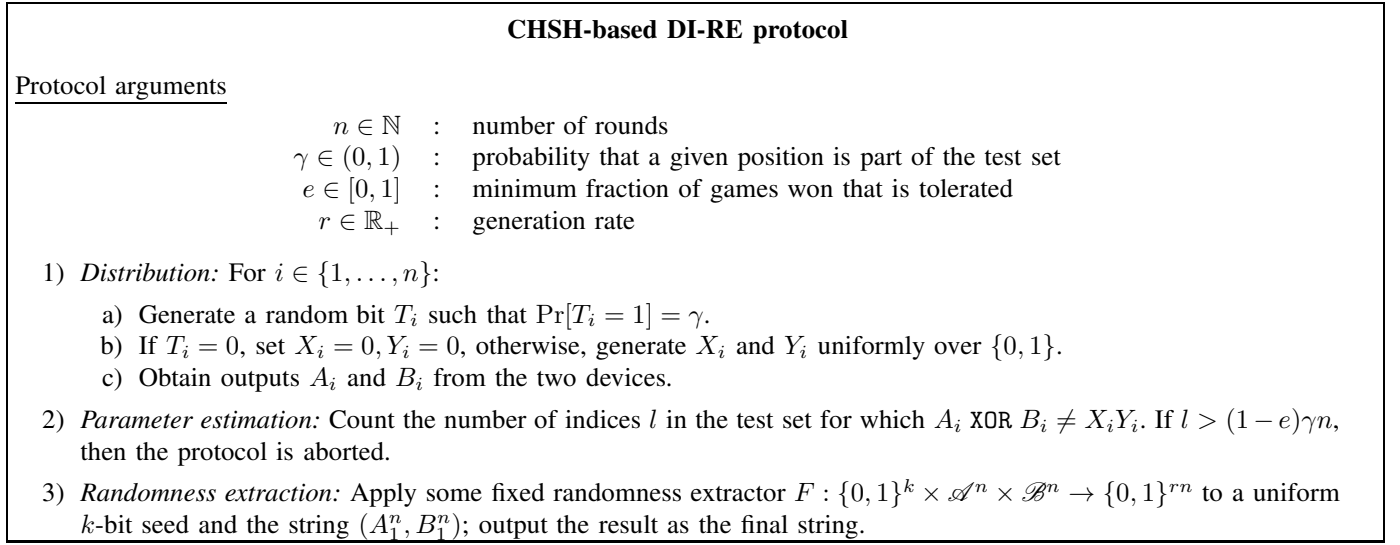

\begin{center}
\fbox{\begin{minipage}{0.96\linewidth}
\smallskip
\begin{center}
\textbf{CHSH-based DI-RE protocol}
\end{center}

\smallskip

\underline{Protocol arguments}
  \vspace{-0.8ex}
\begin{center}
\begin{tabular}{r c l}
  $n \in \mathbb{N}$ & : & number of rounds \\
  $\gamma \in (0,1)$ & : & probability that a given position is part of the test set \\
  $e \in [0,1]$ & : & minimum fraction of games won that is tolerated\\
  $r \in \mathbb{R}_+$ & : & generation rate 
\end{tabular}
\end{center}

\vspace{0.1ex}

\begin{enumerate}
    \setlength{\itemsep}{0.4ex}
    \setlength{\parskip}{0.4ex}
    \setlength{\parsep}{0.4ex} 
    \item \emph{Distribution:} For $i \in \{1, \ldots, n\}$:
        \begin{enumerate}
            \item Generate a random bit $T_i$ such that $\Pr[T_i = 1] = \gamma$. 
            \item If $T_i = 0$, set $X_i = 0, Y_i = 0$, otherwise, generate $X_i$ and $Y_i$ uniformly over $\{0,1\}$.
            \item Obtain outputs $A_i$ and $B_i$ from the two devices.
        \end{enumerate}
  \item \emph{Parameter estimation:}  Count the number of indices $l$ in the test set for which $A_i \XOR B_i \neq X_i Y_i$. If $l > (1-e) \gamma n$, then the protocol is aborted.

  \item \emph{Randomness extraction:} Apply some fixed randomness extractor $F: \{0,1\}^k \times \mathscr{A}^{n} \times \mathscr{B}^n \rightarrow \{0,1\}^{rn}$ to a uniform $k$-bit seed and the string $(A_1^n, B_1^n)$; output the result as the final string.
  \end{enumerate} 
\vspace{-1ex}
\end{minipage} \hspace{1em}}
\end{center} 
\caption{Description of the CHSH-based DI-RE protocol.}
\label{fig:di-re-protocol}
\end{figure}

We model the behavior of the devices as follows. We let $\sigma_{ME}$ be the initial state of the device, $M$ is the system that represents the internal memory of the devices, and $E$ is some reference system that may be in the possession of the manufacturer. Now, let $\cM_i : M \rightarrow M T_i X_i Y_i A_i B_i$ be the TPCP map that is applied by the devices in round $i$. We assume that each of these is of the form depicted in Figure~\ref{fig:di-mi-circuit}, with the position subscript $i$ added to the appropriate systems. The state at the end of step 2 of the protocol is thus:
    \begin{align*} \rho_{M T_1^n X_1^n Y_1^n A_1^n B_1^n E} = \left( \cM_n \circ \dots \circ \cM_1 \right)(\sigma_{ME}) \ ,
  \end{align*}
    and we have computed
    \begin{align*}
    l := |\{i : T_i = 1, A_i \XOR B_i \neq X_i Y_i \}| \ .
    \end{align*}
    Furthermore, we define $\Omega$ as the event that we do not abort after step 2; or, in other words, it is the event that $l \leqslant (1-e) \gamma n$. To apply the entropy accumulation theorem to this setting, we need a min-tradeoff function for the $\mathcal{M}_i$'s. Since Theorem \ref{thm:entropyaccumulationext} demands an affine tradeoff function, the natural choice is to pick the tangent to $g^*$ in \eqref{eqn:pabgs09} at a suitably chosen point $\omega \in (\frac{3}{4}, \cos^2(\frac{\pi}{8}))$. Note that we must also check that the tradeoff function is defined appropriately for all possible distributions we might observe.\footnote{For instance, we might observe a winning rate strictly above $\cos^2(\pi/8)$ on the testing rounds: if the true winning probability of the devices is very close to optimal, then statistical fluctuations might push us slightly over the edge.} 

\begin{figure}[h]
    \centering
    \begin{tikzpicture}[thick]
        \draw  
            (0,0) node[porte, minimum height=1.2cm] (c) {$\mathcal{C}$}
            (c) ++(4, 0) node[porte, minimum height=1.2cm] (d) {$\mathcal{D}$}
            (c) ++(2, 2) node[porte, minimum height=1.2cm] (a) {$\mathcal{A}$}
            (c) ++(2, -2) node[porte, minimum height=1.2cm] (b) {$\mathcal{B}$}
            ;
        \draw   
            (c.east) ++(0, .3) coordinate (c-out1)
            (c.east) ++(0, -.3) coordinate (c-out2)
            (a.west) ++(0, .3) coordinate (a-in1)
            (a.west) ++(0, -.3) coordinate (a-in2)
            (a.east) ++(0, .3) coordinate (a-out1)
            (a.east) ++(0, -.3) coordinate (a-out2)
            (b.west) ++(0, .3) coordinate (b-in1)
            (b.west) ++(0, -.3) coordinate (b-in2)
            (b.east) ++(0, .3) coordinate (b-out1)
            (b.east) ++(0, -.3) coordinate (b-out2)
            (d.west) ++(0, .3) coordinate (d-in1)
            (d.west) ++(0, -.3) coordinate (d-in2)
            ;
        \draw
            (d.east) ++(1.5, 0) coordinate (rightwall)
            (c.west) ++(-1.5, 0) coordinate (leftwall)
            ;
        \draw
            (c.west -| leftwall) node[left] {$M$} to (c.west)
            (c-out1) to[out=0, in=180] node[above left] {$S_A$} (a-in2)
            (c-out2) to[out=0, in=180] node[above right] {$S_B$} (b-in1)
            (a-out2) to[out=0, in=180] node[above right] {$S_A$} (d-in1)
            (b-out1) to[out=0, in=180] node[above left] {$S_B$} (d-in2)
            (a-in1) ++(-1,0) node[left] {$X$} to (a-in1)
            (b-in2) ++(-1,0) node[left] {$Y$} to (b-in2)
            (a-out1) to (a-out1 -| rightwall) node[right] {$A$}
            (b-out2) to (b-out2 -| rightwall) node[right] {$B$}
            (d.east) to (d.east -| rightwall) node[right] {$M$}
            (a-in1) ++(-.5, 0) to ++(.2, .5) coordinate (atmp) to (atmp -| rightwall) node[right] {$X$}
            (b-in2) ++(-.5, 0) to ++(.2, -.5) coordinate (btmp) to (btmp -| rightwall) node[right] {$Y$}
            (a-in1) ++(-1, 1) node[left] (t) {$T$}  to (t -| rightwall) node[right] {$T$}
            ;
        \node[draw, dashed, fit=(a) (b) (c) (d) (t) (btmp)] {};
    \end{tikzpicture}
    \caption{Circuit diagram of $\cM: M \rightarrow MTXYAB$. For every round of the protocol, a circuit of this form is applied, where $\mathcal{A}$ and $\mathcal{B}$ are arbitrary TPCP maps with classical output systems $A$ and $B$, respectively, $\mathcal{C}$ and $\mathcal{D}$ are arbitrary TPCP maps, $T$ is a bit equal to 1 with probability $\gamma$, and $X$ and $Y$ are generated uniformly at random whenever $T=0$, and are fixed to $0,0$ otherwise.}
    \label{fig:di-mi-circuit}
\end{figure}
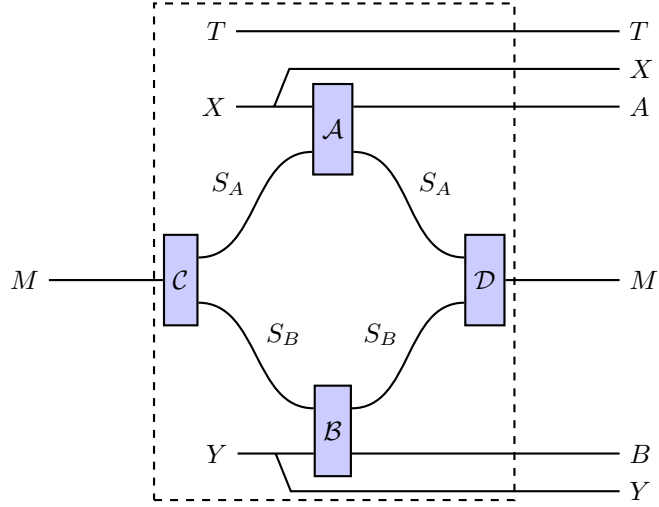

We are now going to use entropy accumulation to prove Theorem~\ref{thm:di-re-security} below, which gives a bound on the randomness generation rate~$r$, i.e., the ratio of uniform bits that can be generated per round of CHSH. 
To get a feeling for the sort of entropy production rates that can be expected of this protocol, we have plotted the final rate obtained (i.e.~the lower bound on $\frac{1}{n} H_{\min}^{\varepsilon}(A_1^n B_1^n | E T_1^n X_1^n Y_1^n)$) as a function of the number of rounds $n$ when we fix the threshold $e$ to 0.8, and when we vary the sampling probability $\gamma$. The result is in Figure~\ref{fig:rate-plots}. We note that the bounds in the figure are not obtained using the bound stated in Theorem~\ref{thm:entropyaccumulationext} directly but rather we used \eqref{eqn:eat-min-min-alpha} with an $\alpha$ optimized numerically for each point on the curve.


\begin{theorem}\label{thm:di-re-security}
    For any device fulfilling the above conditions and for any $\varepsilon \in (0,1)$, testing probability $\gamma \in (0,1)$ and $\frac{3}{4} < e < \cos^2(\frac{\pi}{8})$, after step 2 of the CHSH-based DI-RE protocol, it is the case that either:
    \begin{enumerate}
        \item The min-entropy of $\rho_{|\Omega}$ satisfies:
            \[ H_{\min}^{\varepsilon}(A_1^n B_1^n|E T_1^n X_1^n Y_1^n)_{\rho_{|\Omega}} > ng^*(e) - \sqrt{\frac{n}{\gamma}} c - c', \]
            where 
            $c = \sqrt{2 \ln 2} \left( \log 33 + \sqrt{2 + \frac{1}{\gamma} \left(\frac{dg^*}{d\omega}(e)\right)^2} \right)\sqrt{1 - 4 \log (\varepsilon)}$
            and $c'$ is a constant only depending on $\varepsilon$ and $\frac{dg^*}{d\omega}(e)$, 
            or
        \item The protocol aborts with probability at least $1-\varepsilon$.
    \end{enumerate}
\end{theorem}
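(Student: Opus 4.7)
The plan is to apply Theorem~\ref{thm:entropyaccumulationext} via the infrequent-sampling framework of Lemma~\ref{lem:infrequent_sampling}. First, I cast the protocol into the EAT notation by setting $A_i^{\mathrm{EAT}} := A_iB_i$ (two classical bits, so $d_A = 4$) and $B_i^{\mathrm{EAT}} := T_iX_iY_i$, with $E$ as the side information. The role of the classical ``$X_i$'' in the EAT is played by a statistic $C_i \in \{\Win, \Lose, \bot\}$ obtained deterministically from $T_iX_iY_iA_iB_i$ as the CHSH win/loss indicator on test rounds and $\bot$ on data rounds. With this identification the channel factors as $\cM_i = \gamma \cM_i^{\mathrm{test}} + (1-\gamma) \cM_i^{\mathrm{data}} \otimes \proj{\bot}_{C_i}$, so Lemma~\ref{lem:infrequent_sampling} applies with $\cX' = \{\Win, \Lose\}$. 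The Markov condition~\eqref{eq_Markovgen} holds immediately: $T_i, X_i, Y_i$ are freshly sampled by the protocol, independently of all prior classical and quantum data.

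Next, I take the crossover tradeoff function $g$ to be the tangent to $g^*$ at $\omega = e$, namely $g(q') = g^*(e) + \frac{dg^*}{d\omega}(e)\bigl(q'(\Win) - e\bigr)$. Concavity of $g^*$ on $[\tfrac{3}{4}, \cos^2(\tfrac{\pi}{8})]$ combined with~\eqref{eqn:pabgs09} ensures that this affine $g$ satisfies the bound~\eqref{eq:tradeoff_g} for all winning probabilities in the quantum range; distributions outside this range either have empty $\Sigma_i$ (hence contribute nothing) or are handled by the vacuous clause of Lemma~\ref{lem:infrequent_sampling}, corresponding to the technicality flagged in the footnote after~\eqref{eqn:pabgs09}. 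The lemma then produces an affine min-tradeoff function $f$ for $\cM_i$ with $\Max{g} - \Min{g} = \frac{dg^*}{d\omega}(e)$, hence $\Var{f} \leqslant \frac{1}{\gamma}\bigl(\frac{dg^*}{d\omega}(e)\bigr)^2$ and $\Max{f} - \MinSigma{f} \leqslant \frac{dg^*}{d\omega}(e)$.

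It remains to check that $\Omega$ implies the tradeoff hypothesis of Theorem~\ref{thm:entropyaccumulationext}. Using $f(\delta_{\Win}) = f(\delta_\bot) = \Max{g}$ and $f(\delta_{\Lose}) = \Max{g} - \frac{1}{\gamma}\frac{dg^*}{d\omega}(e)$, affineness yields $f(q) = \Max{g} - \frac{q(\Lose)}{\gamma}\frac{dg^*}{d\omega}(e)$ for every $q$. The parameter-estimation rule $l \leqslant (1-e)\gamma n$ is exactly $\freq{C_1^n}(\Lose) \leqslant (1-e)\gamma$, which plugged into the formula gives $f(\freq{C_1^n}) \geqslant \Max{g} - (1-e)\frac{dg^*}{d\omega}(e) = g^*(e)$, so on $\Omega$ we have $f(\freq{C_1^n}) \geqslant h$ for $h := g^*(e)$.

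Finally I conclude by a case split on $\rho[\Omega]$. If $\rho[\Omega] < \varepsilon$, the protocol aborts with probability at least $1-\varepsilon$, giving alternative~2. Otherwise $\rho[\Omega] \geqslant \varepsilon$, hence $1 - 2\log(\varepsilon\rho[\Omega]) \leqslant 1 - 4\log\varepsilon$, and feeding $d_A = 4$ (whence $\log(2d_A^2+1) = \log 33$), $h = g^*(e)$, and the above bounds on $\Var{f}$ and $\Max{f} - \MinSigma{f}$ into Theorem~\ref{thm:entropyaccumulationext} yields alternative~1 with exactly the stated $c$ and a remainder $c'$ that, after using $\rho[\Omega] \geqslant \varepsilon$ and $\Max{f} - \MinSigma{f} \leqslant \frac{dg^*}{d\omega}(e)$, depends only on $\varepsilon$ and $\frac{dg^*}{d\omega}(e)$. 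The only non-mechanical step in this plan is the validity of the chosen affine $g$ across the whole simplex of observable distributions---essentially the footnote concern about winning rates near or outside $[\tfrac{3}{4}, \cos^2(\tfrac{\pi}{8})]$---but this is a bookkeeping issue absorbed into the choice of tangent point, not a genuine obstacle; the rest of the argument is direct substitution into the machinery already built up in Theorem~\ref{thm:entropyaccumulationext} and Lemma~\ref{lem:infrequent_sampling}.
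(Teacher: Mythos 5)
Your route is the same as the paper's: the same EAT substitutions ($A_i\to A_iB_i$, $B_i\to T_iX_iY_i$, the statistic $C_i$), the same tangent-at-$e$ crossover function fed through Lemma~\ref{lem:infrequent_sampling}, the same case split on $\rho[\Omega]\gtrless\varepsilon$, and the bookkeeping matches ($d_A=4$ giving $\log 33$; $\Max{g}-\Min{g}=\frac{dg^*}{d\omega}(e)$, hence the stated bounds on $\Var{f}$ and $\Max{f}-\MinSigma{f}$; $f(\freq{C_1^n})\geqslant g^*(e)$ on $\Omega$).

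However, your verification that the tangent $g$ satisfies~\eqref{eq:tradeoff_g} contains two errors. First, you invoke \emph{concavity} of $g^*$; the property actually needed is \emph{convexity}: a tangent line lies below a convex function, whereas under concavity the tangent at $e$ would sit \emph{above} $g^*$ and would fail to be a valid lower bound, so the argument as you state it points the wrong way. Second, you dismiss observed winning probabilities outside $[\tfrac{3}{4},\cos^2(\tfrac{\pi}{8})]$ by saying the corresponding $\Sigma_i$ is empty or that the vacuous clause of Lemma~\ref{lem:infrequent_sampling} applies. That is only true for $q'(\Win)>\cos^2(\tfrac{\pi}{8})$; winning probabilities at or below $\tfrac{3}{4}$ are perfectly achievable (e.g.\ classical or deliberately losing strategies), so~\eqref{eq:tradeoff_g} is a genuine constraint there and cannot be waved off as ``bookkeeping absorbed into the choice of tangent point.'' The missing step, which the paper supplies, is short: since the tangent has positive slope and $g^*(\tfrac{3}{4})=0$, one has $g_e(p)\leqslant g_e(\tfrac{3}{4})\leqslant g^*(\tfrac{3}{4})=0$ for all $p\in[0,\tfrac{3}{4}]$, while $H(A_iB_i|T_iX_iY_iR)\geqslant 0$ because $A_iB_i$ are classical, so the constraint holds trivially on that range. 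With these two corrections your proof coincides with the paper's.
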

First note that applying a Chernoff bound, it is simple to see that provided $e < \cos^2(\pi/8)$, there exist devices that abort the protocol with probability $2^{-\Omega(\gamma n)}$. In addition, provided one is in the first case, one can obtain a secure random string of length roughly $ng^*(e)$ by choosing the extractor $F$ to be some quantum-proof randomness extractor, such as those presented in~\cite{DPVR09}. The protocol uses approximately $(h(\gamma) + 2\gamma)n$ random bits, to decide about the testing rounds and to choose the inputs of the players on those rounds and $O(\log^3 n)$ random bits for the seed of the randomness extractor. By taking $\gamma = \Theta(\frac{\log n}{n})$ for instance we have used a polylogarithmic (in $n$) number of random bits and generated a linear number of bits $n$, thus achieving exponential randomness expansion. We refer the reader to~\cite{CVY13,MS14b} for further discussions on the way to generate the random bits needed for the protocol.
\begin{proof}
    We apply Theorem~\ref{thm:entropyaccumulationext} on $\rho$ with the substitutions $A_i \rightarrow A_i B_i$, $B_i \rightarrow T_i X_i Y_i$,  and $X_i \rightarrow C_i$,  where 
    \begin{align*}
     C_i = \begin{cases} \bot & \text{ if $T_i=0$}\\
        1 & \text{ if $T_i=1$ and $A_i \XOR B_i = X_i Y_i$}\\
        0 & \text{ if $T_i=1$ and $A_i \XOR B_i \neq X_i Y_i$.} \end{cases}
      \end{align*}
      Note that $C_i$ is a deterministic function of the classical registers $A_i B_i X_i Y_i T_i$ and the Markov conditions are clearly satisfied.
 
Note that the maps $\cM_i$ correspond to infrequent sampling maps with testing probability $\gamma$. As such, to compute a tradeoff function, we use the approach proposed in Lemma~\ref{lem:infrequent_sampling}. We start by determining a function $g : \mbP(\{0,1\}) \to \mbR$ satisfying the property~\eqref{eq:tradeoff_g}. Note that a distribution $q \in \mbP(\{0,1\})$ can be uniquely specified by $q(1) \in [0,1]$. For this reason, we will interpret $g$ as a function $g : [0,1] \to \mbR$. Note that the map $\cM_i^{\mathrm{test}}$ is of the form in Figure~\ref{fig:di-mi-circuit} except that $T$ is fixed to $1$ and thus $X$ and $Y$ are chosen uniformly at random, whereas $\cM_i^{\mathrm{data}}$ corresponds to $T$ being fixed to $0$. The inequality~\eqref{eqn:pabgs09} mentioned above shows that $g^*$ 
satisfies the property~\eqref{eq:tradeoff_g} when $q'(1) \in [\frac{3}{4}, \cos^2(\frac{\pi}{8})]$. 
However, $g^*$ is not an affine function. Nonetheless, $g^*$ is convex so any tangent provides a lower bound and also satisfies the property~\eqref{eq:tradeoff_g}. We consider the function obtained by taking the tangent at the point $p_b \in (\frac{3}{4}, \cos^2(\frac{\pi}{8}))$: for $p \in [0,1]$
\begin{align}
\label{eq:tangent} 
g_{p_b}(p) &= g^*(p_b) + (p-p_b) \frac{d g^*}{d \omega}(p_b) \ .
\end{align}
Note that $g_{p_b}$ satisfies property~\eqref{eq:tradeoff_g} required by Lemma~\ref{lem:infrequent_sampling}: when $p \in [\frac{3}{4}, \cos^2(\frac{\pi}{8})]$ it follows from the fact that $g_{p_b}(p) \leq g^*(p)$, when $p \in [0,\frac{3}{4}]$ from the fact that $g_{p_b}(p) \leqslant 0$ and recall that for $p > \cos^2(\pi/8)$, the right hand side of~\eqref{eq:tradeoff_g} is $+ \infty$. To get the bound stated in the theorem, we simply take $p_b = e$, but we note that choosing $p_b < e$ can lead to better bounds depending on the values of $\gamma$ and $n$. Note that $\Max{g_{p_b}} = g_{p_b}(1)$ and $\Min{g_{p_b}} = g_{p_b}(0)$. Applying Lemma~\ref{lem:infrequent_sampling}, we get a min-tradeoff function $f$ defined by $f(\delta_0) = g_{p_b}(1) + \frac{1}{\gamma}(g_{p_b}(0) - g_{p_b}(1)), f(\delta_1) = f(\delta_{\perp}) = g_{p_b}(1)$ and satisfies $\MinSigma{f} \geqslant g_{p_b}(0)$ and $\Var{f} \leq \frac{1}{\gamma}(g_{p_b}(1) - g_{p_b}(0))^2$.

As previously mentioned $\Omega$ is defined to be the event of not aborting, i.e. using the notation of the EAT we have
\[
\Omega = \{x_1^n \in \{0,1,\perp\}^n : |\{ i : x_i = 0 \}| \leq (1-e) \gamma n\}.
\] 
Observe that we have for $x_1^n \in \Omega$, 
\begin{align*}
f(\freq{x_1^n}) 
&= \freq{x_1^n}(0) f(\delta_0) + \freq{x_1^n}(1) f(\delta_1) + \freq{x_1^n}(\perp) f(\delta_{\perp}) \\
&= \freq{x_1^n}(0) \left(g_{p_b}(1) - \frac{1}{\gamma} (g_{p_b}(1) - g_{p_b}(0)) \right) + (1 -  \freq{x_1^n}(0)) g_{p_b}(1) \\
&\geqslant g_{p_b}(1) + (1-e) (g_{p_b}(0) - g_{p_b}(1)) \\
&= g_{p_b}(e) \ .
\end{align*}
Note that if $\Pr[\Omega] < \varepsilon$, then we are in case 2 of the theorem, so we will assume that $\rho[\Omega] = \Pr[\Omega] \geqslant\varepsilon$. Applying Theorem~\ref{thm:entropyaccumulationext}, we get
    \[ H_{\min}^{\varepsilon}(A_1^n B_1^n| E T_1^n X_1^n Y_1^n)_{\rho_{|\Omega}} > n g_{p_b}(e) - \sqrt{\frac{n}{\gamma}} c - c', \]
where $c = \sqrt{2 \ln 2} \left( \log 33 + \sqrt{2 + \frac{1}{\gamma} \left(\frac{dg^*}{d\omega}(p_b)\right)^2} \right)\sqrt{1 - 4 \log (\varepsilon)}$ and $c'$ is a constant only depending on $\varepsilon$ and $\frac{dg^*}{d\omega}(p_b)$.
\end{proof}

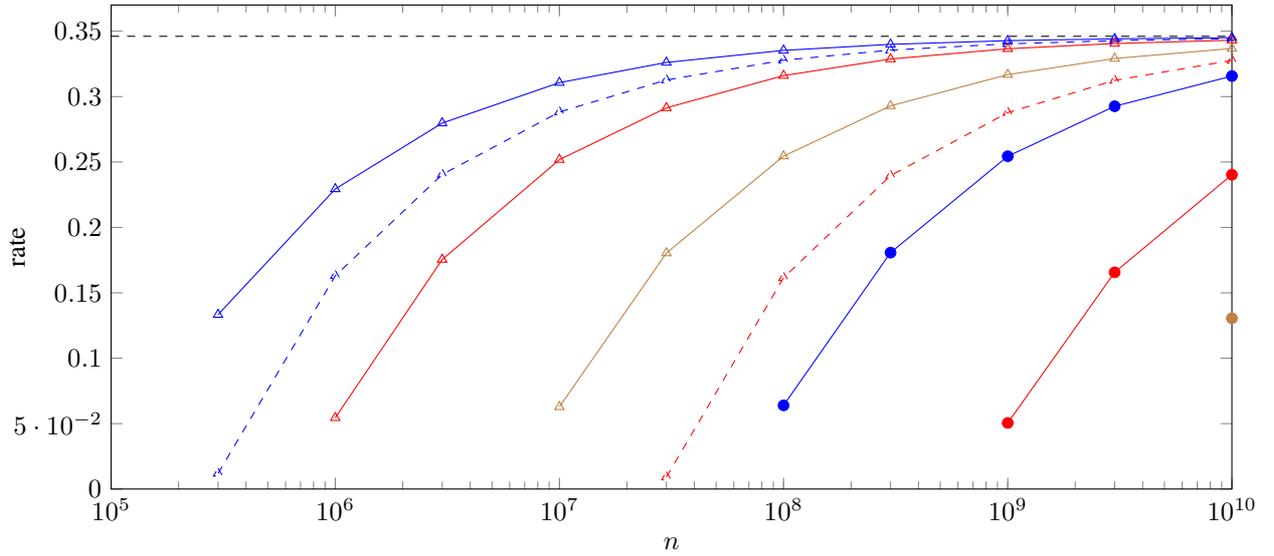
\begin{figure}
    \begin{center}
    \begin{tikzpicture}
        \begin{semilogxaxis}[
            xlabel=$n$,
            ylabel=rate,
            xmin=1e5,
            xmax=1e10, 
            ymin=0, 
            ymax=0.37, 
            width=0.9\textwidth, 
            height=8cm,
        ]
            \addplot[blue, mark=triangle] table {rate-plots-0.dat}; 
            \addplot[red, mark=triangle] table {rate-plots-1.dat}; 
            \addplot[brown, mark=triangle] table {rate-plots-2.dat}; 
            \addplot[blue, mark=*] table {rate-plots-3.dat}; 
            \addplot[red, mark=*] table {rate-plots-4.dat}; 
            \addplot[brown, mark=*] table {rate-plots-5.dat}; 

            \addplot[dashed, blue, mark=triangle] table {rate-plots-arv-0.dat}; 
            \addplot[dashed, red, mark=triangle] table {rate-plots-arv-1.dat}; 

            \draw (axis cs:100000, 0.3461) edge[dashed] (axis cs:1e10, 0.3461);
    \end{semilogxaxis}
    \end{tikzpicture}
    \end{center}
    \caption{Plot of the final entropy rate achieved as a function of the number of rounds $n$ for several values of the sampling probability $\gamma$, when fixing the winning threshold $e$ to 0.8, for both this paper (solid lines) and using the blocking technique of~\cite[Appendix B, rates given by equation (36)]{arv16} (dashed lines). In decreasing order, we have $\gamma=1$, $\gamma=0.1$, $\gamma=0.01$, $\gamma=0.001$, $\gamma=0.0001$, and the last point is $\gamma=3 \times 10^{-5}$. We point out that the blocking technique of~\cite{arv16} only gives positive rates in this regime when $\gamma = 1$ and $\gamma = 0.1$. Here, we fixed $\varepsilon=10^{-5}$ and assumed that $\rho[\Omega] \geqslant 10^{-5}$. The black dashed line at the top corresponds to the first-order rate of roughly 0.3461, i.e.~when $n \rightarrow \infty$. }
\label{fig:rate-plots}
\end{figure}

\section{Conclusion and open problems}\label{sec:conclusion}
The new version of the entropy accumulation theorem presented here can now be applied directly to protocols with infrequent sampling (or to other situations where the entropy variance is significantly different from the local dimension) without paying too heavy a price. In particular, in the infrequent sampling case, the scaling in the sampling frequency $\gamma$ roughly matches what we would expect in the classical i.i.d.~case from Chernoff-type bounds.

As we noted earlier, another way to obtain this scaling in the infrequent sampling case is by the blocking technique used in~\cite[Appendix B]{arv16}: instead of applying the (original) EAT to individual rounds, they apply it to blocks of size $O(\frac{1}{\gamma})$, which ensures that each block has roughly one test round. By doing this, one gets a tradeoff function (which now acts on blocks) whose gradient scales correctly. There are multiple advantages of our method over this technique. First, it is more general, since it can gracefully handle cases beyond infrequent sampling, where $\Var{f}$ is substantially different from the local dimension for other reasons. It is also more natural and simpler to use, there is no need to handle the additional parameters related to the blocking. Furthermore, it appears to lead to significantly better bounds: the numerical results we obtain for the protocol given in Section~\ref{sec:DIRE} are substantially better (see the dashed lines in Figure~\ref{fig:rate-plots} for the bounds obtained using the blocking method), and there is no particular reason to think that this case is not representative. 

While the results given here are largely good enough in practice, there are still open questions remaining. First: can we find a version of the theorem with an optimal second-order term? Ideally, we could hope for a second-order term that matches what we see in the i.i.d.~case, which would look like $\sqrt{n V} \Phi^{-1}(\varepsilon^2)$ (e.g., in~\cite{th12}), where $\Phi$ is the cumulative distribution function of a Gaussian distribution, and $V$ would be an appropriate entropy variance term. Here we fall short of this in two ways: first, our $V$ quantity is the result of applying some inequalities in the proof (see Equation \eqref{eqn:V-inequality}) that are not always tight; this may however be unavoidable if one wants to have a clean expression in terms of $\Var{f}$. The second issue is that the dependence in $\varepsilon$ does not match the $\Phi^{-1}(\varepsilon)$ or $\Phi^{-1}(\varepsilon^2)$ that is usually seen in second-order expansions, but is instead similar to what is done in the fully quantum AEP of \cite{TCR09} and in the original EAT. This also seems very difficult to overcome in our situation, since these terms usually arise from an application of the Berry-Esseen theorem, which quantifies how much a sum of iid random variables diverges from a normal distribution, and therefore depends very strongly on the iid assumption which we do not have here.

We could also scale back our goals a bit and try to improve the last term, namely the $c'$ in Equation~\eqref{eqn:eat-min}. As it stands, this term arises from a sequence of ad-hoc inequalities that could very well be improved. It would be particularly interesting to understand which parameters this term should ``really'' depend on: for example, the expression we give here depends on $\rho[\Omega]$ and $\varepsilon$, but this may well be an artifact of our choice of $\alpha$ in the proof. We thus leave these questions as open problems.

\appendix

\section{Various lemmas}

\begin{lemma}[Lemma B.5 in \cite{dfr16}]\label{lem:abx-chain-rule-opt}
    Let $\rho_{AB}$ be a quantum state of the form $\rho = \sum_x p_x {\rho}_{AB|x}$, where $\{ p_x \}$ is a probability distribution over $\mathcal{X}$. Then, for any $x \in \mathcal{X}$ and any $\alpha \in (1, \infty)$,
    \begin{align}
        \label{eqn:abx-chain-rule-gtr1-opt} H^{\uparrow}_\alpha(A|B)_{\rho} - \frac{\alpha}{\alpha - 1} \log \left( \frac{1}{p_x} \right)  &\leqslant  H^{\uparrow}_{\alpha}(A|B)_{\rho_{|x}} \ .
    \end{align}
    and for $\alpha \in (0,1)$, 
        \begin{align}
        \label{eqn:abx-chain-rule-gtr1-opt-lessthan1} H^{\uparrow}_\alpha(A|B)_{\rho} - \frac{\alpha}{\alpha - 1} \log \left( \frac{1}{p_x} \right)  &\geqslant  H^{\uparrow}_{\alpha}(A|B)_{\rho_{|x}} \ .
    \end{align}
\end{lemma}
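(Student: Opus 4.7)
The plan is to reduce the inequality on $H^{\uparrow}_{\alpha}$ to a pointwise bound on the sandwiched divergence $D_{\alpha}(\rho_{AB} \| \ident_A \otimes \sigma_B)$ for a fixed auxiliary operator $\sigma_B$, and then optimize on the appropriate side to reintroduce the infimum.

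First I would fix a positive operator $\sigma_B$ with $\Supp(\rho_B) \subseteq \Supp(\sigma_B)$, set $\alpha' = (\alpha-1)/\alpha$, and introduce the sandwiched operators $M := (\ident_A \otimes \sigma_B)^{-\alpha'/2}\,\rho_{AB}\,(\ident_A \otimes \sigma_B)^{-\alpha'/2}$ and $M_x$ defined the same way with $\rho_{AB|x}$ in place of $\rho_{AB}$. Because conjugation is linear and $p_x \geq 0$, the decomposition $\rho_{AB} = \sum_x p_x \rho_{AB|x}$ lifts to $M = \sum_x p_x M_x$, so in particular $M \geq p_x M_x \geq 0$ in the positive semidefinite order. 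The core operator-theoretic step is the Weyl/min-max observation that $A \geq B \geq 0$ implies $\lambda_i(A) \geq \lambda_i(B)$ for the decreasingly ordered eigenvalues, hence $\tr[A^{r}] \geq \tr[B^{r}]$ for every $r > 0$. Applying this with $A = M$, $B = p_x M_x$ and $r = \alpha$ gives the pointwise estimate
\begin{align*}
    \tr[M^{\alpha}] \;\geq\; p_x^{\alpha}\,\tr[M_x^{\alpha}].
\end{align*}

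Taking logarithms and dividing by $\alpha-1$ translates this into a comparison of sandwiched divergences, but the sign of $\alpha-1$ flips the direction. For $\alpha > 1$ this yields
\begin{align*}
    D_{\alpha}(\rho_{AB}\,\|\,\ident_A \otimes \sigma_B) \;\geq\; D_{\alpha}(\rho_{AB|x}\,\|\,\ident_A \otimes \sigma_B) \;-\; \tfrac{\alpha}{\alpha-1}\log\tfrac{1}{p_x},
\end{align*}
while for $\alpha \in (0,1)$ the opposite inequality holds with the same correction term. Both bounds are valid for every admissible $\sigma_B$.

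To close, I would pass back to $H^{\uparrow}_{\alpha} = -\inf_{\sigma_B} D_{\alpha}(\cdot \,\|\, \ident_A \otimes \sigma_B)$ by evaluating the pointwise bound at a minimizer on the appropriate side. For $\alpha > 1$, plugging in the $\sigma_B^{\star}$ achieving $\inf_{\sigma_B} D_{\alpha}(\rho_{AB}\|\ident_A \otimes \sigma_B)$ gives $\inf_{\sigma_B} D_{\alpha}(\rho_{AB}\|\ident_A \otimes \sigma_B) + \tfrac{\alpha}{\alpha-1}\log(1/p_x) \geq D_{\alpha}(\rho_{AB|x}\|\ident_A \otimes \sigma_B^{\star}) \geq \inf_{\tau_B} D_{\alpha}(\rho_{AB|x}\|\ident_A \otimes \tau_B)$, which is \eqref{eqn:abx-chain-rule-gtr1-opt} after negating; for $\alpha \in (0,1)$ the symmetric argument evaluated at the minimizer on the $\rho_{AB|x}$ side yields \eqref{eqn:abx-chain-rule-gtr1-opt-lessthan1}. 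I do not expect any serious obstacle: everything is direct manipulation of eigenvalue comparisons, and the only real bookkeeping concerns are the sign of $\alpha-1$ and ensuring the infima are attained, which is standard once one restricts to states of full support on $\Supp(\rho_B)$.
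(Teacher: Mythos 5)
Your proof is correct. Note that the paper does not prove this lemma itself \textemdash{} it is imported verbatim as Lemma B.5 of \cite{dfr16} \textemdash{} and your argument (lift the decomposition through the sandwiching to get $M \geqslant p_x M_x \geqslant 0$, use monotonicity of $\tr[(\cdot)^{\alpha}]$ under the positive-semidefinite order, account for the sign of $\alpha-1$, then evaluate the pointwise bound at a minimizing $\sigma_B$ on the appropriate side) is essentially the same dominance argument used in the original reference. The remaining caveats are harmless: for $\alpha>1$ the case $\Supp(\rho_{AB}) \not\subseteq \Supp(\ident_A \otimes \sigma_B)$ makes the left-hand divergence $+\infty$ and the pointwise bound trivial, while $\Supp(\rho_{AB|x}) \subseteq \Supp(\rho_{AB})$ ensures consistency, and exact attainment of the infima is not needed since near-minimizers suffice.
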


\begin{lemma}\label{lem:hprimealpha-part-classical-dim-bound}
    Let $\rho_{ABX} = \sum_x p_x \rho_{AB}(x) \otimes \proj{x}_X$ be a quantum state with $X$ classical. Then, for any $\alpha \in (0, 1) \cup (1, 2]$, we have
    \[ H'_{\alpha}(AX|B) \geqslant -\log d_A. \]
\end{lemma}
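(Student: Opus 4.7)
The plan is to unfold the definition of $H'_\alpha(AX|B)$ using the classical structure of $X$, and then reduce the bound to the per-$x$ version of the desired inequality, namely $H'_\alpha(A|B)_{\rho(x)} \geqslant -\log d_A$, which is standard (see e.g., \cite[Lemma 5.2]{Tom15book}).

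Concretely, I would first write $\rho_{ABX}^\alpha = \sum_x p_x^\alpha \,\rho_{AB}(x)^\alpha \otimes \proj{x}_X$ and $(\ident_{AX}\otimes \rho_B)^{1-\alpha} = \ident_A \otimes \ident_X \otimes \rho_B^{1-\alpha}$, from which
\[
\tr\!\left[\rho_{ABX}^\alpha (\ident_{AX}\otimes \rho_B)^{1-\alpha}\right] \;=\; \sum_{x} p_x^\alpha \,\tr\!\left[\rho_{AB}(x)^\alpha (\ident_A \otimes \rho_B^{1-\alpha})\right].
\]
The next step uses the inequality $\rho_B \geqslant p_x\,\rho_B(x)$, valid for each $x$ because $\rho_B = \sum_{x'} p_{x'} \rho_B(x')$. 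Since the function $t\mapsto t^{1-\alpha}$ is operator monotone for $\alpha\in(0,1)$ and operator antimonotone for $\alpha\in(1,2]$, this translates into $\rho_B^{1-\alpha} \geqslant p_x^{1-\alpha} \rho_B(x)^{1-\alpha}$ when $\alpha<1$ and the reverse inequality when $\alpha>1$. Pairing against the positive operator $\rho_{AB}(x)^\alpha$ gives
\[
\tr\!\left[\rho_{AB}(x)^\alpha (\ident_A\otimes \rho_B^{1-\alpha})\right] \;\gtreqless\; p_x^{1-\alpha}\, \tr\!\left[\rho_{AB}(x)^\alpha (\ident_A\otimes \rho_B(x)^{1-\alpha})\right] \;=\; p_x^{1-\alpha}\, 2^{-(\alpha-1)\,H'_\alpha(A|B)_{\rho(x)}},
\]
where the inequality is $\geqslant$ if $\alpha<1$ and $\leqslant$ if $\alpha>1$.

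Now I would invoke the per-$x$ bound $H'_\alpha(A|B)_{\rho(x)} \geqslant -\log d_A$. In the regime $\alpha>1$, this yields $2^{-(\alpha-1)H'_\alpha(A|B)_{\rho(x)}}\leqslant d_A^{\alpha-1}$, so summing gives
\[
\tr\!\left[\rho_{ABX}^\alpha (\ident_{AX}\otimes \rho_B)^{1-\alpha}\right] \;\leqslant\; \sum_x p_x^\alpha \cdot p_x^{1-\alpha}\cdot d_A^{\alpha-1} \;=\; d_A^{\alpha-1},
\]
which, after dividing by $\alpha-1>0$ and taking $\log$, gives $D'_\alpha(\rho_{ABX}\|\ident_{AX}\otimes\rho_B)\leqslant \log d_A$. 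The case $\alpha\in(0,1)$ is entirely analogous, with all the inequalities flipped and the factor $\frac{1}{\alpha-1}<0$ flipping them back, again producing $D'_\alpha\leqslant \log d_A$. In either case, negating yields $H'_\alpha(AX|B)\geqslant -\log d_A$.

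The only minor subtlety I anticipate is the handling of support conditions on $\rho_B$ (so that $\rho_B^{1-\alpha}$ is defined). This is handled by restricting everything to $\Supp(\rho_B)$ from the start, which is consistent with the conventions in Definition~\ref{def:petz-renyi-divergence}, and noting that $\Supp(\rho_B(x))\subseteq \Supp(\rho_B)$ for every $x$, so the comparison $\rho_B \geqslant p_x\rho_B(x)$ and the operator monotonicity argument remain valid on this common subspace.
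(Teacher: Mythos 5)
Your proof is correct, but it takes a genuinely different route from the paper's. The paper adjoins a classical copy $X'$ of $X$ to the conditioning system and invokes data processing, $H'_{\alpha}(AX|B)_{\rho} \geqslant H'_{\alpha}(AX|BX')_{\rho}$; on the extended state the trace factorizes exactly, giving $H'_{\alpha}(AX|BX') = \frac{1}{1-\alpha}\log\sum_x p_x 2^{(1-\alpha)H'_{\alpha}(A|B,X=x)}$, and the per-$x$ dimension bound $H'_{\alpha}(A|B,X=x)\geqslant -\log d_A$ finishes the argument. You instead work directly with $\tr\left[\rho_{ABX}^{\alpha}(\ident_{AX}\otimes\rho_B)^{1-\alpha}\right]$ and replace $\rho_B$ by $p_x\rho_B(x)$ term by term, using $\rho_B\geqslant p_x\rho_B(x)$ and operator (anti)monotonicity of $t\mapsto t^{1-\alpha}$; this is in effect a self-contained, elementary proof of precisely the monotonicity step the paper cites (discarding $X'$ from the conditioning), after which both arguments rest on the same per-$x$ bound. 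Your version avoids invoking DPI for the Petz divergence; the paper's version avoids all operator-inequality bookkeeping.

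One point needs tightening in the branch $\alpha\in(1,2]$: when $\Supp(\rho_B(x))$ is strictly contained in $\Supp(\rho_B)$, the asserted operator inequality $\rho_B^{1-\alpha}\leqslant p_x^{1-\alpha}\rho_B(x)^{1-\alpha}$ is false as stated, because the right-hand side (defined via the generalized inverse) vanishes on the orthogonal complement of $\Supp(\rho_B(x))$ inside $\Supp(\rho_B)$ while the left-hand side is positive definite there; restricting to $\Supp(\rho_B)$ alone does not cure this. What is true, and what you actually need, is the compressed inequality $\Pi_x\,\rho_B^{1-\alpha}\,\Pi_x\leqslant p_x^{1-\alpha}\rho_B(x)^{1-\alpha}$, where $\Pi_x$ projects onto $\Supp(\rho_B(x))$; this follows, e.g., from $(\rho_B+\lambda)^{-1}\leqslant(p_x\rho_B(x)+\lambda)^{-1}$ for all $\lambda>0$ together with the integral representation of $t^{-s}$ for $s=\alpha-1\in(0,1]$. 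It suffices for your purposes because $\tr_A\left[\rho_{AB}(x)^{\alpha}\right]$ is supported in $\Supp(\rho_B(x))$, so the trace pairing only sees this compression. With that remark added (the case $\alpha\in(0,1)$ is unproblematic, since $t^{1-\alpha}$ is operator monotone on all positive semidefinite operators), your argument is complete.
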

\begin{proof}
    First, let us define the extension $\rho_{ABXX'} = \sum_{x} p_x \rho_{AB}(x) \otimes \proj{xx}_{XX'}$, and observe that by data processing, $H'_{\alpha}(AX|B) \geqslant H'_{\alpha}(AX|BX')$. Thus,
    \[ \rho_{ABXX'}^{\alpha} \rho_{BX'}^{1-\alpha} = \sum_x p_x \rho_{AB}(x)^{\alpha} \rho_B(x)^{1-\alpha} \otimes \proj{xx}_{XX'} \]
    and therefore
    \begin{align*}
        H'_{\alpha}(AX|B) &\geqslant H'_{\alpha}(AX|BX')\\
        &= \frac{1}{1-\alpha} \log \sum_x p_x 2^{(1-\alpha) H'_{\alpha}(A|B,X=x)_{\rho}}\\
        &\geqslant \frac{1}{1-\alpha} \log \sum_x p_x 2^{(\alpha-1) \log d_A}\\
        &\geqslant -\log d_A.
    \end{align*}
\end{proof}

\section*{Acknowledgments}
The authors would like to thank Rotem Arnon-Friedman for bringing to our attention the suboptimal dependence on the testing probability in the original entropy accumulation theorem and Renato Renner for his comments. We would also like the thank the IEEE Transactions on Information Theory reviewers for their detailed feedback on the manuscript. This work is supported by the French ANR project ANR-18-CE47-0011 (ACOM).

\printbibliography

\end{document}